\newcommand*{\semantics}[1]{\textnormal{[\kern-.15em[}#1\textnormal{]\kern-.15em]}}
\newtheorem{thm}{Theorem}
\newtheorem{lemma}{Lemma}
\newtheorem{definition}{Definition}
\newtheorem{prop}{Proposition}
\begin{document}

\title{Temporal Ensemble Logic}
\author{Guo-Qiang Zhang\thanks{Department of Neurology, McGovern Medical School and
Department of Health Data Science and Artificial Intelligence, McWilliams School of Biomedical Informatics;
Texas Institute for Restorative Neurotechnologies;
The University of Texas Health Science Center at Houston; 1133 John Freemen Ave,  Houston, Texas, 77030, USA;
 {\tt guo-qiang.zhang@uth.tmc.edu}}} 
 
\date{\today}

\maketitle

\begin{abstract}
We introduce Temporal Ensemble Logic (TEL), a monadic, first-order modal logic for linear-time temporal reasoning. 
TEL includes primitive temporal constructs such as ``always up to $t$ time later'' ($\Box_t$), ``sometimes before $t$ time in the future'' ($\Diamond_t$), and ``$t$-time later'' $\varphi_t$.
TEL has been motivated from the requirement for rigor and reproducibility for cohort specification and discovery in clinical and population health research, 
to fill a gap in formalizing temporal reasoning in biomedicine. Existing logical frameworks such as linear temporal logic are
too restrictive to express temporal and sequential properties in biomedicine, or too permissive in semantic constructs, such as in Halpern-Shoham logic, to serve this purpose.
In this paper, we first introduce TEL in a general set up, with discrete and dense time as special cases.
We then focus on the theoretical development of discrete TEL on the temporal domain of positive integers $\mathbb{N}^+$, denoted as ${\rm TEL}_{\mathbb{N}^+}$.
${\rm TEL}_{\mathbb{N}^+}$ is strictly more expressive than the standard monadic second order logic, characterized by B\"{u}chi automata.
We present its formal semantics, a proof system, and provide a proof for the undecidability of the satisfiability of ${\rm TEL}_{\mathbb{N}^+}$.
We also include initial results on expressiveness and decidability fragments for ${\rm TEL}_{\mathbb{N}^+}$, followed by application outlook and discussions. 
\end{abstract}

\section{Introduction}
\label{intro}

We introduce Temporal Ensemble Logic (TEL),
 motivated from the requirement for rigor and reproducibility for cohort specification and discovery in clinical and population health research, 
to fill a gap in formalizing temporal reasoning in biomedicine~\cite{temporalai,adlassnig2006temporal}. 
Existing logical frameworks in computer science (or in general) are either too permissive on the semantic structure aspects, or 
too restrictive in the syntactic aspects for modeling temporal and sequential properties in biomedicine.

In TEL we treat clinical codes and ontological terms~\cite{DL} as atomic propositions;
 temporal relationships as temporal modal operators~\cite{allen1983maintaining};
real-world data records~\cite{fda,RWD1, RWD2} as semantic models;
cohort specifications~\cite{rwd} as logical formulas; and cohort discovery~\cite{cohort} as model-checking~\cite{clarke1997model,vardi2005model}, as 
summarized in the following table:

\newlength\lwt \setlength\lwt{\dimexpr .35\textwidth -2\tabcolsep} \newlength\rwt \setlength\rwt{\dimexpr .32\textwidth -2\tabcolsep}
\newlength\rrw \setlength\rrw{\dimexpr .23\textwidth -2\tabcolsep}

\noindent
\begin{center}
\begin{tabular}{| p{\lwt} | p{\rwt} | p{\rrw} | } \hline
 \multicolumn{3}{|c|}{{\bf Table 1.} Correspondence for logic-based cohort discovery}\\ \hline
 \multicolumn{1}{| c | }{\em Biomedical entities} &  \multicolumn{1}{c |}{\em Logical constructs} &  \multicolumn{1}{c |}{\em Logical forms} \\ \hline
Controlled vocabularies, ontology terms & Atomic propositions & ${\sf Prop}$ \\ \hline
Temporal relationships	 &  Modal and logical operators & $\Box, \Diamond, \land,  \exists$ \\ \hline
Real-world Data & Semantic models &  $({\cal E}, s)$ \\ \hline
Cohort specifications &  Logical  formulas & $\varphi$ \\ \hline
Cohort discovery &  Model-checking & $({\cal E}, s)\models \varphi$ \\ \hline
\end{tabular}
\end{center}

The development and contribution of this paper draw on background knowledge around temporal logics~\cite{rescher2012temporal,pnueli1992temporal,blackburn2001modal,stirling1991modal}, especially linear temporal logic (LTL~\cite{pnueli1977temporal}).
Temporal logic is a branch of symbolic logic involving propositions that have truth values dependent on time. 
It extends classical logics with modality constructs for specifying temporal relationships among propositions, such as “before,” “after,” “during,” and “until.” This way, the expressiveness of logic is expanded to include temporal constraints 
that can be used for modeling  temporal properties of a ``system.'' 
Logical formula specifying an abstract constraint on the behavior of dynamical (e.g., cyber-physical) systems can then be verified against semantic models that represent the design of the systems at different levels of abstraction (or, of interest). Such formulas often refer to qualitative or quantitative time, e.g., the relative order of events or specific moments of time when certain events may occur or system dynamics may take specific values. Because of these general attributes, temporal logics have found wide ranging applications in cyber-physical systems: they can be used to formally specify the desired behavior of these systems and to verify that the systems 
satisfy the specified properties.

The backbone of a temporal logic has two types of operators: modal operators and logical operators. The precise meaning of formulas in temporal logics is reflected in their formal semantics, usually interpreted in mathematical structures such as Kripke frames, labeled transition systems, or automata~\cite{blackburn2001modal, stirling1991modal}. Nodes in such structures represent basic temporal states (as a time point or time interval) and temporal changes are captured as relations among the states. Different modal operators in syntax, coupled with different classes of semantic structures, further expand their richness and complexity. 

An important application paradigm of temporal logic is model-checking~\cite{clarke1997model, vardi2005model}. Model-checking is concerned with efficient algorithms verifying that a certain system or its design (represented as a semantic structure M) meets intended properties (represented as a formula $\varphi$), expressed as $M\models \varphi$. There is a rich body of knowledge around the model-checking paradigm, reflecting their modeling power and practical applicability.

In recent work guided by the conceptual correspondence in Table 1, 
we introduced Temporal Cohort Logic (TCL~\cite{zhang2022temporal}) by explicitly representing Allen relations as modal operators in the syntax, rather than modeling them in the semantic structure. 
We also developed experimental cohort discovery systems leveraging TCL to demonstrate feasibility in implementing the temporal operators for a large clinical Electronic Health Record (EHR) data set with temporal query interfaces backed by a query engine. 

Although formulas in TCL are more readable by humans, 
the temporal modalities were not formulated at a level of granularity suitable for further theoretical development.
On the other hand, LTL~\cite{pnueli1977temporal} is too restrictive to express temporal and sequential properties in biomedicine.
This is the key motivation behind TEL (the focus of the current paper), which consists of the standard boolean connectives; bounded time-term
 modalities $\Box_t \varphi$ and $\Diamond_t \varphi$; first-order quantification over time variables $\forall t \varphi$ and $\exists t \varphi$; and perhaps most important, the ability to turn any such formulas into a monadic relation $\varphi_t$. For example, 
 the standard ``next-time'' operator in discrete linear temporal logic is captured by $\varphi_1$, representing $1$-unit of time in the future.
 
Distinct features of TEL from existing temporal logics include:
\begin{enumerate}
\item TEL is a class of monadic first-order linear-time temporal logics with first-order quantifiers over time-length variables, rather than references to absolute time or location, nor to additional ontological apparatus such as process identifiers or traces. Because of this, TEL can be seen as a minimalistic extension of LTL.
\item Time-terms are explicitly and independently defined syntactic entities with addition as the only required operation, in the form of $s+t$. 
This is semantically interpreted over a totally ordered monoid, which contain a variety of linear structures as special cases.
\item Any TEL formula $\varphi$ can be turned into a  meaningful monadic predicate  $\varphi_t$ that can be combined freely with other temporal and first-order constructs of TEL. Thus the syntactic overhead of TEL is minimal. Intuitively, $\varphi_t$ stands for ``$\varphi$ is true at $t$-time in the future.''
\item The combination of time-term addition $s+t$ with logical constructs such as $\Box_t \varphi$, $\Diamond_t \varphi$, $\forall t \varphi$, and $\exists t \varphi$ provides
tremendous expressive power, and yet allows fragments of TEL to be defined with better computational tractability (e.g. the $\exists, \Box$ fragment).
\item Direct comparison of time-length terms, such as $s\leq t$ and associated bounded quantifications, though tempting,  is not included in the syntax of TEL formulas in the treatment provided in this paper.
\item Constructs of TEL are motivated and expected to be well-suited to express phenotypic properties in biomedicine adequately with logical precision,  without too much overhead or modification (see Sections~\ref{application} and~\ref{discussion}). Because of this motivation, the semantic structures defined by TEL formulas tend to be able to capture rich combinatorial properties.
\item The simple and straightforward set up of the syntax and semantics for TEL makes it readily enrichable or modifiable to other settings, such as interval, weighted, and metric logics.
\end{enumerate}

In the rest of the paper we first introduce TEL in a general set up, with discrete and dense time as special cases in Section~\ref{tel}.
In Section~\ref{lan}, we focus on theoretical development of discrete TEL on the temporal domain of $\mathbb{N}^+$, denoted as ${\rm TEL}_{\mathbb{N}^+}$ for the set of positive integers.
Note that ${\rm TEL}_{\mathbb{N}^+}$ is strictly more expressive than the standard monadic second order logic, characterized by B\"{u}chi automata.
We believe that TEL is a new logic not studied before and represents a unique approach to supporting temporal reasoning in biomedicine.
We also describe an equational proof system for ${\rm TEL}_{\mathbb{N}^+}$.
In Section~\ref{undecidability}, we present a detailed proof that the satisfiability of  ${\rm TEL}_{\mathbb{N}^+}$ formulas is undecidable by reducing the Post-Correspondence Problem to ${\rm TEL}_{\mathbb{N}^+}$-satisfiability.
In Sections~\ref{expressiveness}, we present results on expressiveness of ${\rm TEL}_{\mathbb{N}^+}$, followed by illustrative applications in Section~\ref{application} that have motivated TEL. Discussions and conclusions are provided at the end of the paper.

\section{Temporal Ensemble Logic (TEL)}
\label{tel}

\subsection{Syntax}

The basic construct of TEL includes two types of terms:  {\em time-length terms} (or simply, time terms), and {\em logical formulas.}
Time terms $s, t,u,v,\ldots$ consist of time constants $a,b,c, \ldots$, time variables $x,y,z, \ldots$ (from a set {\sf Var}), and the addition of these terms $s+t$. 
In Backus–Naur form (BNF), we define the following syntax for time terms:
$$s, t :: =  a \mid x \mid s + t,$$
 where $a$ comes from a  set of time constants, $x$ comes from a countable set of time variables, and
$s+t$ represent the addition of two time valued terms. 
For conciseness, we write $nt$ to represent the addition ($+$) of $n$ copies of $t$: $nt : = t+t+ \cdots t$.
We write ${\sf Term}$ for the set of all terms.

Logical formulas $\varphi$ consist of atomic propositions $p$, time-length indexed formulas $\varphi_t$, Boolean connectors ($\land$, $\lor$, $\neg$), 
time-length indexed modalities $\Box_t \varphi $ and $\Diamond_t \varphi $, and existential quantification over time-lengths: $\exists x  \varphi$, $\forall x  \varphi$.
We define, with $t$ for time term and $x$ for time-length variable:

$$\varphi , \psi :: = p \mid \varphi_t \mid  \neg \varphi \mid \varphi \land \psi \mid\varphi \lor \psi  \mid \Box_t \varphi  \mid \Diamond_t \varphi  \mid \exists x \varphi \mid \forall x \varphi, $$ 
where atomic propositions $p$ come from a pre-defined set ${\sf Prop}$.
We allow the ``nesting'' of time-termed subscripts $\varphi_t$ for an arbitrary TEL formula  $\varphi$, turning
any such formula $\varphi$ into a unary predicate (over time-lengths, or simply, time).

We call TEL a  timed, monadic, first-order linear-time, temporal logic.
{\em From now on, we use the term ``time'' with ``time-length'' interchangeably.} It will be clear from our semantic definitions that 
we refer to time as time-lengths, and it is not the intention of TEL to refer categorically to absolute time values.
TEL  is a (class of) time-indexed modal logic because the intuition for  $\Box_t \varphi$ and  $\Diamond_t \varphi$ is to capture the properties that
$\varphi$ is always true until $t$ time later, and $\varphi$ is sometimes true within $t$ time in the future. 
For convenience, we abbreviate $\Box_{\infty}$ as $\Box$, and
$\Diamond_{\infty}$ as $\Diamond$, where $\Box_{\infty}\varphi $ stands for 
$\varphi\land \forall x \varphi_x$ and similarly, $\Box_{\infty}\varphi $ for $\varphi\lor \exists x\varphi_x$, assuming that $x$ does not occur free in $\varphi$.
Alternatively, one could explicitly introduce $\infty$ as an explicit time term, with its usually intended meaning. 

Note that TEL is not a conventional monadic, first-order modal logic in that the monadic predicates are 
not supplied as a predetermined list of unary predicates, but rather any formula $\varphi$ and any modalities $\Box, \Diamond$ 
can become  monadic predicates in the form of: $\varphi_x$, $\Box_x\varphi$, $\Diamond_x\varphi$, where $x$ is a time-length variable (which can occur freely in $\varphi$ as well).

For exposure, we have not been economical in using a minimal set of logical constructs, which could be convenient when it comes to the number of 
cases to be considered in proofs. For example, we can avoid using $\lor$ without loosing expressive power.
Similarly, $\Box_t$ can be expressed as $\neg \Diamond_t \neg$, and $\forall $ can be expressed as $\neg\exists\neg$,
and $\varphi \rightarrow \psi$ as $\neg \varphi \lor \psi$.

{\bf Remarks}. Some comments are in order to clarify the TEL syntax:
(1) by convention, we can use $a\lor \neg a$ for $\top$ (true), and $a\land \neg a$ for $\bot$ (false);
(2) nesting of time-indexed formula is allowed, such as $(\varphi_u)_v$, but as can be seen immediately after we have introduced the semantic interpretation, we have the entailment $(\varphi_u)_v \models \varphi_{u+v}$. This implies that nesting of indices does not necessarily increase expressive power;
(3) the distinction between the semantics of formulas $\Diamond \varphi$ and $\exists x \varphi$  is subtle but important.
While $\Diamond \varphi$ captures the situation that $\varphi_a$ is true at some time-length $a$ from now on,  $\exists x \varphi $ involves the instantiation (bounding) of
all free variables $x$  in the body of $\varphi$ by some common time-lenght value $b$, and the resulting formula $ \varphi$ 
evaluates to true with the environment where the variable $x$ is assigned value $b$ in the time-length domain.
On the other hand, to evaluate $\Diamond \varphi$, all free (time) variables in $\varphi$  must already have their pre-assigned values in an evaluation context (a.k.a ``environment''), while the free variables in the body of $\exists x \varphi $ do not have pre-assigned values in an evaluation context; their values are to be determined.

\subsection{Semantics}

To capture the formal semantics of TEL in the general setting, 
we describe the mathematical structure in which TEL formulas can be interpreted or evaluated.
Our time structure is $\mathbb{M}^+$, where $\mathbb{M}$ is
a non-negative commutative monoid $\mathbb{M}$ that contains $0, +$.
Such structures include positive integers $\mathbb{N}^+$, positive rational numbers $\mathbb{Q}^+$, 
positive real numbers $\mathbb{R}^+$, and positive $p$-adic numbers.
We also require that such commutative monoid $\mathbb{M}$ comes with a total order $<$ compatible with $+$, with the property that
 $a < b $ if and only if $ a + c = b$ for some $c> 0$. The order $<$ is further assumed to be non-degenerative, in the sense that  for any $x\not = 0$, $a<a+x$ for all $a\in \mathbb{M}^+$.
 
 The reason for considering positive numbers is for technical elegance for the most part.
 In particular, for formulas of the form $\exists x (\Box_x\varphi)$ and $\forall x (\Diamond_x\varphi)$ to be non-degenerative, 
 we need to assume that $x\not = 0$. Otherwise, we would have $\exists x (\Box_x\varphi)$ to be vacuously  true because $\Box_0\varphi$ is true, and 
 $\forall x (\Diamond_x\varphi)$ to be vacuously false because $\Diamond_0\varphi$ is false, irrespective of what $\varphi$ is.
 Incorporating the requirement that for positive time-length terms in the semantic setup would avoid such degenerative cases and allow TEL to express genuinely 
 interesting properties using the $\exists x (\Box_x\varphi)$ and $\forall x (\Diamond_x\varphi)$ combinations.

An evaluation {\em environment} ${\cal E}$ 
consists of a triple  $(\alpha, \beta, \eta)$, where $\alpha: \mathbb{M}^+\to 2^{{\sf Prop}}$ is called an {\em interpretation},
 $\beta :  {\sf Term} \to \mathbb{M}^+$ a {\em term assignment}, and $\eta: {\sf Var} \to \mathbb{M}^+$ a 
{\em variable assignment}. As usual,  $2^{{\sf Prop}}$ represents the power set of  atomic propositions {\sf Prop}.
We require that the term assignment $\beta$ preserve
the operational meanings of time terms, in the sense that
$$\beta (s+t) = \beta (s)+ \beta (t)$$
for all terms $s$ and $t$.

  \begin{definition}\label{telsemantics}
The meaning of TEL formulas can be specified as follows with respect to an environment ${\cal E} := (\alpha, \beta, \eta)$,
 and  time point $s\in \mathbb{M}^+$:
\[
\begin{array}{l}
({\cal E}, s)  \models p ~\mbox{\em if }~ p\in \alpha({s}) ;\\ 
({\cal E}, s)  \models \neg \varphi  ~\mbox{\em if }~ ({\cal E}, s)  \not \models \varphi ;\\
({\cal E}, s)  \models  \varphi \land \psi  ~\mbox{\em if }~ ({\cal E}, s)   \models  \varphi  ~\mbox{\em and }~({\cal E}, s)  \models  \psi ; \\
({\cal E}, s)  \models  \varphi \lor \psi  ~\mbox{\em if }~ ({\cal E}, s)   \models  \varphi  ~\mbox{\em or }~({\cal E}, s)  \models  \psi ; \\
 ({\cal E}, s)  \models \varphi_t   ~\mbox{\em if }~ ({\cal E}, (s+\beta(t)))  \models \varphi; \\ 
 ({\cal E}, s)  \models  \Diamond_t \varphi   ~\mbox{\em if there exists $u$ with $s \leq u < s+ \beta (t)$},~ 
 ({\cal E}, u)  \models \varphi;\\
 ({\cal E}, s)  \models  \Box_t \varphi   ~\mbox{\em if for all $u$ such that $s \leq u < s+ \beta (t)$}, ~({\cal E}, u)  \models \varphi; ~\mbox{\em and}\\
({\cal E}, s)  \models \exists x  \varphi  ~\mbox{\em if there exists $a\in \mathbb{M}^+$},~({\cal E}', s)  \models \varphi;\\
({\cal E}, s)  \models \forall x  \varphi  ~\mbox{\em if for all $a\in \mathbb{M}^+$},~({\cal E}', s)  \models \varphi,\\
\end{array}\]
where ${\cal E}'$ is obtained by modifying $\eta$ in environment ${\cal E}$ as
 $\eta [x:=a]$, which is the variable assignment modified from $\eta$ by assigning $x$ to $a$ but with everything else remains unchanged.

  \end{definition}

Intuitively,  $\alpha$  defines, for each time point $a\in \mathbb{M}^+$, 
a subset (empty set allowed) $P_a$ of ${\sf Prop}$ where all the propositions in $P_a$ holds at time $a$.
Such set $P_a$ can represent a patient's medical history or other study item's recorded properties over time, often documented in EHR.

{\bf Remarks}. (1) To be fully rigorous, we explicitly introduced $\alpha, \beta,$ and $\eta$ as the evaluation  context.
However, as can be seen, not each of these functions plays a role for each clause in the above definition.
 To avoid notational clutter, we sometimes avoid mentioning all such functions explicitly, but assume such functions to be in the background without losing generality or precision of the development.
  (2) The notation  of simultaneous substitution $x:=t$ of all occurrences of free variable $x$ by time-length term $t$ in the body of $\varphi$,  denoted as $\varphi [x:= t]$, is a standard instantiation mechanism in logic. We sometimes will misuse
  notation by writing the clause $((\alpha, \beta, \eta [x:=a]), s)  \models \varphi$ as
  $s  \models \varphi [x:=a]$, with the understanding that it is the variable assignment 
  $\eta$ that fixes $x$ to have value $a$. 
  (3) We sometimes do not explicitly differentiate the syntactic and semantic time-length terms because there is an obvious homomorphism between the syntactic terms and the semantic values in the time domain $\mathbb{M}^+$ by  our deliberately suggestive notations. A particular case in point is if we write 
  ``$({\cal E}, s)  \models \varphi_t   ~\mbox{\em if }~ ({\cal E}, s+t)  \models \varphi$,'' we understand the first occurrence of $t$ belongs to syntax, and the second occurrence of $t$ is the meaning of $t$ fixed by the implicit environmental function 
  $\beta$, making the operation $s+\beta (t)$ meaningful in the monoid $\mathbb{M}^+$.

  In the dense time setting, the semantics of modalities  $ \Box_t$ and  $\Diamond_t $ can be defined in four different 
 ways: {\em open-open, close-close, open-close, close-open}, representing the possibility of whether to enforce the scope of modality at each of the end points. Definition 1 adopts the closed-open option, and the closed-closed setting can be expressed with additional 
 assertions for end points.
  For example, the reflexive version (including current) can be defined as composite formulas, with
   $ \varphi_t \land \Box_{t} \varphi$ for closure on the right end, and
  similarly,   $ \varphi_t \lor \Diamond_{t} \varphi$ for possibilities at the right end.

  \begin{definition}\label{telentailment}
  A TEL formula $\varphi $ is said to semantically entail another formula $\psi$,  $ \varphi  \models \psi$ in notation,
 if for all environment ${\cal E}$ and for all $s\in \mathbb{M}^+$,  
  we have $({\cal E}, s)  \models \psi $ if $({\cal E},  s)  \models \varphi $.
  Two TEL formulas $\varphi , \psi$ are said to be semantically equivalent, $\models \varphi  \equiv \psi$ in notation,
if both $ \varphi  \models \psi$ and $ \psi  \models \varphi$ hold.
  When it is clear from the context without ambiguity, we abbreviate $\models \varphi  \equiv \psi$ simply as $\varphi  \equiv \psi.$
  \end{definition} 
  
  As an alternative notation, we write $\semantics{\varphi}$ for the set $\{({\cal E}, s) \mid  ({\cal E}, s) \models \varphi \}$. 
  Then the above definition can be written as 
  $ \varphi  \models \psi$ if and only if $\semantics{\varphi}\subseteq\semantics{\psi}$, and
  $\varphi  \equiv \psi$ if and only if $\semantics{\varphi} = \semantics{\psi}$.

 We have the following collection of semantic equivalences
  that can serve as the starting point for the axiomatization of a proof system.
  
 \begin{thm} 
 \label{tel-entailment}  
 For TEL, according to Definition~\ref{telsemantics} and Definition~\ref{telentailment}, we have,
  for any ~$u,v,w \in {\sf Term}$:
  \[
\begin{array}{ll}
\mbox{\sf Term-$\equiv$}~~  & \neg (\varphi_u) \equiv (\neg \varphi)_u,~ \varphi_u \equiv \varphi_v~\mbox{\rm if } \beta(u) =\beta(v),\\
& (\varphi_u)_v \equiv \varphi_{w}~\mbox{\rm where $\beta(u+v) = \beta(w)$};\\
\mbox{\sf Term-$\land$-$\lor$}~~  &  (\varphi \land \psi)_{u} \equiv \varphi_{u} \land \psi_{u},~ (\varphi \lor \psi)_{u} \equiv \varphi_{u} \lor \psi_{u}; \\
\mbox{\sf Term-$\Box$-$\Diamond$}~~  & 
\Box_u (\varphi_{v}) \equiv (\Box_u \varphi)_{v}, 
~\Diamond_u (\varphi_{v}) \equiv (\Diamond_u \varphi)_{v};\\
\mbox{\rm $\Diamond$-$\lor$}~~ &
\Diamond_u\varphi \equiv \varphi \lor \Diamond_u \varphi,~ \Diamond_u (\varphi \lor \psi) \equiv ( \Diamond_u \varphi) \lor (\Diamond_u \psi ); \\
\mbox{\rm $\Box$-$\land$}~~ &  
\Box_u\varphi \equiv \varphi \land \Box_u \varphi,~
\Box_u (\varphi \land \psi ) \equiv (\Box_u \varphi) \land (\Box_u \psi); \\
\mbox{\sf Monotonicity}~~ &   \varphi \models \Diamond_u \varphi,~ \Box_u \varphi \models \varphi;  \\ 
&\mbox{\rm If }~ \beta(u) < \beta(v) ~  \mbox{\rm then}~\varphi_u \models \Diamond_{v}\varphi, ~\mbox{\rm and }~ \Box_{v} \varphi \models \varphi_u;\\
&\mbox{\rm If }~ \beta(u) \leq \beta(v) ~  \mbox{\rm then}~\Diamond_u \varphi \models \Diamond_{v}\varphi, ~ \Box_{v} \varphi \models \Box_{u}\varphi;  \\
& \mbox{\rm If }~ \varphi\models\psi ~  \mbox{\rm then}\\
& \varphi_u\models \psi_u, ~\Diamond_u\varphi \models \Diamond_u\psi, ~\mbox{\rm and }~ \Box_u\varphi \models \Box_u\psi; \\
\mbox{\sf Dense-$+$}~~  & \mbox{\rm For dense $\mathbb{M}$, we have these additional equivalences:}\\
& \Diamond_u (\Diamond_v \varphi )  \equiv \Diamond_v (\Diamond_u \varphi ),
 ~\Box_u (\Box_v \varphi )    \equiv  \Box_v (\Box_u \varphi )  ;\\
&  \Diamond_u (\Diamond_v \varphi )  \equiv \Diamond_{w}  \varphi, ~ \mbox{\rm where
 $\beta(u+v) = \beta(w)$};\\
 & \Box_u (\Box_v \varphi )    \equiv \Box_{w}  \varphi, ~ \mbox{\rm where
 $\beta(u+v) = \beta(w)$};\\
   & \mbox{\rm For any free variables $x,y$ and terms $u, v, w$, we have:}\\
 \mbox{\sf $\Box$}~~ & 
 \Box_y \varphi (x) \models  \forall x \varphi , ~
 \forall x \varphi  \equiv \Box_{u+w} \varphi (x:=v) \land \forall y (\varphi (x:=v) )_{y+u}, \\
 \mbox{\sf $\Diamond$}~~& 
 \Diamond_u \varphi (x:=v) \models  \exists x \varphi ,
  ~ \exists x \varphi \equiv \Diamond_{u+v} \varphi  \lor (\exists x \varphi)_u.\\
 
\end{array}\]
\end{thm}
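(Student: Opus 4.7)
The plan is to verify each listed equivalence or entailment by direct unfolding of Definition~\ref{telsemantics}, exploiting the homomorphism property $\beta(s+t) = \beta(s) + \beta(t)$ together with commutativity, associativity, and the compatibility of $<$ with $+$ in the monoid $\mathbb{M}$. For each equivalence $\varphi \equiv \psi$ I will reduce both sides to a common condition on $({\cal E}, s)$, thereby showing $\semantics{\varphi} = \semantics{\psi}$; for each entailment $\varphi \models \psi$ I will derive the inclusion $\semantics{\varphi} \subseteq \semantics{\psi}$ analogously.

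The routine blocks come first. The Term-$\equiv$ rules reduce to the monoid identities $(s+\beta(v))+\beta(u) = s + \beta(u+v)$ and $s+\beta(u) = s+\beta(v)$ when $\beta(u) = \beta(v)$. The Term-$\land$-$\lor$ and Term-$\Box$-$\Diamond$ equivalences follow because the translation $u' \mapsto u' + \beta(v)$ is an order-isomorphism between the evaluation intervals $[s, s+\beta(u))$ and $[s+\beta(v), s+\beta(v)+\beta(u))$, so the ``for all'' and ``there exists'' quantifications on both sides align. The $\Diamond$-$\lor$ and $\Box$-$\land$ cases exploit that $s$ itself always lies in $[s, s+\beta(u))$, making $\varphi$ at $s$ a canonical witness for $\Diamond_u\varphi$ at $s$ and a canonical conjunct extracted from $\Box_u \varphi$ at $s$. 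Monotonicity follows by containment of intervals induced by $\beta(u) \leq \beta(v)$, using non-degeneracy to convert $<$ into a strict containment when needed.

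The Dense-$+$ block is where the density hypothesis on $\mathbb{M}$ is essentially used. The forward direction of $\Diamond_u(\Diamond_v\varphi) \models \Diamond_{u+v}\varphi$ holds in any totally ordered monoid, since a nested witness gives $v'-s \in [0, \beta(u+v))$. The reverse direction requires splitting an arbitrary $w\in[0,\beta(u+v))$ as $w = w_1 + w_2$ with $w_1 < \beta(u)$ and $w_2 < \beta(v)$, and it is precisely density that guarantees such a split always exists. The dual law for $\Box$ either follows by duality through $\Box = \neg\Diamond\neg$ (using the $\neg$-pushing in Term-$\equiv$) or by applying the same splitting argument to a universal witness. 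Commutativity of nested $\Diamond$ and of nested $\Box$ then falls out from the collapse to $\Diamond_{u+v}\varphi$ (resp.\ $\Box_{u+v}\varphi$) and the commutativity of $+$.

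The main obstacle I expect is the final $\forall$/$\exists$ block. These statements interleave the capture-avoiding substitution $\varphi(x:=v)$ in the body with quantification over time-length variables, and they carry three auxiliary parameters $u, v, w$. The right-hand sides decompose $\forall x\, \varphi$ (resp.\ $\exists x\, \varphi$) as a bounded $\Box$ piece plus a residual $\forall$ tail (resp.\ a bounded $\Diamond$ piece plus a residual $\exists$ tail). I would prove them by case-splitting on the value $a \in \mathbb{M}^+$ assigned to $x$ by the environment update $\eta[x:=a]$: values $a < \beta(u+w)$ are absorbed by the $\Box_{u+w}$ factor via the preliminary entailment $\Box_y \varphi(x) \models \forall x\, \varphi$, while the remainder is expressed as $a = \beta(y) + \beta(u)$ for some $y$ and absorbed by the shifted tail $\forall y\, (\varphi(x:=v))_{y+u}$, using the $+$-compatibility of $<$. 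The delicate part will be aligning the environment-update semantics of $\eta[x:=a]$ with the syntactic substitution $\varphi(x:=v)$ in a capture-avoiding way, especially since $u, v, w$ may themselves contain free variables that must not interfere with the binding of $x$ or $y$.
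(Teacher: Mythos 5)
Your treatment of the routine blocks (Term-$\equiv$, Term-$\land$-$\lor$, Term-$\Box$-$\Diamond$, $\Diamond$-$\lor$, $\Box$-$\land$, Monotonicity) by direct semantic unfolding is exactly what the paper does, and your Dense-$+$ argument is the paper's as well: the forward collapse $\Diamond_u(\Diamond_v\varphi)\models\Diamond_w\varphi$ needs nothing beyond the ordered monoid, while the converse hinges on splitting a displacement $w<\beta(u)+\beta(v)$ as $w_1+w_2$ with $w_1<\beta(u)$ and $w_2<\beta(v)$, which the paper realizes by a density interpolation (choosing $\epsilon=(c-\beta(u))+\delta$ with $\delta$ small enough). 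One small divergence: the paper proves commutativity of nested $\Diamond$ and nested $\Box$ directly, assuming only a compatible subtraction operation (so it holds beyond the dense case), whereas you obtain it as a corollary of the collapse law, which costs you the density hypothesis; under the theorem's stated ``dense $\mathbb{M}$'' assumption this is immaterial, but the paper's route is slightly more general.

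The genuine gap is in the final $\Box$/$\Diamond$ ($\forall$/$\exists$) block, and it is worth being precise about it because the paper itself offers only an intuitive gloss there (``a form of unwinding $\forall x$ as a maximal fixed point'') with no actual verification. Your case-split on the value $a$ assigned to $x$ --- small $a$ absorbed by $\Box_{u+w}\varphi(x:=v)$, large $a$ by the shifted tail --- silently identifies the value of $x$ with a temporal displacement of the evaluation point. But $\forall x\,\varphi$ quantifies over assignments to $x$ at the fixed point $s$, whereas $\Box_{u+w}\varphi(x:=v)$ quantifies over evaluation points with $x$ frozen at $v$; these coincide only when $\varphi$ depends on $x$ through a shift, i.e., when $\varphi$ is essentially of the form $\psi_x$. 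Taking $\varphi=p$ with no free occurrence of $x$ shows the stated equivalence cannot hold for arbitrary $\varphi$: the left side asserts $p$ at $s$ only, while the right side forces $p$ on an unbounded set of future points. So your plan, as written, would not go through for general $\varphi$; to make it work you would have to restrict the shape of $\varphi$ (or reformulate the clause), which is a defect of the theorem statement that your attempt inherits rather than resolves. Your instinct to worry about capture-avoiding substitution is reasonable but secondary to this semantic mismatch.
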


We break down the proofs of non-trivial equivalences as a collection of propositions,
due to the introduction of new modalities and
interactions among the constructs unique to TEL.
To reduce notational clutter, we assume that an environment $(\alpha, \beta, \eta)$ is fixed, 
as the steps involved in the proof do not involve any changes in the environment.
This allows the abbreviation of ``$({\cal E}, s) \models \varphi$'' simply as ``$s\models \varphi.$''

\begin{prop}
$\mbox{\rm If }~ \beta(u) < \beta(v) ~  \mbox{\rm then}~\varphi_u \models \Diamond_{v}\varphi, ~\mbox{\rm and }~ \Box_{v} \varphi \models \varphi_u.$
\end{prop}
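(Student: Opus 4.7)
The plan is to prove both entailments essentially by witnessing: the point $s+\beta(u)$ plays the role of the existential witness in the $\Diamond_v$ case, and the role of the universal instance in the $\Box_v$ case. The whole content is to verify that, under the hypothesis $\beta(u)<\beta(v)$, this distinguished point lies in the half-open interval $[s,\,s+\beta(v))$ that governs both modalities in Definition~\ref{telsemantics}.

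First, I would establish the following small interval lemma once, for any $s\in\mathbb{M}^+$: if $\beta(u)<\beta(v)$, then $s\le s+\beta(u)<s+\beta(v)$. The left inequality uses the non-degenerativity assumption: since $\beta(u)\in\mathbb{M}^+$ is positive, $s<s+\beta(u)$, hence $s\le s+\beta(u)$. The right inequality uses compatibility of $<$ with $+$: the hypothesis $\beta(u)<\beta(v)$ gives some $c>0$ with $\beta(u)+c=\beta(v)$, and then $s+\beta(u)+c=s+\beta(v)$ witnesses $s+\beta(u)<s+\beta(v)$. This is the only real computation in the argument.

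With this interval fact in hand, fix an environment ${\cal E}=(\alpha,\beta,\eta)$ and a time point $s$. For $\varphi_u\models\Diamond_v\varphi$, assume $({\cal E},s)\models\varphi_u$. By the clause for $\varphi_u$ we have $({\cal E},s+\beta(u))\models\varphi$. Setting the witness to be $w:=s+\beta(u)$, the interval lemma places $w$ in $[s,\,s+\beta(v))$, so the existential in the semantics of $\Diamond_v\varphi$ is satisfied and $({\cal E},s)\models\Diamond_v\varphi$. For $\Box_v\varphi\models\varphi_u$, assume $({\cal E},s)\models\Box_v\varphi$. Instantiate the universal quantifier in the $\Box_v$-clause at the point $w:=s+\beta(u)$; the interval lemma again shows $s\le w<s+\beta(v)$, so the instantiation is legal and yields $({\cal E},s+\beta(u))\models\varphi$, which by the $\varphi_u$-clause is exactly $({\cal E},s)\models\varphi_u$.

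There is no serious obstacle; the only care needed is to cite the two structural hypotheses on $(\mathbb{M}^+,+,<)$ correctly — positivity of $\beta(u)$ (via non-degenerativity) for $s\le s+\beta(u)$, and the ``$a<b$ iff $a+c=b$ for some $c>0$'' characterization together with commutativity of $+$ for $s+\beta(u)<s+\beta(v)$. Both proofs share the same witness and the same interval computation, so they can be presented together with a single invocation of the interval lemma.
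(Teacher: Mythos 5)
Your proposal is correct and follows essentially the same argument as the paper: both take $s+\beta(u)$ as the witness (respectively, the instantiation point) and reduce everything to checking $s\le s+\beta(u)<s+\beta(v)$. You are merely more explicit than the paper in justifying that interval containment from the stated axioms on $(\mathbb{M}^+,+,<)$, which the paper leaves implicit.
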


\begin{proof} Note that strict inequality $<$ is needed because of the right-openness 
of the time-line for $\Diamond_{v}\varphi$ and $\Box_{v}\varphi$. Suppose $s\models \varphi_u$.
Then $s+\beta(u) \models \varphi$. 
Let $c = s+\beta(u)$. We have $s\leq c <  s+\beta(v)$ because $\beta(u) < \beta(v)$. 
Therefore $ s\models \varphi_v$, with $c$ as a witness. 
The proof for $\Box_{v} \varphi \models \varphi_u$ is also straightforward.
\end{proof}

\begin{prop}
Suppose  $\mathbb{M}$ is endowed with a compatible subtraction operation ($-$ in notation), in the sense that for all 
$x< y$, there exists  $z$ such that $x + z = y$ (so $z$ can be expressed as $y-x$).
 Then 
$$\Diamond_u (\Diamond_v \varphi )  \equiv \Diamond_v (\Diamond_u \varphi )~\mbox{\rm and}~
 \Box_u (\Box_v \varphi )    \equiv  \Box_v (\Box_u \varphi ).$$
\end{prop}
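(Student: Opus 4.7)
The plan is to show that each side of the $\Diamond$ equivalence reduces, at any fixed evaluation point $s$, to the symmetric condition ``there exists $b\in[s,\,s+\beta(u)+\beta(v))$ with $({\cal E},b)\models\varphi$''. Because this condition is manifestly invariant under swapping $u$ and $v$, the commutativity of $\Diamond$ follows. The $\Box$ case is then handled either by a parallel argument with universal quantifiers in place of existential ones, or more economically by invoking the definability $\Box_t\psi\equiv\neg\Diamond_t\neg\psi$ (immediate from Definition~\ref{telsemantics}).

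First I would unfold $({\cal E},s)\models\Diamond_u(\Diamond_v\varphi)$ into a pair of witnesses $a,b$ satisfying $s\leq a<s+\beta(u)$, $a\leq b<a+\beta(v)$, and $({\cal E},b)\models\varphi$; chaining inequalities gives $s\leq b<s+\beta(u)+\beta(v)$, the easy direction. The substantive step is the converse: given $b$ in that range with $({\cal E},b)\models\varphi$, I must exhibit a splitting point $a'$ with $s\leq a'<s+\beta(v)$, $a'\leq b$, and $b<a'+\beta(u)$, where the last constraint, via the compatible subtraction, reads $a'>b-\beta(u)$ (vacuous when $b<\beta(u)$). I split into two cases. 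If $b<s+\beta(u)$, take $a':=s$; the remaining constraints reduce to the positivity of $\beta(u),\beta(v)$ and the already-established $s\leq b$. Otherwise $b\geq s+\beta(u)$, so $b-\beta(u)$ lies in $[s,b)$, and I pick $a'$ strictly above $b-\beta(u)$ with $a'\leq b$ and $a'<s+\beta(v)$. Non-emptiness of this target interval follows from $b-\beta(u)<b$, from $s<s+\beta(v)$, and from $b-\beta(u)<s+\beta(v)$ (the range bound $b<s+\beta(u)+\beta(v)$ rewritten), and an explicit element is supplied either by density (in $\mathbb{R}^+$ or $\mathbb{Q}^+$) or by the minimal positive increment of the monoid (in $\mathbb{N}^+$).

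Swapping $u$ and $v$ throughout yields the opposite inclusion, closing the $\Diamond$ case, after which the $\Box$ case follows quickly by the duality mentioned above. The main obstacle I anticipate is the endpoint bookkeeping: because the semantic clauses for $\Diamond_t$ and $\Box_t$ are closed on the left at $s$ and open on the right at $s+\beta(t)$, the splitting interval for $a'$ is itself half-open, forcing a careful distinction between strict and non-strict inequalities at each endpoint. Subtraction is used precisely to re-express the open lower bound $a'+\beta(u)>b$ as $a'>b-\beta(u)$, putting it in a form amenable to direct witness construction.
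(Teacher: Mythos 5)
Your plan hinges on the claim that $({\cal E},s)\models\Diamond_u(\Diamond_v\varphi)$ is equivalent to ``there exists $b$ with $s\leq b<s+\beta(u)+\beta(v)$ and $({\cal E},b)\models\varphi$.'' That equivalence is exactly the content of the paper's \emph{next} proposition, which is proved under an additional \emph{density} hypothesis on $\mathbb{M}$; the present proposition assumes only compatible subtraction and must also cover discrete monoids such as $\mathbb{N}^+$. There the claim fails: take $u=v=1$ in $\mathbb{N}^+$. Then $\Diamond_1(\Diamond_1\varphi)\equiv\varphi$, whereas the interval condition on $[s,s+2)$ is $\varphi\lor\varphi_1$. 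Concretely, your converse construction breaks at the point you anticipate as mere ``endpoint bookkeeping'': for $b=s+1$ you need $a'$ with $a'>b-\beta(u)=s$ and $a'<s+\beta(v)=s+1$, and in $\mathbb{N}^+$ that interval is empty. The appeal to ``the minimal positive increment of the monoid'' does not rescue this, since the increment lands you on $s+1$, violating the strict upper bound. So the proof as written establishes the result only for dense $\mathbb{M}$, which is strictly weaker than the stated proposition.

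The repair is to transport the witness \emph{pair} directly rather than passing through a symmetric interval condition. This is what the paper does (for the $\Box$ case, dually): given witnesses $a,b$ for the $(v,u)$-order, i.e.\ $s\leq a<s+\beta(v)$ and $a\leq b<a+\beta(u)$ with $b\models\varphi$, set $x:=b-(a-s)$ and keep $y:=b$; one checks $s\leq x<s+\beta(u)$ and $x\leq b<x+\beta(v)$ using only the subtraction axiom and the compatibility of $<$ with $+$. This shift argument never needs to interpolate a new point between existing ones, so it works uniformly in the discrete and dense cases. Your easy direction and the duality $\Box_t\psi\equiv\neg\Diamond_t\neg\psi$ for deducing the $\Box$ statement are both fine and match the paper's setup; only the middle step needs to be replaced.
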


\begin{proof} We prove $\Box_u (\Box_v \varphi )    \equiv  \Box_v (\Box_u \varphi ),$ as an example.
We show that (to avoid clutter we omitted $\beta$ when referring to $\beta(u)$ and $\beta(v)$):
$$\forall x \forall y
[ (s\leq x < s+u~\&~ x\leq y < x+v) \Rightarrow (y\models \varphi) ]$$
implies  
$$\forall a \forall b
[ (s\leq a < s+v~\&~ a\leq b < a+u) \Rightarrow (b\models \varphi) ].$$
Suppose for some $a$ with $s\leq a < s+v$ we have $a\leq b < a+u$.
Let $x=b-(a-s)$ and $y=b$. Note that since $b-(a-s) = (b-a)+s$ and $b\geq a$, $\beta(x) >0$.
We have $a\leq b < a+ u$, so
$a- (a-s) \leq b-(a-s) < a+u-(a-s)$.
Therefore,  
$s \leq x <  s+u $.
Now for $b$, we have $x\leq b$ and $b < b+(s+v)-a$.
Hence $x\leq b < v+ b-(a-s) = x+v$.
By assumption, we have $b\models \varphi$.
\end{proof}

\begin{prop} Suppose  $\mathbb{M}$ is dense, endowed with a compatible subtraction operation, and $\beta(u+v) = \beta(w)$. Then
$$ \Diamond_u (\Diamond_v \varphi )  \equiv \Diamond_{w}  \varphi, ~ \mbox{\rm and}
~ \Box_u (\Box_v \varphi )    \equiv \Box_{w}  \varphi.$$

\end{prop}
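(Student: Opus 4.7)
The plan is to unfold Definition~\ref{telsemantics} on both sides and then dispatch each of the four implications separately. Writing $\alpha := \beta(u)$ and $\gamma := \beta(v)$, so that $\beta(w) = \alpha+\gamma$, the statement $s \models \Diamond_u(\Diamond_v\varphi)$ expands to: there exist $r, r'$ with $s \leq r < s+\alpha$, $r \leq r' < r+\gamma$, and $r' \models \varphi$. Meanwhile $s \models \Diamond_w\varphi$ asks for a single $r'$ with $s \leq r' < s+\alpha+\gamma$ and $r' \models \varphi$, and the $\Box$ cases are the universal duals. Two of the four implications --- namely $\Diamond_u(\Diamond_v\varphi) \models \Diamond_w\varphi$ and $\Box_w\varphi \models \Box_u(\Box_v\varphi)$ --- are then immediate from the one-line observation that $s \leq r < s+\alpha$ together with $r \leq r' < r+\gamma$ implies $s \leq r' < s+\alpha+\gamma$.

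Both remaining implications rest on a single \emph{packing lemma}: for every $r'$ with $s \leq r' < s+\alpha+\gamma$ there exists $r$ with $s \leq r < s+\alpha$ and $r \leq r' < r+\gamma$. Granting this, $\Diamond_w\varphi \models \Diamond_u(\Diamond_v\varphi)$ follows by feeding the $r'$ produced by $\Diamond_w\varphi$ into the lemma and using the returned $r$ as the outer witness; and $\Box_u(\Box_v\varphi) \models \Box_w\varphi$ follows by taking an arbitrary $r' \in [s,\,s+\beta(w))$, producing a covering $r$ via the lemma, and applying the nested universal to conclude $r' \models \varphi$.

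To prove the packing lemma I would split on whether $r' < s+\alpha$. If so, take $r := r'$; then $r \in [s, s+\alpha)$, $r \leq r'$, and $r' < r+\gamma$ because $\gamma > 0$ by positivity of $\mathbb{M}^+$. If instead $r' \geq s+\alpha$, apply the compatible subtraction to $r' < s+\alpha+\gamma$ to obtain $\epsilon > 0$ with $r'+\epsilon = s+\alpha+\gamma$, then invoke density of $\mathbb{M}$ to pick $\delta$ with $0 < \delta < \min(\alpha,\epsilon)$, and set $r := (s+\alpha) - \delta$, which is well-defined via subtraction since $\delta < \alpha$. The four required inequalities then fall out directly: $r \geq s$ from $\delta < \alpha$, $r < s+\alpha$ from $\delta > 0$, $r \leq s+\alpha \leq r'$ by hypothesis, and $r+\gamma = r'+\epsilon-\delta > r'$ since $\delta < \epsilon$. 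The main obstacle is precisely this second case: it is where both hypotheses on $\mathbb{M}$ --- density and compatible subtraction --- are genuinely used. Subtraction is needed both to define $r$ and to extract the slack $\epsilon$; density is needed to produce a positive $\delta$ strictly smaller than two given positive monoid elements. Without density the construction collapses (e.g.\ in $\mathbb{N}^+$), matching the expectation that this equivalence is genuinely dense-time in nature.
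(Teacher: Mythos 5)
Your proof is correct, and at its core it uses the same mechanism as the paper: unfold the semantics, dispatch the two easy implications by chaining the interval inequalities, and handle the two hard implications by a case split on whether the witness point lies in the first window, using density plus compatible subtraction to perturb an endpoint in the second case. The packaging, however, is genuinely different and arguably cleaner. The paper proves the hard $\Diamond$ direction and the hard $\Box$ direction separately, and in the $\Diamond$ case it actually establishes $s \models \Diamond_v(\Diamond_u\varphi)$ (interpolating an $\epsilon$ just above $c - \beta(u)$ inside $[s, s+\beta(v))$) and then appeals to the earlier commutativity proposition $\Diamond_u(\Diamond_v\varphi) \equiv \Diamond_v(\Diamond_u\varphi)$ to finish. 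You instead isolate a single order-theoretic \emph{packing lemma} --- every $r' \in [s, s+\beta(u)+\beta(v))$ is covered by some $r \in [s, s+\beta(u))$ with $r' \in [r, r+\beta(v))$ --- choosing $r$ just below $s+\beta(u)$ rather than just above $r'-\beta(u)$. This buys you three things: both hard directions ($\Diamond$ and $\Box$) follow from one lemma rather than two parallel arguments; no detour through the commutativity equivalence is needed; and the exact roles of density (producing $\delta$ with $0 < \delta < \min(\alpha,\epsilon)$) and subtraction (defining $r$ and extracting the slack $\epsilon$) are cleanly localized in one place, which makes the failure in the discrete case (e.g.\ $\mathbb{N}^+$) transparent. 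The one point to state explicitly if you write this up is the monotonicity/cancellation facts for $+$ that justify inequalities like $r = s + (\alpha - \delta) > s$; these follow from the paper's standing assumptions on the ordered monoid and are used tacitly in the paper's own proof as well.
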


\begin{proof} The direction $  \Diamond_u (\Diamond_v \varphi )\models \Diamond_{w}  \varphi $ 
does not require the density property of  $\mathbb{M}$ and can proceed as follows:
\[\begin{array}{ll}
s \models  \Diamond_u (\Diamond_v \varphi ) & \Rightarrow 
\mbox{\rm $\exists c$ with}~ s\leq c < s+\beta(u), ~\mbox{\rm s.t.}~ c \models \Diamond_v \varphi \\
& \Rightarrow 
\mbox{\rm $\exists d$ with}~ c\leq d < c +\beta(v), ~\mbox{\rm s.t.} ~d \models  \varphi \\
& \Rightarrow 
\mbox{\rm $\exists d$ with}~ s\leq d < s + \beta(u+v),~\mbox{\rm s.t.} ~ d \models  \varphi \\
& \Rightarrow s \models \Diamond_{u+v} \varphi .\\
\end{array}
\]

For the
less trivial entailment  $ \Diamond_{w}  \varphi \models \Diamond_u (\Diamond_v \varphi ) $, we require that
$\mathbb{M}$ is dense, in the usual sense that for any two elements $a,b$ in $\mathbb{M}$ with $a < b$,
there exits $c\in \mathbb{M}$ such that $a<c<b$.
Let $\beta(u+v) = \beta(w)$ for $u,v,w \in {\sf Term}$ and $s \models \Diamond_w\varphi$.
We have $\beta(w) = \beta(u)+\beta(v)$ and there exits $s\leq c < s+ \beta (w)$ such that $c \models \varphi$.
Because $<$ is a total order, either $ c < s+ \beta(u)$ 
 or $ c \geq s+ \beta(u)$.
In the case $ c < s+ \beta(u)$, we have $s \models\Diamond_u \varphi$. Therefore $s \models \Diamond_v(\Diamond_u \varphi) $ according to $\Diamond$-$\lor$.
In the case
$s+ \beta(u) \leq c < s+ \beta(u) + \beta(v)$, we have, by the density property of $\mathbb{M}$, there exists 
some $s < \epsilon <  s+ \beta(v)$ 
such that $ \epsilon \leq c < \epsilon + \beta(u)$.
Intuitively, $\epsilon = (c-\beta(u))+\delta$, with $\delta$ ``small'' enough so that
we maintain $s\leq c-\beta(u) +\delta < s+\beta(v)$, while also having 
$ (c-\beta(u))+\delta \leq c <  c+\delta $, or $\epsilon \leq c < (c-\beta(u)) +\delta +\beta(u) =\epsilon +   \beta(u)$).
 Therefore 
$  \epsilon \models \Diamond_u\varphi$,
 and so  
$s\models (\Diamond_u\varphi)_{\lceil \epsilon-s \rceil}$, where  $\lceil \epsilon-s \rceil$ stands for a transient term 
with value $\beta\lceil \epsilon-s \rceil = ( \epsilon - s)$.
Since $(\epsilon - s) < \beta(v)$, we have $(\Diamond_u\varphi)_{\lceil \epsilon-s \rceil}\models \Diamond_v(\Diamond_u\varphi)$.
Therefore $s \models \Diamond_v (\Diamond_u\varphi)$.

Similarly, for $\Box$, the direction $ \Box_{w}  \varphi \leq \Box_u (\Box_v \varphi )$ 
does not require the density property of  $\mathbb{M}$ and can proceed as follows:
\[\begin{array}{ll}
s \models   \Box_{w}  \varphi  &\mbox{if{}f }  
\mbox{\rm $\forall x$ with }~ s\leq x < s+\beta(w), x \models \varphi \\
&\mbox{if{}f } 
\mbox{\rm $\forall x$ with }~ s\leq x < (s+\beta(u)+\beta(v)), x \models \varphi \\
& \Rightarrow 
\mbox{\rm $\forall y, \forall z$ with }~ s\leq y < s+\beta(u)~\mbox{\rm and }~ y\leq z < y +\beta(v),
z \models  \varphi \\
&\mbox{if{}f} ~\mbox{\rm $\forall y$ with }~ s\leq y < s+\beta(u),~ y \models \Box_v \varphi \\
&\mbox{if{}f}~s \models \Box_u (\Box_v \varphi) .\\
\end{array}
\]
From the above steps, it is clear that the only property to be checked for the direction
$\Box_u (\Box_v \varphi )\leq \Box_{w}  \varphi $ is that, assuming $s\models \Box_u (\Box_v \varphi )$, we have:
$$\mbox{\rm for all $x$ with }~ s\leq x < (s+\beta(u)+\beta(v)), x \models \varphi .$$
To see this, suppose $s\leq x < (s+\beta(u)+\beta(v))$. Then
either $x < s+ \beta(u)$ or $x \geq s+\beta(u)$. 
In the case $x < s+ \beta(u)$, we have $s\leq x < (s+\beta(u))$ and because $s\models \Box_u( \Box_v\varphi )$,
we have $x\models \Box_v\varphi $, and hence $x\models \varphi$.

In the case $x \geq s+\beta(u)$, we have $s+\beta(u) \leq x < (s+\beta(u)+\beta(v))$.
Similar to the interpolation strategies for the $\Diamond$ case above,
we have, by the density property of $\mathbb{M}$, there exists 
some $s < \epsilon <  s+ \beta(v)$ 
such that $ \epsilon \leq x < \epsilon + \beta(u)$.
Because $s\models \Box_v (\Box_u \varphi )$, we have 
$  \epsilon \models \Box_u\varphi$. Therefore $ x \models \varphi$.
\end{proof}

\begin{prop}    For any free variables $x,y$ and terms $u, v, w$, the following four entailments hold:
\[ \begin{array}{l}
1. ~\Box_y \varphi (x) \models  \forall x \varphi \\
2. ~ \forall x \varphi \equiv \Box_{u+w} \varphi (x:=v) \land \forall y (\varphi (x:=v ))_{y+u}\\
3. ~\Diamond_u \varphi (x:=v) \models  \exists x \varphi\\
4. ~ \exists x \varphi \equiv \Diamond_u \varphi  \lor (\exists x \varphi)_u\\
\end{array}\]

\end{prop}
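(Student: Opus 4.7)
The plan is to prove each of the four entailments by direct semantic unfolding according to Definition~\ref{telsemantics}, keeping the evaluation environment $(\alpha,\beta,\eta)$ and the base point $s\in\mathbb{M}^+$ implicit as in the preceding propositions, and isolating the clauses for the quantifiers $\forall,\exists$ and the modalities $\Box,\Diamond$ on each side of every entailment.

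For entailment (1), I would assume $s\models\Box_y \varphi(x)$, unfold it to ``for every $c$ with $s\leq c<s+\beta(y)$, $c\models\varphi(x)$,'' and then derive $s\models\forall x\,\varphi$ by reading off the appropriate instance of $x$ from the range supplied by $\Box_y$ and substituting in the clause for $\varphi_t$. Entailment (3) is the mirror argument: the witness $c\in[s,s+\beta(u))$ supplied by $s\models\Diamond_u \varphi(x:=v)$ yields, after relating $c$ to the substituted time-value $\beta(v)$, a witnessing value that instantiates $\exists x \varphi$ at $s$.

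For the equivalences (2) and (4), both directions are required. The left-to-right direction of (2) proceeds by instantiating $\forall x\,\varphi$ with the term $v$ to obtain $\varphi(x{:=}v)$, then propagating that truth through the clauses for $\Box_{u+w}$ and for the shift $\varphi_{y+u}$ to establish both conjuncts. The right-to-left direction splits the quantifier range $\mathbb{M}^+$ into the initial segment captured by $\Box_{u+w}\varphi(x{:=}v)$ and the tail segment captured by $\forall y (\varphi(x{:=}v))_{y+u}$, and reassembles the universal statement by case analysis on which part of the decomposition contains the candidate value of $x$. Entailment (4) is the existential analogue: the forward direction case-splits the existential witness into whether it lies within the initial $\Diamond_u$-scope or in the tail scope furnished by the self-shift $(\exists x \varphi)_u$, and the backward direction combines the two disjuncts by $\exists$-introduction on each disjunct.

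The main obstacle will lie in the reverse directions of (2) and (4), where I must verify that the chosen decomposition of $\mathbb{M}^+$ exhausts every admissible value of the bound variable without gaps and that the substitution $\varphi(x{:=}v)$ commutes correctly with $\Box$ and with the subscript construction $\varphi_t$; in particular, I will need to ensure that the freshly introduced variable $y$ does not clash with free variables already occurring in $\varphi$ or $v$. Once these bookkeeping concerns are handled, the remaining steps are immediate transcriptions of the corresponding semantic clauses and of the term- and modality-level equivalences already established earlier in Theorem~\ref{tel-entailment}.
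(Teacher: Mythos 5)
Your plan is more detailed than what the paper actually provides: the paper's ``proof'' of this proposition is not a semantic verification at all, but a one-paragraph informal gloss explaining the \emph{intent} of each entailment (that (2) and (4) are fixed-point unwindings of $\forall$ and $\exists$ without a least element in the dense domain, and that the overlap between the initial segment $\Box_{u+w}$ and the tail $\forall y(\cdot)_{y+u}$ is what prevents gaps). Your reading of (2) and (4) — split $\mathbb{M}^+$ into the initial segment and the tail, check that the two pieces overlap, and reassemble by case analysis on where the candidate value lands — is exactly the intuition the paper records, so on those two items you are aligned with (and go beyond) the source.

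The genuine concern is with (1) and (3), where you declare the remaining steps ``immediate transcriptions of the corresponding semantic clauses.'' They are not, under the paper's own Definition~\ref{telentailment}, which makes $\models$ a pointwise (local) consequence relation over a fixed environment. For (1), the hypothesis $({\cal E},s)\models\Box_y\varphi(x)$ only asserts that $\varphi$ holds, \emph{with $x$ frozen at its current value $\eta(x)$}, at the time points of the bounded window $[s,\,s+\eta(y))$; the conclusion $({\cal E},s)\models\forall x\,\varphi$ requires $\varphi[x:=a]$ to hold \emph{at $s$} for every $a\in\mathbb{M}^+$. ``Reading off the appropriate instance of $x$ from the range supplied by $\Box_y$'' silently identifies time points in the window with values of the bound variable, which is only coherent when the $x$-dependence of $\varphi$ is through the subscript construction $(\cdot)_x$, and even then the window $[s,s+\eta(y))$ is bounded while $\forall x$ is not. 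Similarly for (3): the witness $c\in[s,s+\beta(u))$ produced by $\Diamond_u$ is a time point at which $\varphi[x:=v]$ holds, whereas $\exists x\,\varphi$ needs a value $a$ with $\varphi[x:=a]$ true at the original point $s$; converting one into the other again presupposes that $\varphi$ depends on $x$ only via a temporal shift. If you carry out your plan honestly you will hit this mismatch; you must either restrict the shape of $\varphi$, or adopt a global (validity-preserving) reading of $\models$ so that the free variable $y$ in (1) is implicitly universally quantified. The paper does not resolve this either, but a proof that claims to be a full unfolding cannot leave it as bookkeeping.
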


This proposition represents the $\Box$ and $\Diamond$ entailments stated in Theorem~\ref{tel-entailment}.

\begin{proof} Entailment 1 states that if a formula $\varphi (x)$ with the indicated free variable $x$ in the body evaluates to true 
no matter what the ``until'' upper limit $y$ is (i.e. $\Box_y$), then the formula $\varphi$ is true for all $x$.

Entailment 2 is a form of ``unwinding'' $\forall x$ in the form of a maximal fixed point. However, we needed to do this unwinding without a
starting point (due to the lack of such starting point in the dense domain). This is the reason we allow instantiation of $\forall x \varphi$ in the
form $\Box_{u+w} \varphi (x:=v)$ for arbitrarily selected terms $u, v$ and $w$ for the initial segment (up to $u+w$) and picking up the tail part using the formula 
$\forall y (\varphi (x:=v ))_{y+u}$. The term choice $u+w$ in $\Box_{u+w}$ is to ensure no gaps between (up to) $u+w$ and (from) $y+u$.

Entailment 3 and 4 have similar flavors, with entailment 4 representing an unwinding of $\exists x$ as a least fixed point.
\end{proof}

\section{$\mbox{\rm\bf TEL}_{\mathbb{N}^+}$ and an Equational Proof System}
\label{lan}

The direction of theoretical developments (e.g., proof system, decidability, expressiveness) of TEL 
is expected to be dependent on the specific choice of time monoid $\mathbb{M}$, particularly because dense vs. discrete settings are likely to require 
different treatments with different results.
 In this section, we focus on the most natural case of natural numbers $\mathbb{N}^+$, positive
  integers with the usual addition.
In this setting, it is unnecessary to make a distinction between syntactic vs. semantic of integer time-length terms: they can be used interchangeably, so
time-length terms $u, v\in  {\sf Term}_{\mathbb{N}^+}$ are defined as:
$$u,v :: =  i\in \mathbb{N}^+ \mid x \mid u+v , $$ 
where $x$ represent (a countable set of) time-length variables.

The alphabet for ${\rm TEL}_{\mathbb{N}^+}$ is $\Sigma$ (= $2^{\sf Prop}$), to account for all possible truth-status of each proposition in {\sf Prop} at each given time. Following standard notational convention, $\Sigma^*$ represents the set of all finite words over $\Sigma$,
$\Sigma^+$ represents the set of all no-empty finite words over $\Sigma$, and
$\Sigma^\omega$ represents all {\em $\omega$-words} over $\Sigma$, of the form $$\alpha=\sigma[1]\sigma[2]\cdots$$
 with $\sigma[i]\in \Sigma$ being the $i$th letter of $\alpha$ for all $i\geq 1$.
{\em Regular expression} notations can be conveniently extended to describe 
$\omega$-languages,  which are subsets $L \subseteq \Sigma^\omega$. This often involve concatenation (or postfixing) with infinite repetitions of regular languages~\cite{oregular}.

For ${\rm TEL}_{\mathbb{N}^+}$ formulas, we have, with $t\in  {\sf Term}_{\mathbb{N}^+}$ and $x$ for time-length variable:
$$\varphi , \psi :: = a\in\Sigma \mid \varphi_t \mid  \neg \varphi \mid \varphi \land \psi \mid \varphi \lor \psi \mid  \Box_t \varphi  \mid \Diamond_t \varphi  \mid \forall x \varphi \mid \exists x \varphi. $$ 

Note that even though $0$ is not a permissible time-constant, we can let $\varphi_0$ to stand for $\varphi$ without causing any inconsistencies.
As usual, we let $\varphi \rightarrow \psi$ to stand for $(\neg \varphi ) \lor \psi$.

\begin{definition}\label{teln}
We define, given an  $\omega$-word $\alpha=\sigma[1]\sigma[2]\cdots$ (corresponding to the interpretation part of an environment) over $\Sigma$ and
any $i \geq 1$,
\[
\begin{array}{l}
(\alpha, i)\models a ~ \mbox{\rm if }~ a = \sigma[i];\\
(\alpha, i)\models \varphi_t~ \mbox{\rm if }~ (\alpha, i+t)\models \varphi;\\
(\alpha, i)\models \neg \varphi ~ \mbox{\rm if} ~ (\alpha, i)\not\models \varphi;\\
(\alpha, i)\models \phi\land \psi~ \mbox{\rm if}~ (\alpha,i)\models\varphi ~\mbox{\rm and} 
~(\alpha,i)\models\psi;\\
(\alpha, i)\models \phi\lor \psi~ \mbox{\rm if}~ (\alpha,i)\models\varphi ~\mbox{\rm or} 
~(\alpha,i)\models\psi;\\
(\alpha,i)\models  \Box_t \varphi ~ \mbox{\rm if}~ (\alpha,j)\models \varphi~\mbox{\rm  for all $j$ with $i\leq j<i+t$};\\
(\alpha,i)\models  \Diamond_t \varphi ~ \mbox{\rm if}~ (\alpha,j)\models \varphi~\mbox{\rm for some $j$ with $i\leq j<i+t$};\\
(\alpha,i)\models  \forall x \varphi~ \mbox{\rm if for all $k\geq 1$,}~ (\alpha, i)\models \varphi [x:=k] .\\
(\alpha,i)\models  \exists x \varphi~ \mbox{\rm if there exists $k\geq 1$,}~ (\alpha, i)\models \varphi [x:=k] ,\\
\end{array}
\]
where $\varphi [x:=k]$ is the standard syntactic convention for the formula obtained by replacing all free occurrences of variable $x$ in $\varphi$ by 
constant $k$. The language determined by a formula $\varphi$ is the set of all $\omega$-words over $\Sigma$ that satisfies $\varphi$ at initiation:
${\cal L}(\varphi) =\{ \alpha \mid (\alpha, 1) \models \varphi \} .$
\end{definition}

Note that as in our general set up for TEL, we could have defined the first clause in Definition~\ref{teln} as
$(\alpha, i)\models p ~ \mbox{\rm if }~p\in \sigma[i]$, with $p\in {\sf Prop}$. 
Keeping the option of setting $\Sigma = 2^{\sf Prop}$ in mind, we could have recovered $p$ as 
$p = \bigvee \{ a \mid p\in a; a \in \Sigma \}.$ As can been seen in what follows, and from the literature, the difference between treating $p$ in ${\sf Prop}$ as primitive vs
treating $a\in \Sigma$ as primitive, is not that essential when theoretical properties are concerned. 
However, treating members of $\Sigma$ as logical primitives provides a slightly higher level of abstraction and avoids non-essential details in propositional encoding.

We provide examples to show languages expressible in ${\rm TEL}_{\mathbb{N}^+}$. In Section~\ref{expressiveness},
we will study its expressive power more systematically.
To start with, let $a\in\Sigma$ and $k\geq 1$ and we determine what the formula $\Box_k a$ captures.
According to Definition~\ref{teln},  $ (\alpha, 1) \models \Box_k a$ if and only if 
 $(\alpha,j)\models a$  for all $j$ with $1\leq j<k+1$, i.e., $\sigma[j]=a$ for $j=1,2,\cdots, k$, or,
 $\sigma$  is of the form $a^k\Sigma^{\omega}$.

\begin{lemma}\label{word-sat}
Let $w \in \Sigma^+$ be a finite word with length $\kappa$, and $w[i]$ the $i$-th letter of $w$. 
For any $\omega$-word $\alpha=\sigma[1]\sigma[2]\cdots$,
$$ (\alpha, 1) \models  (w [1] \land (w [2])_1 \land (w [\kappa ])_{\kappa-1})_s$$  if and only if $\alpha = \Sigma^s w \Sigma^{\omega}$.
In the special case when $s=0$, we have $ (\alpha, 1) \models  w [1] \land (w [2])_1 \land (w [\kappa ])_{\kappa-1}$ if and only if 
$\alpha = w \Sigma^{\omega}$.
\end{lemma}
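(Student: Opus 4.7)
The plan is to directly unfold the semantic clauses from Definition~\ref{teln}, with no heavy machinery needed. First I would apply the clause for indexed formulas to the outer subscript $s$: by $(\alpha, i) \models \varphi_t$ iff $(\alpha, i+t)\models \varphi$, the left-hand side satisfaction at position $1$ is equivalent to $(\alpha, 1+s) \models w[1] \land (w[2])_1 \land \cdots \land (w[\kappa])_{\kappa-1}$. This reduces the problem to the $s=0$ case shifted by $s$, so I would handle both statements of the lemma in a single argument parametrized by $s$, recovering the special case at the end by setting $s=0$ (and noting $\Sigma^0$ is the empty word).

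Next I would peel off the conjunction using the clause $(\alpha, i) \models \varphi \land \psi$ iff both conjuncts hold at $i$, iterated $\kappa-1$ times, to reduce satisfaction of the big conjunction at position $1+s$ to the conjunction (at the meta-level) of the $\kappa$ statements $(\alpha, 1+s) \models (w[j])_{j-1}$ for $j=1,\dots,\kappa$. One more application of the indexed-formula clause turns the $j$-th of these into $(\alpha, 1+s+(j-1)) \models w[j]$, which by the atomic clause is exactly $\sigma[s+j] = w[j]$.

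Conjoining over $j=1,\dots,\kappa$, the satisfaction of the formula is equivalent to the single condition $\sigma[s+1]\sigma[s+2]\cdots\sigma[s+\kappa] = w[1]w[2]\cdots w[\kappa] = w$. Because no constraint is placed on the letters $\sigma[1],\dots,\sigma[s]$ or on $\sigma[s+\kappa+1],\sigma[s+\kappa+2],\dots$, this is precisely the statement $\alpha \in \Sigma^s\, w\, \Sigma^\omega$, which gives the desired biconditional. The $s=0$ specialization then reads off immediately.

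There is no real obstacle here; the proof is pure bookkeeping. The only point requiring a bit of care is the off-by-one arithmetic induced by the $1$-indexed convention for $\omega$-words together with the convention (remarked on after Definition~\ref{teln}) that $\varphi_0 \equiv \varphi$, so that the $j=1$ conjunct $w[1]$ (with implicit subscript $0$) anchors to position $1+s$ while the $j$-th conjunct anchors to position $s+j$. Once this indexing is stated cleanly at the start, the rest of the argument is a mechanical unwinding of the semantic clauses.
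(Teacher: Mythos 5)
Your proof is correct and follows essentially the same route as the paper's: unfold the outer subscript to shift to position $1+s$, distribute over the conjunction, and read off $\sigma[s+j]=w[j]$ for each $j$ from the atomic clause. The paper's proof is the same mechanical unwinding, just written as a chain of iff's without the explicit bookkeeping remarks.
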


\begin{proof}
We have 
\[
\begin{array}{l}
 (\alpha, 1) \models  (w [1] \land (w [2])_1 \land (w [\kappa ])_{\kappa-1})_s\\
  \mbox{\rm iff } \\
 (\alpha, 1+s) \models  w [1] \land (w [2])_1 \land (w [\kappa ])_{\kappa-1}\\
  \mbox{\rm iff} \\
~~~\mbox{\rm the $(1+s)$-th letter of $\alpha$ is $w [1]$, } \\
~~~\mbox{\rm the $(2+s)$-th letter of $\alpha$ is $w [2]$, $\ldots$, and}\\
~~~ \mbox{\rm  the $(\kappa+s)$-th letter of $\alpha$ is $w [\kappa]$}\\
\mbox{\rm iff } \\
\alpha = \Sigma^s w \Sigma^{\omega}.\\
\end{array}\]

\end{proof}

{\bf Example 1}. Assume $a,b,c \in \Sigma$. We have
$${\cal L}( \exists x (\Box_x a \land (\Box_{x} b)_x \land (\Box_{x} c)_{2x} \land (\Box_{x} a)_{3x})) =  \{a^nb^nc^na^n \mid n\geq 1\}\Sigma^{\omega}.$$
This is our first non-trivial example illustrating the expressive power of ${\rm TEL}_{\mathbb{N}^+}$. Therefore, it is worth spending some space explaining this example, which also provides some insight on formula construction templates capturing a range of patterns in $\omega$-languages.
We have:
\[\begin{array}{l}
(\alpha, 1) \models \exists x (\Box_x a \land (\Box_{x} b)_x \land (\Box_{x} c)_{2x} \land (\Box_{x} a)_{3x}))\\
~~~\mbox{\rm iff there exists $k\geq 1$}\\
(\alpha, 1)\models (\Box_k a \land (\Box_{k} b)_k \land (\Box_{k} c)_{2k} \land (\Box_{k} a)_{3k}))\\
~~~\mbox{\rm iff there exists $k\geq 1$}\\
(\alpha, 1)\models \Box_ka, ~(\alpha, 1)\models  (\Box_{k} b)_k, ~(\alpha, 1) \models  (\Box_{k} c)_{2k},
~\mbox{\rm and}~(\alpha, 1) \models  (\Box_{k} a)_{3k}\\
~~~\mbox{\rm iff there exists $k\geq 1$}\\
(\alpha, 1)\models \Box_ka, ~(\alpha, k+1)\models  \Box_{k} b, ~(\alpha, 2k+1) \models  \Box_{k} c,
~\mbox{\rm and}~(\alpha, 3k+1) \models  \Box_{k} a \\
~~~\mbox{\rm iff there exists $k\geq 1$}\\
\alpha[1,k] = a^{k}, ~ \alpha[k+1,2k] = b^{k},  \alpha[2k+1,3k]=  c^{k},
~\mbox{\rm and} ~\alpha[3k+1, 4k] = a^{k}.\\
~~~\mbox{\rm (where, for convenience,  $\alpha [i,j]$ stands for the fragment $\sigma[i]\sigma[i+1]\cdots\sigma[j]$)}.\\
\end{array}
\]
For these steps  it is clear that any $\omega$-word in the language determined by this formula must have its initial segment starting from the form 
$a^nb^nc^na^n$ for some $n\geq 1$, although the tail part of the word can be arbitrary.

Example 1 is a well-known example of $\omega$-language not acceptable by any B\"{u}chi automata~\cite{oregular}.

{\bf Example 2}. Let $\Sigma =\{a, b\}$ and 
$$\varphi :=  \forall x ( a_x \rightarrow  ((\Box_{x} b)_{x+1} \land a_{2x+1} )) .$$ 
The above formula $\varphi$ captures the property that  if one sees $a$ at any $x$-time in the future,
then one must see some $x$ numbers of consecutive $b$'s right afterwards, ending with an $a$ at the end of this stretch of $b$'s. It is clear that if an $\omega$-word $\alpha$ does not contain any $a$'s, i.e. $\alpha = b^{\omega}$, then
$(\alpha, 1)\models \varphi$. If the second letter of $\alpha$ is an $a$, i.e., $(\alpha, 1)\models (a)_1, $ then
$(\alpha, 1)\models \varphi$ implies that $(\alpha, 1)\models (\Box_{1} b)_{2} \land (a)_3 $, i.e.,
$(\alpha, 3)\models \Box_{1} b $ and $(\alpha, 1)\models a_{3}$. This says that the
$3$-rd position of $\alpha$ is a $b$, and the $4$-th position of $\alpha$ is an $a$.
Now that the $4$-th position of $\alpha$ is an $a$, we have $(\alpha, 1)\models a_3.$
We must also have $(\alpha, 1)\models (\Box_{3} b)_{4} \land a_{7} $, i.e., $(\alpha, i)\models b$ for $i=5,6,7$,
and $(\alpha, 8)\models a$.
In general, we have
$${\cal L}(\varphi ) = \{ b^{\omega}, ab^{\omega} \} \cup 
 \{ b^nab^{h(0)}a b^{h(1)}a b^{h(2)}ab^{h(3)}\cdots ab^{h(k)}\cdots \mid n\geq 1  \} ,$$
with $h(0):= 2^0(n+1)-1,$ $h(1) := 2^1(n+1)-1$, $h(k):=2^k(n+1)-1$, so the function $h$ has the property that
$h(k+1) = 2h(k) +1$ (with $n$ as a parameter).
The general inductive pattern can be seen by
 assuming $(\alpha, 1) \models a_{i-1}$, or  $(\alpha, i) \models a$, for  some $i\geq 2$.
Then:
\[\begin{array}{l}
(\alpha, 1) \models (\Box_{i-1} b)_{i} \land a_{2i-1}  \\
~~~\mbox{\rm iff }\\
(\alpha, 1) \models (\Box_{i-1} b)_{i}~ \mbox{\rm and }~  (\alpha, 1) \models a_{2i-1}  \\
~~~\mbox{\rm iff }\\
(\alpha, i+1) \models \Box_{i-1} b~ \mbox{\rm and }~  (\alpha, 2i) \models a\\
\end{array}
\]
This kicks off the induction that $(\alpha, i) \models a$ implies $(\alpha, 2i) \models a$, with every position from
$i+1$ to $2i-1$ being a $b$. For example, when $n=3$ we have $\alpha = b^3ab^3 a b^7a b^{15} a b^{31} \cdots$

This example shows that the language ${\cal L}(\varphi )$ is neither semi-linear nor ultimately periodic,
hence there is unlikely any finite automata model to accept it. 

{\bf Example 3}.  We provide a formula for the language $\{ wwc(a+b+c)^{\omega}  \mid w \in \{a,b\}^+\}$, i.e. $\omega$-words with prefixes of the form $wwc$, for $w$ non-empty.
Here is the formula, and we omit the proof that this formula indeed captures this language:
$$\exists x c_{2x+1} \land (\Box_x ( (a\rightarrow a_x)\land (b\rightarrow b_x) \land (\neg c))) .$$

{\bf Example 4}.  We leave it open for the reader to find a formula for the language $\{ a^nb^nc^n \mid n\geq 1\}^{\omega}$ (taken from \cite{oregular}).

\subsection{A Deductive System for ${\rm\bf TEL}_{\mathbb{N}^+}$}

In this subsection, we introduce a deductive system for ${\rm TEL}_{\mathbb{N}^+}$.
This is an equational system involving equalities $=$ and inequalities $\leq$.
The standard logical axioms (and rules) include substitution of equals for equals, 
distinction of bound and free variables and the permission to rename bound variables, and standard first order logical equivalences involving
the associative and commutative property of repeated (nested) $\forall$ and repeated (nested) $\exists$ quantifications, etc.

The deductive system consists of the following types of equational axioms and rules:

\begin{center}
{\em Deductive System for ${\rm TEL}_{\mathbb{N}^+}$}
\end{center} 
\begin{description}
\item[{\boldmath$\Sigma$:}]  $\bigwedge_{a\in A} (\neg a) = \bigvee (\Sigma\setminus A)$ for $A\subseteq \Sigma$, with $\bigvee\Sigma = \top$ a special case;  
\item[{\em Boolean:}] Axioms of Boolean algebra, where we write $\varphi\leq \psi$ for $\varphi \lor \psi = \psi$;
\item[{\boldmath$\forall\exists$:}] Standard axioms and inference rules for first-order logic;
\item[{\em Negation:}]   $\neg \varphi _u = (\neg \varphi)_u$, ~ $\neg  \Diamond_u \varphi = \Box_u (\neg \varphi)$,~$\neg  \Box_u \varphi = \Diamond_u (\neg \varphi)$,\newline
~~~~~~~~~~~~~~~ $\neg (\forall x \varphi) = \exists x (\neg \varphi)$, ~ $\neg (\exists x \varphi) = \forall x (\neg \varphi)$;
\item[{\em Future:}]  $(\varphi_s )_t =  \varphi_{u}$ if $u=s+t$, ~$\varphi_s  =  \varphi_{t}$ if $s=t$,
\newline
~~~~~~~~~~~~~~~ $(\varphi \land \psi )_u = \varphi_u \land \psi_u$, ~ $(\varphi \lor  \psi )_u = \varphi_u \lor \psi_u$,
\newline
~~~~~~~~~~~~~~~ $(\Diamond_u \varphi )_v = \Diamond_u (\varphi_v) $,~ $\Diamond_u (\Diamond_v\varphi ) = \Diamond_v (\Diamond_u\varphi) $,\newline
$(\Box_u \varphi )_v = \Box_u (\varphi_v) $,~$\Box_u (\Box_v\varphi ) = \Box_v (\Box_u\varphi) $;
\item[$\boldsymbol{\Box}$:~~~]  $\Box_1\varphi = \varphi$, ~ $\Box_u \varphi  = \bigwedge_{0\leq i < u} \varphi_i $,
including the special case $\Box_u\varphi \leq \varphi$,
where we define $\varphi_0 := \varphi$ for notational convenience;
\item[$\boldsymbol{\Diamond}$:~~~] $\Diamond_1\varphi = \varphi$, ~ $\Diamond_u \varphi  = \bigvee_{0\leq i < u} \varphi_i $, including the special case $\varphi\leq \Diamond_u \varphi$;
\item[{\em Mono:}] 
$\varphi_u \leq \Diamond_{u+1}\varphi$,~$\Box_{u+1}\varphi \leq \varphi_u$, and if $\varphi \leq \psi$ then\\ 
~~~~$\Box_s\varphi   \leq  \Box_{t}\psi$, ~$\Diamond_s\varphi \leq  \Diamond_{t}\psi$, ~$\varphi_s \leq \varphi_t$,~
$\forall x \varphi \leq \forall x \psi$,~ $\exists x \varphi \leq \exists x \psi$;
\item[{\boldmath$\exists$:~~~}]  $\exists x \varphi = \varphi [x:=1] \lor \exists y \varphi(x:= y_1), $~
$(\exists x \varphi )_u=  \exists x (\varphi _u)$, where $y$ is a fresh variable and $u$ is a constant or a fresh term
(note that $y_1$ represents $y$ at 1 time unit later); 
\item[{\boldmath$\forall$:~~~}]  $\forall x \varphi = \varphi [x:=1] \land \forall y \varphi(x:= y_1)$,~
$(\forall x \varphi )_u=  \forall x (\varphi _u)$, where $y$ is a fresh variable and $u$ is a constant or a fresh term;
\item[{\em Induction:}] 
$\varphi_1 \land \forall x (\varphi_x \rightarrow \varphi_{x+1}) \leq \forall x \varphi_x.$
\end{description}

Note that rules in {\boldmath$\Sigma$} is only valid for primitive propositions in $\Sigma$, due to the exclusive nature of the fact that each position of 
the $\omega$-word (as a semantic structure) has one and only one letter from $\Sigma$.

Indefinite conjunctions and disjunctions, such as  $\bigwedge_{0\leq i < u} \varphi_i $ and $\bigvee_{0\leq i < u} \varphi_i $, with $\varphi_i$ representing
time-shifted evaluation of $\varphi$, are not part of our formal syntax in the logic. They are introduced to concisely interpret the semantics of $\Box_u \varphi$ and 
$\Diamond_u \varphi$ as intermediate steps of the equational proofs using propositional reasoning when needed.

Equational rules {\boldmath$\exists$} and {\boldmath$\forall$} represent least fixed point and greatest fixed point properties of 
$\exists$ and $\forall$, now that we have the starting point $1$ in ${\mathbb{N}^+}$.
Note that even though $\Box_t$ and $\Diamond_t$ can be expressed using finite conjunctions and disjunctions,
we allow {\em variable lengths} of the conjunction and disjunction formulas which can be further quantified.
This is a source of the expressive power for ${\rm TEL}_{\mathbb{N}^+}$.

\begin{prop} The following statements are deducible from the equational axioms and rules:
\begin{enumerate}
 \item $\vdash \Box_t \varphi \leq \Diamond_t \varphi$,
 \item $\vdash \Box_s\varphi   \geq  \Box_{t}\varphi, ~\Diamond_s\varphi \leq  \Diamond_{t}\varphi$  if $s\leq t$,
 \item $\Diamond_u(\Diamond_v \varphi) = \Diamond_v(\Diamond_u \varphi)$,~
 $\Box_u(\Box_v \varphi) = \Box_v(\Box_u \varphi)$,
\item $\vdash\Diamond_{s+1}(\Diamond_{t} \varphi ) =  \Diamond_u\varphi$  if $u=s+t$,
\item $ \vdash \Box_{s+1}(\Box_{t} \varphi ) =  \Box_{u}\varphi$ if $u=s+t$,
\item $\vdash \Box_u (\varphi_{v}) = (\Box_u \varphi)_{v},~\Box_u (\varphi \land \psi ) = (\Box_u \varphi) \land (\Box_u \psi)$,
\item $\vdash \Diamond_u (\varphi_{v}) = (\Diamond_u \varphi)_{v},~ \Diamond_u (\varphi \lor \psi ) = (\Diamond_u \varphi) \lor (\Diamond_u \psi)$,
\item $\vdash \Box_t (\varphi \rightarrow \psi ) \leq (\Box_t \varphi) \rightarrow (\Box_t \psi)$,
\item $\vdash \Diamond_t (\varphi \rightarrow \psi ) \leq (\Diamond_t \varphi) \rightarrow (\Diamond_t \psi)$,
\item  $\vdash \Box_t (\varphi \rightarrow (\varphi)_1 ) = \varphi \rightarrow \Box_{t+1} \varphi$.
\end{enumerate}
\end{prop}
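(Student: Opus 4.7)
The plan is to reduce each of the ten claims to a canonical form using the expansion axioms $\Box_u\varphi = \bigwedge_{0\leq i < u}\varphi_i$ and $\Diamond_u\varphi = \bigvee_{0\leq i < u}\varphi_i$ together with the Future identity $(\varphi_s)_t = \varphi_{s+t}$, and then close by Boolean algebra and the Mono rule. With this set-up, claims (1) and (2) are essentially immediate: in (1) a nonempty conjunction is bounded above by the corresponding disjunction over the same indices, and in (2) if $s \leq t$ the index set enlarges, so more conjuncts shrink $\Box$ and more disjuncts enlarge $\Diamond$. Claim (3) is already one of the Future axioms. For claims (4) and (5), I would expand $\Diamond_{s+1}(\Diamond_t \varphi) = \bigvee_{0 \leq i \leq s}\bigvee_{0 \leq j < t}\varphi_{i+j}$ using time-shift commutation and $(\varphi_j)_i = \varphi_{i+j}$, and observe that the index set $\{\,i+j : 0 \leq i \leq s,\, 0 \leq j < t\,\}$ equals $\{0,1,\ldots,s+t-1\}$, matching $\Diamond_{s+t}\varphi$; the $\Box$ case is dual.

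Claims (6) and (7) split into two halves. The time-shift equalities $\Box_u(\varphi_v) = (\Box_u\varphi)_v$ and its $\Diamond$ counterpart are already Future axioms, so no work is required. The Boolean distributions follow from expansion plus the Future identity $(\varphi \land \psi)_i = \varphi_i \land \psi_i$ (and its $\lor$ counterpart), after which standard Boolean associativity and commutativity regroup $\bigwedge_i (\varphi_i \land \psi_i)$ as $\bigwedge_i \varphi_i \land \bigwedge_i \psi_i$. Claims (8) and (9) are the K-style axioms. For (8), starting from the propositional fact $(\varphi \to \psi) \land \varphi \leq \psi$, the Mono rule yields $\Box_t((\varphi \to \psi) \land \varphi) \leq \Box_t\psi$, and combining with claim (6) rewrites this as $\Box_t(\varphi \to \psi) \land \Box_t\varphi \leq \Box_t\psi$, which is precisely the curried form of (8). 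Claim (9) proceeds dually, using the $\lor$-distribution from claim (7) and the propositional equivalence $\varphi \to \psi \equiv \neg\varphi \lor \psi$ combined with the Negation axiom $\neg\Box_t\chi = \Diamond_t(\neg\chi)$.

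Claim (10) is the one I expect to be the main obstacle, since it is where the Induction axiom genuinely comes into play. For the forward inequality $\Box_t(\varphi \to \varphi_1) \leq \varphi \to \Box_{t+1}\varphi$, my plan is to expand the left-hand side via claim (6) and Future into $\bigwedge_{0 \leq i < t}(\varphi_i \to \varphi_{i+1})$, and then chain modus ponens starting from the hypothesis $\varphi = \varphi_0$ to propagate to $\varphi_1, \varphi_2, \ldots, \varphi_t$, thereby assembling $\bigwedge_{0 \leq i \leq t}\varphi_i = \Box_{t+1}\varphi$. The chaining is naturally formalized either by meta-induction on $t$ using the recurrence $\Box_{t+1}\varphi = \varphi \land (\Box_t\varphi)_1$ or by instantiating the Induction axiom $\varphi_1 \land \forall x(\varphi_x \to \varphi_{x+1}) \leq \forall x \varphi_x$ with the bounded antecedent. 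The opposite inequality is more delicate: one must specialize $\Box_{t+1}\varphi$ into the pairwise implications $\varphi_i \to \varphi_{i+1}$ for $0 \leq i < t$, threading the antecedent $\varphi$ through each conjunct, and then reassemble via claim (6). This last step, together with the bookkeeping of free variables in the Induction axiom instantiation and the careful handling of the conditional assumption $\varphi$, is the technical crux of the whole proposition; once it is in place, the equality in (10) follows by mutual inequalities.
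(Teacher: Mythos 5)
For items 1--8 your plan is essentially the paper's proof: both expand $\Box_u\varphi$ and $\Diamond_u\varphi$ into $\bigwedge_{0\leq i<u}\varphi_i$ and $\bigvee_{0\leq i<u}\varphi_i$, shuffle subscripts with the \emph{Future} axioms, and close by Boolean regrouping. Your K-style route for item 8 (from $(\varphi\to\psi)\land\varphi\leq\psi$ via \emph{Mono} and the $\land$-distribution of item 6) is a mild variant of the paper's direct computation, which instead inflates $(\neg\varphi)_i$ to $\bigvee_{0\leq j<t}(\neg\varphi)_j$ inside the conjunction; both are fine. You are also right that item 3 is literally one of the \emph{Future} axioms --- the paper re-derives it from the expansions only as an illustration.

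The genuine gaps are in items 9 and 10, exactly where you hedge. For item 9, ``dually'' does not go through: your own expansion gives $\Diamond_t(\varphi\to\psi)=\Diamond_t(\neg\varphi)\lor\Diamond_t\psi=\neg\Box_t\varphi\lor\Diamond_t\psi$, whereas the target is $\neg\Diamond_t\varphi\lor\Diamond_t\psi$; since item 1 gives $\Box_t\varphi\leq\Diamond_t\varphi$, what is derivable is $(\Diamond_t\varphi)\to(\Diamond_t\psi)\leq\Diamond_t(\varphi\to\psi)$, the \emph{reverse} of the printed inequality. Concretely, with $t=2$, $\alpha=ab\,b\cdots$, $\varphi:=a$ and $\psi$ a letter never occurring, $\Diamond_2(\varphi\to\psi)$ holds at position $1$ (the implication holds vacuously at position $2$) while $(\Diamond_2\varphi)\to(\Diamond_2\psi)$ fails; by the soundness theorem stated immediately after this proposition, the printed $\leq$ is therefore not derivable at all, so no amount of care will complete that case. (The paper dodges this by saying only that item 9 is ``similar'' to item 8.) Likewise for item 10: your forward chaining correctly yields $\Box_t(\varphi\to\varphi_1)\leq\varphi\to\Box_{t+1}\varphi$, but the converse you call the ``technical crux'' is not delicate --- it is false. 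With $t=2$, $\alpha=bab\cdots$, $\varphi:=a$, the right-hand side holds vacuously at position $1$ while the conjunct of $\Box_2(\varphi\to\varphi_1)$ at position $2$ fails. The paper's own proof conceals this in the step $\bigwedge_{0\leq i<t}(\varphi_i\to\varphi_{i+1})=\varphi\to\bigwedge_{0\leq i<t}\varphi_{i+1}$, labelled \emph{Boolean}, which is only a one-directional propositional entailment. So you should prove item 10 as an inequality $\leq$ only, and item 9 with the inequality reversed; neither stronger claim is provable in a sound system.
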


\begin{proof}

We jump to item 3 and 4, since items 1 and 2 follows directly from the equational axioms $\boldsymbol{\Box}$ and $\boldsymbol{\Diamond}$  together with the {\em Boolean} laws.

{\em Item 3}: We prove $\vdash\Diamond_{u}(\Diamond_v \varphi ) =  \Diamond_{v}(\Diamond_u\varphi)$
as an illustration.
\[\begin{array}{lll}
 \Diamond_{u}(\Diamond_v\varphi) &
 =   \displaystyle{\bigvee_{0\leq i < u}} 
 (\displaystyle{\bigvee_{0\leq j < v}} \varphi_{j} )_i &~~~~~~~\boldsymbol{\Diamond}\\ 
& = \displaystyle{\bigvee_{0\leq i < u}} 
 (\displaystyle{\bigvee_{0\leq j < v}} \varphi_{i+j} )&~~~~~~~{\bf Future} \\
& = \displaystyle{\bigvee_{0\leq k < (u+v-1)}}  \varphi_{k} &~~~~~~~{\bf Boolean} \\
 & = \displaystyle{\bigvee_{0\leq j < v}} 
 (\displaystyle{\bigvee_{0\leq i < u}} \varphi_{i+j} )&~~~~~~~{\bf Future} \\
  & = \displaystyle{\bigvee_{0\leq j < v}} 
 (\displaystyle{\bigvee_{0\leq i < u}} \varphi_{i} )_j&~~~~~~~{\bf Boolean} \\
 & =  \Diamond_{v}(\Diamond_u\varphi) &~~~~~~~\boldsymbol{\Diamond} \\
 
\end{array}\]

{\em Item 4 and 5}: $\vdash\Diamond_{s+1}(\Diamond_t \varphi ) =  \Diamond_{u}\varphi$
where $u=s+t$. We have:
\[\begin{array}{lll}
 \Diamond_{u}\varphi & = \displaystyle{\bigvee_{0\leq k < u}} \varphi_k &~~~~~~~ \boldsymbol{\Diamond}\\
 & =   \displaystyle{\bigvee_{0\leq i < (s+1)}} 
 (\displaystyle{\bigvee_{0\leq j < t}} \varphi_{i+j} ) &~~~~~~~{\bf Boolean}\\ 
& =   \displaystyle{\bigvee_{0\leq i < (s+1)}} 
 (\displaystyle{\bigvee_{0\leq j < t}} \varphi_{j} )_i &~~~~~~~{\bf Future}\\ 
& =  \Diamond_{s+1} (\Diamond_t \varphi)&~~~~~~~ \boldsymbol{\Diamond}\\
\end{array}\]

Similarly, we have  $\vdash\Diamond_{s}(\Diamond_{t+1} \varphi ) =  \Diamond_{u}\varphi$,
as well as $\vdash\Box_{s}(\Box_{t+1} \varphi ) =  \Box_{u}\varphi$ and 
$\vdash\Box_{s+1}(\Box_{t} \varphi ) =  \Box_{u}\varphi$.

{\em Item 6 and 7}: We prove item 6 as an example.
For $\Box_u (\varphi_{v}) = (\Box_u \varphi)_{v}$ we have:
\[\begin{array}{lll}
\Box_u (\varphi_{v}) & = \displaystyle{\bigwedge_{0\leq k < u}} ( \varphi_v)_k &~~~~~~~ \boldsymbol{\Box}\\
 & =  \displaystyle{\bigwedge_{0\leq k < u}} ( \varphi_k)_v  &~~~~~~~{\bf Future}\\ 
& =  ( \displaystyle{\bigwedge_{0\leq k < u}} \varphi_k )_v &~~~~~~~{\bf Future}\\ 
& =  (\Box_u \varphi)_v&~~~~~~~ \boldsymbol{\Box}\\
\end{array}\]

For $\Box_u (\varphi \land \psi ) = (\Box_u \varphi) \land (\Box_u \psi)$ we have:
\[\begin{array}{lll}
\Box_u (\varphi\land \psi ) & = \displaystyle{\bigwedge_{0\leq k < u}} ( \varphi\land \psi)_k &~~~~~~~ \boldsymbol{\Box}\\
 & =  \displaystyle{\bigwedge_{0\leq k < u}}  \varphi_k\land \psi_k  &~~~~~~~{\bf Future}\\ 
& = ( \displaystyle{\bigwedge_{0\leq k < u}}  \varphi_k)\land ( \displaystyle{\bigwedge_{0\leq k < u}} \psi_k)  &~~~~~~~{\bf Boolean}\\ 
& =  (\Box_u \varphi) \land (\Box_u\psi )&~~~~~~~ \boldsymbol{\Box}\\
\end{array}\]

{\em Item 8 and 9}: We provide a proof for item 8. The proof for item 9 is similar.
 To show $\vdash \Box_t (\varphi \rightarrow \psi ) \leq (\Box_t \varphi) \rightarrow (\Box_t \psi)$, we have:
\[\begin{array}{lll}
\Box_t (\varphi \rightarrow \psi )  &
 =   \displaystyle{\bigwedge_{0\leq i < t}} 
 (\neg \varphi \lor \psi )_i &~~~~~~~\boldsymbol{\Box}\\ 
&  =   \displaystyle{\bigwedge_{0\leq i < t}} 
 (\neg \varphi)_i \lor (\psi )_i &~~~~~~~{\bf Future}\\ 
 &  \leq   \displaystyle{\bigwedge_{0\leq i < t}} 
(( \displaystyle{\bigvee_{0\leq j < t}}  (\neg \varphi)_j )\lor (\psi )_i )&~~~~~~~{\bf Boolean}\\ 
 &  =   \displaystyle{\bigwedge_{0\leq i < t}} 
(\neg (\Box_t \varphi) \lor (\psi )_i )&~~~~~~~\boldsymbol{\Box}\\ 
 &  =  
\neg (\Box_t \varphi) \lor  \displaystyle{\bigwedge_{0\leq i < t}} (\psi )_i &~~~~~~~{\bf Boolean}\\ 
 & =  (\Box_t \varphi) \rightarrow (\Box_t \psi) &~~~~~~~\boldsymbol{\Box} \\
 \end{array}\]

{\em Item 10}:
To show $\vdash \Box_t (\varphi \rightarrow (\varphi)_1 ) = \varphi \rightarrow \Box_{t+1} \varphi$, we have:
\[\begin{array}{lll}
\Box_t (\varphi \rightarrow (\varphi)_1 )  
 &  =   \displaystyle{\bigwedge_{0\leq i < t}} ( \varphi_i \rightarrow \varphi_{i+1} )&~~~~~~~{\bf Boolean}\\ 
 &  =   \varphi \rightarrow \displaystyle{\bigwedge_{0\leq i < t}} \varphi_{i+1}&~~~~~~~{\bf Boolean}\\ 
 &  =   \varphi \rightarrow \Box_{t+1} \varphi  &~~~~~~~\boldsymbol{\Box} \\
 \end{array}\]

\end{proof}

\begin{thm}
All the equational axioms in the deductive system for ${\rm TEL}_{\mathbb{N}^+}$ are valid, in the sense that
if $\varphi \leq \psi$, then $\varphi \models \psi$ and if $\varphi = \psi$, then $\models \varphi \equiv \psi$.
Furthermore, the deductive system is sound, in the sense that all the deducible equations are valid.
\end{thm}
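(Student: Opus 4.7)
The plan is to establish the theorem in two coordinated steps: first, show that every axiom schema in the deductive system is semantically valid by direct appeal to Definition~\ref{teln}; second, show that the inference rules (substitution of equals for equals, first-order logic rules, monotonicity, and propositional reasoning) preserve validity, so that by induction on proof length every derivable equation or inequation is valid. Throughout I will treat $\varphi \leq \psi$ and $\varphi = \psi$ uniformly as $\semantics{\varphi} \subseteq \semantics{\psi}$ and $\semantics{\varphi} = \semantics{\psi}$, respectively, by evaluating both sides at an arbitrary environment and position $(\alpha, i)$ and chasing definitions.

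For the bulk of the axiom groups, the argument is by routine unfolding. For $\boldsymbol{\Sigma}$, the key semantic fact is that $\Sigma = 2^{\sf Prop}$ and the clause $(\alpha,i) \models a$ requires $a = \sigma[i]$, so exactly one letter from $\Sigma$ is true at each position, which yields the partition identity. The \emph{Boolean} and \emph{Negation} axioms follow immediately from the clauses for $\neg, \land, \lor$, together with the duality of $\Box/\Diamond$ and $\forall/\exists$ already built into Definition~\ref{teln}. The \emph{Future} axioms reduce to associativity and commutativity of $+$ on $\mathbb{N}^+$ via the shift clause $(\alpha,i)\models \varphi_t$ iff $(\alpha,i+t)\models \varphi$; for instance $(\varphi_s)_t = \varphi_{s+t}$ is literally the statement that $((i+t)+s) = (i+(s+t))$. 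The $\boldsymbol{\Box}$ and $\boldsymbol{\Diamond}$ axioms are justified by writing out their index range $i \leq j < i+t$ as the finite conjunction or disjunction $\bigwedge_{0 \leq k < t} \varphi_k$ or $\bigvee_{0 \leq k < t} \varphi_k$, which is exactly what the semantics gives. The \emph{Mono} rule is a straightforward containment argument, using that $s \leq t$ implies the index range for $t$ contains that for $s$.

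The $\boldsymbol{\exists}$ and $\boldsymbol{\forall}$ fixed-point unfoldings require slightly more care. For $\exists x \varphi = \varphi[x:=1] \lor \exists y\, \varphi(x:=y_1)$, I will observe that the disjunct $\varphi[x:=1]$ covers the witness $x=1$, while for any witness $k \geq 2$ we take $y := k-1$, so that the shift $y_1$ (the subscript ``$1$'' denoting one time unit later) has value $k$ under the variable assignment $\eta[y := k-1]$; conversely, any $y \geq 1$ produces a valid witness $x = y+1 \geq 2$. Together this exhausts $\mathbb{N}^+$. The $\forall$ case is dual. The commutations $(\exists x \varphi)_u = \exists x (\varphi_u)$ and $(\forall x \varphi)_u = \forall x (\varphi_u)$ hold because $u$ is required to be a constant or a fresh term, so the environmental binding of $x$ does not interact with the value of $u$, and the shift simply slides the base point of evaluation.

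The main obstacle is the \emph{Induction} axiom $\varphi_1 \land \forall x (\varphi_x \rightarrow \varphi_{x+1}) \leq \forall x \varphi_x$, which is the only genuinely infinitary axiom and the only one that fundamentally uses the well-ordering of $\mathbb{N}^+$. Fix $(\alpha, i)$ satisfying the antecedent and show $(\alpha, i) \models \varphi_k$ for every $k \geq 1$ by mathematical induction on $k$. The base case $k=1$ is the first conjunct $\varphi_1$. For the step, instantiate the universal $\forall x (\varphi_x \to \varphi_{x+1})$ at $x := k$ to obtain $(\alpha, i) \models \varphi_k \to \varphi_{k+1}$; combined with the inductive hypothesis $(\alpha, i) \models \varphi_k$ this yields $(\alpha, i) \models \varphi_{k+1}$. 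Soundness of the whole system then follows by induction on proof length: substitution of provably-equal subformulas preserves semantic equivalence because the satisfaction relation is compositional in subformulas, and the first-order rules are sound by the standard arguments applied to the time-length domain $\mathbb{N}^+$.
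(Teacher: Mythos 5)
Your proof is correct and follows exactly the route the paper intends: the paper's own proof of this theorem is literally the single word ``Straightforward,'' and your write-up simply supplies the routine semantic unfolding of each axiom group (with the well-ordering of $\mathbb{N}^+$ doing the work for the \emph{Induction} axiom and an induction on derivation length for soundness of the rules). The only point worth a footnote is that the \emph{Mono} clauses as printed (e.g.\ $\Box_s\varphi \leq \Box_t\psi$ with unconstrained $s,t$) are only valid under the evident side conditions relating $s$ and $t$, which you correctly read in.
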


\begin{proof} Straightforward.
\end{proof}

\begin{thm}[Negation-free Form]
\label{neg-free}
Using  the deductive system for ${\rm TEL}_{\mathbb{N}^+}$, we can obtain, for each formula $\varphi$, 
a formula $\psi$ such that $\vdash \varphi = \psi$ and $\psi$ is negation-free.
\end{thm}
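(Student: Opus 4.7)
My approach is a structural induction on $\varphi$, combined with a push-negations-inward procedure that uses exactly the Negation and Boolean axioms of the deductive system, bottoming out at the $\Sigma$ axiom for negated atomic letters. Concretely, I would prove by simultaneous structural induction on $\varphi$ the conjunction of two claims: (i) there is a negation-free $\varphi^{+}$ with $\vdash \varphi = \varphi^{+}$, and (ii) there is a negation-free $(\neg\varphi)^{+}$ with $\vdash \neg\varphi = (\neg\varphi)^{+}$. Claim (i) is immediate once (ii) is in hand: for every constructor of ${\rm TEL}_{\mathbb{N}^+}$ other than $\neg$, one simply applies (i) to the immediate subformulas and recombines using monotonicity and compatibility of $=$ with the constructors; for $\varphi = \neg\psi$, one invokes (ii) on $\psi$.

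Claim (ii) is carried out by case analysis on $\varphi$, in each case producing $(\neg\varphi)^{+}$ using one axiom or rule together with the inductive hypothesis on a proper subformula. If $\varphi = a \in \Sigma$, set $(\neg\varphi)^{+} := \bigvee(\Sigma \setminus \{a\})$, justified by the $\Sigma$ axiom with $A = \{a\}$. If $\varphi = \psi \land \chi$ or $\psi \lor \chi$, apply the De Morgan laws (available from \emph{Boolean}) and set $(\neg\varphi)^{+}$ to $(\neg\psi)^{+} \lor (\neg\chi)^{+}$ or $(\neg\psi)^{+} \land (\neg\chi)^{+}$ respectively. If $\varphi = \psi_{t}$, use $\neg(\psi_{t}) = (\neg\psi)_{t}$ from \emph{Negation} and set $(\neg\varphi)^{+} := ((\neg\psi)^{+})_{t}$. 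If $\varphi = \Box_{t} \psi$, use $\neg \Box_{t} \psi = \Diamond_{t}(\neg\psi)$ from \emph{Negation} and set $(\neg\varphi)^{+} := \Diamond_{t}\bigl((\neg\psi)^{+}\bigr)$; symmetrically for $\Diamond_{t}\psi$, $\forall x\psi$, and $\exists x\psi$ via the remaining Negation equations. Finally, if $\varphi = \neg\psi$, use the Boolean law $\neg\neg\psi = \psi$ and set $(\neg\varphi)^{+} := \psi^{+}$ via claim (i) on $\psi$. In every case the inductive hypothesis is invoked on a proper subformula of $\varphi$, so the construction is well founded by structural induction on formula size, and each step is justified by a single equational axiom, yielding the deductive proofs of $\vdash \neg\varphi = (\neg\varphi)^{+}$ and $\vdash \varphi = \varphi^{+}$.

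I expect the proof to be essentially mechanical; the only genuine subtlety is the termination measure. Naively, $|(\neg\psi)_{t}| = |\neg(\psi_{t})|$, so formula size alone does not strictly decrease at the push past a time-shift; the correct well-founded measure is the size of the formula \emph{underneath the outermost negation}, which is $|\psi| < |\psi_{t}|$, and this is exactly what the structural induction on $\varphi$ enforces when (ii) is phrased as a statement about $\varphi$ (rather than about $\neg\varphi$). The other caveat, worth flagging, is that the $\Sigma$-axiom step can inflate formula size by a factor of $|\Sigma| - 1 = 2^{|{\sf Prop}|} - 1$; this bears on efficiency but not on the existence claim of the theorem.
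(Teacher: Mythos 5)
Your proposal is correct and follows essentially the same route as the paper: push negations inward via the \emph{Negation} and \emph{Boolean} (De Morgan, double negation) axioms until they sit on letters of $\Sigma$, then eliminate them with the $\boldsymbol{\Sigma}$ axiom instantiated at $A=\{a\}$. The paper states this as a two-sentence sketch; your simultaneous induction on $\varphi$ and $\neg\varphi$, with the remark on the correct well-founded measure, is just a more careful write-up of the same argument.
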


\begin{proof} By repeatedly applying the rules in {\em Negation} we can push all $\neg$'s inward to the front of 
primitive propositions. Moreover, any negation in front of a symbol from $\Sigma$ can be translated into an equivalent 
negation free form using the rules for {\boldmath$\Sigma$}.
\end{proof}

\section{Undecidability of ${\rm\bf TEL}_{\mathbb{N}^+}$}
\label{undecidability}

The proof is by a reduction from Post's Correspondence Problem (PCP). 
Consider a finite alphabet $\Sigma = \{a_1,a_2,\ldots, a_r\}$. 
An instance of infinite PCP
consists of $m$ pairs $(u_i, v_i)$ of words from $\Sigma^{*}$ 
and the question is whether there exists a finite sequence of indices 
$i_0, i_1, \ldots , i_k$ such that $u_{i_0}u_{i_1}\cdots u_{i_k}= v_{i_0}v_{i_1}\cdots v_{i_k}$. 
Although technically fundamentally distinct, our proof strategy draws inspiration from~\cite{HyperLTL1,HyperLTL2,HyperLTL3}.
The key distinction is that for HyperLTL, the first-order variables are instantiated and quantified over traces $\pi$, which are $\omega$-sequences over a suitably given alphabet.
For ${\rm TEL}_{\mathbb{N}^+}$, our first-order variables are instantiated and quantified over time-lengths, which are integers in $\mathbb{N}^+$.

To prepare for the proof, we introduce a couple of lemmas that capture more general patterns than in {\em Example 3} for word sequence correspondences.
These lemmas also help illustrate the expressive power of  ${\rm TEL}_{\mathbb{N}^+}$.

\begin{lemma}[Word Sequence Correspondence]\label{main-lemma1}
Suppose $W := \{w_i \mid i=1,\ldots, n\}$ is
a set of non-empty words over an alphabet $\Sigma$. 
 Suppose $u$ and $v$ are obtained as
concatenations $u= u_1\cdots u_{\ell}$  and $v=v_1\cdots v_m$,  where $u_i$'s and $v_j$'s are drawn from $W$.
There is a ${\rm TEL}_{\mathbb{N}^+}$ formula $\varphi$ which is satisfiable if and only if $\ell = m$ and
$u_i = v_i$ for all $i= 1,\ldots, m$. 

\end{lemma}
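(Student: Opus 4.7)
The plan is to encode the word-sequence correspondence as structural constraints on the initial segment of an $\omega$-word over the extended alphabet $\Sigma' = \Sigma \cup \{\#, \$\}$, where $\#$ separates consecutive $W$-blocks within a sequence and $\$$ terminates a sequence. Any $\omega$-word satisfying $\varphi$ will begin with a prefix of the form
\[
u_1 \# u_2 \# \cdots u_\ell \# \$ \, v_1 \# v_2 \# \cdots v_m \# \$
\]
followed by arbitrary content, with the structure of this prefix forced by $\varphi$.

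First I would build, for each $w \in W$, the block formula $\chi_w := w[1] \land (w[2])_1 \land \cdots \land (w[|w|])_{|w|-1}$ supplied by Lemma~\ref{word-sat}, and then the disjunction $\pi_W := \bigvee_{w \in W} (\chi_w \land (\#)_{|w|})$ asserting that some $W$-block followed by $\#$ starts at the current position. Validity of an entire $W$-sequence of length $x$ is then expressed by
\[
\Phi_{\text{val}}(x) := \pi_W \land \Box_x\bigl((\#) \rightarrow ((\pi_W)_1 \lor (\$)_1)\bigr) \land \Box_x(\neg \$) \land (\$)_x,
\]
which states that position 1 opens a $W$-block, every $\#$ before position $x+1$ is immediately followed by either a fresh $W$-block or by $\$$, no $\$$ occurs before position $x+1$, and $\$$ sits at position $x+1$. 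Together these clauses force positions $1$ through $x+1$ to be exactly $u_1 \# \cdots u_\ell \# \$$ for some decomposition into $W$-blocks.

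The full formula is
\[
\varphi := \exists x \left[\, \Phi_{\text{val}}(x) \land \Box_{x+1}\Bigl(\textstyle\bigwedge_{a \in \Sigma'} \bigl(a \leftrightarrow a_{x+1}\bigr)\Bigr) \right].
\]
The $\Box_{x+1}$ conjunct forces each letter at a position $j \in \{1, \ldots, x+1\}$ to equal the letter at position $j + x + 1$, so the content starting at position $x+2$ is a letter-for-letter copy of the $u$-segment, including all $\#$ separators and the terminal $\$$ at position $2x+2$. Because the $\#$ positions align across the two copies, the block structure is identical, forcing $\ell = m$ and $u_i = v_i$ for every $i$. Conversely, when those conditions hold, the $\omega$-word $u_1 \# \cdots u_\ell \# \$ \, v_1 \# \cdots v_m \# \$ \, a^{\omega}$ (for any fixed $a \in \Sigma$) satisfies $\varphi$ with $x$ set to the length of $u_1 \# \cdots u_\ell \#$.

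The main obstacle will be expressing unbounded block-structure validity without access to backward time shifts. I will sidestep this by making the validity assertion purely forward-looking: starting from $\pi_W$ at position 1 and using $\Box_x$ to propagate the condition ``every $\#$ encountered is followed either by a new $W$-block or by the terminator $\$$,'' the block structure unfolds inductively along the $u$-segment. Symmetric validity for the $v$-segment is then inherited for free from the matching constraint, so no separate $v$-side validity formula is required.
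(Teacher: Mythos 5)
Your proof is correct, and it takes a genuinely different route from the paper's. The paper marks block boundaries with the mirror alphabets $\dot{\Sigma},\ddot{\Sigma}$ and enforces the correspondence between the $u$-blocks and the $v$-blocks by an order-preserving index-matching argument: its formulas (5)--(7) use nested $\forall x\,\exists y\,\forall s\,\exists t$ implications in both directions, and correctness is then established by induction on $\ell$. You instead delimit blocks with explicit separators and collapse the entire correspondence into the single positional constraint $\Box_{x+1}\bigl(\bigwedge_{a\in\Sigma'}(a\leftrightarrow a_{x+1})\bigr)$, which forces the $v$-segment to be a verbatim, position-for-position copy of the $u$-segment (separators and terminator included), so $\ell=m$ and $u_i=v_i$ fall out immediately and $v$-side well-formedness is indeed inherited for free. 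Your $\Phi_{\mathrm{val}}(x)$ does tile positions $1$ through $x+1$ completely: the initial $\pi_W$, the propagation clause at each separator, the exclusion of the terminator before position $x+1$, and its required presence at $x+1$ leave no unparsed positions and no way for a block to overrun the terminator. What you give up relative to the paper is reusability: the copy trick works precisely because the lemma demands literal equality $u_i=v_i$, which makes the two segments identical strings of identical length, whereas the paper's heavier quantifier machinery matches blocks by index without requiring the matched blocks to be equal as strings or even of equal width --- and it is that machinery, not the lemma's literal statement, that is instantiated in the PCP undecidability proof, where the $i$-th domino and the $i$-th solution segment are different encodings of different lengths. (Both your formula and the paper's depend only on $W$, so ``satisfiable'' must be read as ``satisfied by the word encoding the given pair''; under that shared reading both arguments are sound.)
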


\begin{proof}
To prepare for the formula $\varphi$, we expand the alphabet with additional ``mirror letters'' from
$\dot{\Sigma}$  and $\ddot{\Sigma}$, where $\dot{\Sigma} := \{ \dot{a} \mid a\in \Sigma \},$
and $\ddot{\Sigma} := \{ \ddot{a} \mid a\in \Sigma \}.$
We then write $\dot{W}$ for the set of words modified  from $W$,
with the only change that the leading letter of each of its member $w_i$ is replaced by
its counterpart in $\dot{\Sigma}$, and similarly for  $\ddot{W}$.
For each $w\in W$, we write $\dot{w}$ for the corresponding member in $\dot{W}$ and, similarly,
$\ddot{w}$ for the corresponding member in $\ddot{W}$. These can also be named using subscripts as $\dot{w}_i$ and
$\ddot{w}_i$, respectively, for $i=1,\ldots, n$.
The purpose for introducing these unique leading letters (mirror designations) is for us to be able to tell the relative positioning of the words.
For example, we can enforce that all the words in $\dot{W}$ to appear before all the words in  $\ddot{W}$ easily.

The desired formula $\varphi$ will be the conjunction of two parts, expressed as $\varphi := \varphi^1\land \varphi^2$. The {\em first part}  $\varphi^1$
makes sure that the placement of words follows the pattern of $ u_1\cdots u_{\ell} \mbox{\textcent} v_1\cdots v_m \# \Sigma^{\omega}, $
where $\mbox{\textcent} $ and $\#$ are letters distinct from those in $\Sigma \cup \dot{\Sigma}\cup\ddot{\Sigma}$.
The letter $\mbox{\textcent} $ marks the ``center'' that separates $u$ and $v$, and $\#$ marks the beginning of ``don't care.''
Moreover, all the $u_i$'s are from $\dot{W}$ and all the $v_i$'s are from $\ddot{W}$.
The {\em second part} $\varphi^2$ ensures that $\ell = m$ and
$u_i = v_i$ for all $i= 1,\ldots, m$, with the mirror designation of the leading letters ignored, i.e.
 we treat $a\cong \dot{a} \cong \ddot{a}$ for each $a\in \Sigma$ when testing word equality.

To describe $\varphi^1$ and $\varphi^2$, we need some preparations for simpler reusable components.
For each word $w_i\in W$ (and similarly for  $w_i\in \dot{W}$  or  $w_i\in \ddot{W}$), 
there is a corresponding formula  $\iota^{w_i}$ specifying a matching word (reading from the beginning), defined as:
$$\iota^{w_i} :=  \sigma^i[1] \land (\sigma^i[2])_1 \land (\sigma^i[\kappa_i ])_{\kappa_i-1}$$
where $w_i = \sigma^i[1] \sigma^i[2] \cdots \sigma^i[\kappa_i ]$, and $\sigma^i[j]$ is the $j$-th letter for word $w_j$, for $i=1,2,\ldots, n$ and
$\kappa_i$ is the length of $w_i$.

A word drawn from the set $W$ can then be described as a disjunction: 
$$\displaystyle{\bigvee} \{ \iota^{w} \mid w \in W\},$$ which will be abbreviated as
$\exists W$. We define $\exists\dot{W}$ and $\exists\ddot{W}$ in similar ways.
Along the same line, we abbreviate  $\displaystyle{\bigvee}\{ a \mid a \in \Sigma \}$ as $\exists \Sigma$, 
$\displaystyle{ \bigvee} \{a \in \dot{\Sigma} \}$ as $\exists \dot{\Sigma}$ and 
 $\displaystyle{ \bigvee} \{a \in \ddot{\Sigma} \}$ as $\exists \ddot{\Sigma},$ or simply as $\Sigma$, $\dot{\Sigma}$ and
 $\ddot{\Sigma}$ when context permits without ambiguity or confusion of types. 
 
Formula $\varphi^1$ can then be defined as:

\[    
\begin{array}{lllr}
 \varphi^{1} :=  && \\
                         \exists x \exists y   (  \mbox{\textcent}_x \land \#_{x+y} \land 
                         \Box_x (\Sigma \lor  \dot{\Sigma})  \land (\Box_y  (\Sigma \lor  \ddot{\Sigma} )  )_{x+1}
 \land (\Box (\Sigma ))_{x+y+1}  && (1)\\
\land  \, \exists\dot{W} \land \Box_x \!\left(   \dot{\Sigma} \rightarrow   \bigvee_{i=1}^n (\iota^{\dot{w}_i} \land (\dot{\Sigma} \lor  \mbox{\textcent})_{\kappa_i})
  \right)      && (2)\\
\land\,   (
 \exists\ddot{W} \land 
 \Box_y \! \left(   \ddot{\Sigma}  \rightarrow   \bigvee_{i=1}^n (\iota^{\ddot{w}_i} \land (\ddot{\Sigma}  \lor  \#)_{\kappa_i})
  \right) 
 )_{x+1}  && (3)\\    
     \end{array}\]
 
The sub-formulas  for $\varphi^{1}$ are numbered to explain their properties with easier reference.
Formula (1) specifies that the $\omega$-word must consist of three parts:
\begin{enumerate}
\item From the beginning up to  position $x$, which is marked with $ \mbox{\textcent}$;
\item From position $x+1$ up to position $x+y$, which is marked with $\# $; and
\item The remaining parts after $\#$ beginning from position $(x+y+1)$.
\end{enumerate}

Formula (1) further require that for the first part, letters other than the boundary markers $ \mbox{\textcent}$
must consist of members from $\Sigma \cup \dot{\Sigma},$
as indicated in $ \Box_x (\Sigma \lor  \dot{\Sigma})$.
 For the second part, letters other than the boundary markers $ \mbox{\textcent}$ and 
and $\#$ must consist of members from $\Sigma \cup \ddot{\Sigma}, $ as indicated in the conjunct $(\Box_y  (\Sigma \lor  \ddot{\Sigma} )  )_{x+1}$.
 All letters after (the unique) $\#$ must be from $\Sigma$ only, as indicated in $(\Box (\Sigma ))_{x+y+1}$.

Formula (2) says that when inspecting the content of the first part  (up to but not including position $x$), 
it must begin with a member from $\dot{W}$. Also,
any letter in $\dot{\Sigma}$ must be the start of the matching point for another word in $\dot{W}$, ending with either
a letter from $\dot{\Sigma}$ again, or the boundary mark $\mbox{\textcent}$ (which marks the end of the first part). 

Formula (3) says that when inspecting the content of the second part (starting from position $x+1$),
it must begin with a member from $\ddot{W}$. 
Furthermore, any letter in $\ddot{\Sigma}$ must be the start of the matching point for another
 word in $\ddot{W}$, ending with either
a letter from $\ddot{\Sigma}$ again, or the boundary mark $\# $ (which marks the end of the second part). 

Note that since both the first part and the second part must begin with a matching word from $\dot{W}$
or $\ddot{W}$ (hence starting with a letter from either $\dot{\Sigma}$ or $\ddot{\Sigma}$),
we do not have to worry about how to deal with any letters in $\Sigma$ in these two parts:
 they must already be the tail part of a matching word in $\dot{W}$ or in $\ddot{W}$,
as captured by the sub-formulas $(\exists \dot{\Sigma} \lor  \mbox{\textcent})_{\kappa_i}$ and $(\exists \ddot{\Sigma} \lor  \#)_{\kappa_i}$.
These sub-formulas are abbreviated as $ (\dot{\Sigma} \lor  \mbox{\textcent})_{\kappa_i}$ and $(\ddot{\Sigma} \lor  \#)_{\kappa_i}$ in Formula (2) and (3),
respectively.

We now define  formula $\varphi^2$, which is intended to capture a 1-1 correspondence between the first part and second part, with respect to the
ordering and members drawn from $\dot{W}$ and $\ddot{W}$:

\[    
\begin{array}{lr}
 \varphi^2 : = 
 \exists z  \mbox{\textcent}_z\, \land & (4)\\
 { \displaystyle{ \bigwedge_{i = 1,\ldots, n } }}  ( \iota^{\dot{w}_i}\rightarrow \iota^{\ddot{w}_i}_{z+1} ) \land (\iota^{\ddot{w}_i}_{z+1} \rightarrow \iota^{\dot{w}_i})\,\land  & (5)\\
{ \displaystyle{ \bigwedge_{i = 1,\ldots, n } }}\forall x  \left[ 
 \iota^{\dot{w}_i}_{\!x} \rightarrow  \exists y  \left( \iota^{\ddot{w}_i}_{y+z} 
  \land { \displaystyle{ \bigwedge_{j = 1,\ldots, n } }} \forall s (\iota^{\dot{w}_j}_{x+s} \rightarrow \exists t\, \iota^{\ddot{w}_j}_{(y+z)+t})\right)  \right] \, \land ~~~~~~~~~~ & (6) \\
\\
 { \displaystyle{ \bigwedge_{i = 1,\ldots, n } }}\forall x  \left[ 
   \iota^{\ddot{w}_i}_{x+z}  \rightarrow \exists y \left( \iota^{\dot{w}_i}_{\!y}
  \land   { \displaystyle{ \bigwedge_{j = 1,\ldots, n } }} \forall s (\iota^{\ddot{w}_j}_{(x+z)+s} \rightarrow \exists t\, \iota^{\dot{w}_j}_{t+y})\right)  \right]  & (7) \\
  \end{array}\]
 
 Formula (6) is designed to capture the property that for any consecutive (as we can choose how to instantiate universally quantified variables $x$ and $s$) 
 pair of words from $\dot{W}$, there are matching words from $\ddot{W}$ with the same sequential position (wrt before and after), consecutive or not.
 When instantiating all the {\em consecutive} pairs for the $u$-part to matching words from $\ddot{W}$ with the same relative sequential positioning, we can see that
 the $v$-part must have as many members as the $u$-part.
 
 Similarly, formula (7) captures the reverse direction, ensuring that the $u$-part must involve at least as many members as the $v$-part.
 Working together, (6) and (7) would result in a desired 1-1 correspondence between the $u$-part and the $v$-part.
  
 We prove by mathematical induction on $\ell$, that $\varphi^1\land \varphi^2$ is satisfiable if and only if  $\ell = m$ and
$u_i = v_i$ for all $i= 1,\ldots, m$. By the property of $\varphi^1$, we know that the $\omega$-word $\alpha$ is already of the form
 $ u_1\cdots u_{\ell} \mbox{\textcent} v_1\cdots v_m \# \Sigma^{\omega}.$

For the induction basis $\ell =1$, suppose $(\alpha, 1)\models \varphi^1\land \varphi^2$. Because $(\alpha, 1)\models \varphi^1$, we have
$\alpha =  u_1 \mbox{\textcent} \# \Sigma^{\omega}$, $\alpha =  u_1 \mbox{\textcent} v_1 \# \Sigma^{\omega}$, or
$\alpha =  u_1 \mbox{\textcent} v_1 v_2 \cdots v_m \# \Sigma^{\omega}$. By the requirement specified in (5) and (7), only the second case is possible, with 
$u_1 = v_1$.

For the inductive step, suppose the desired statement is true up to $\ell = \kappa$. For $\ell = \kappa+1$,
suppose $\alpha = u_1\cdots u_{\kappa} u_{\kappa +1}\mbox{\textcent} v_1\cdots v_m \# \Sigma^{\omega}.$
We show that $(\alpha, 1) \models \varphi^2$  if and only if  $\kappa+1 =m$ and  $u_i = v_i$ for all $i= 1,\ldots, m$.
There are two directions. If $\kappa+1 =m$ and  $u_i = v_i$ for all $i= 1,\ldots, m$, then clearly $(\alpha, 1)\models \varphi^2$.
Suppose, on the other hand, that $(\alpha, 1)\models \varphi^2$, with $\alpha = u_1\cdots u_{\kappa} u_{\kappa +1}\mbox{\textcent} v_1\cdots v_m \# \Sigma^{\omega}.$
For this to hold, we must have $u_1 = v_1$ because of formula (5).
To invoke the induction hypothesis, we show that $\beta = u_2 \cdots u_{\ell} u_{\ell +1}\mbox{\textcent} v_2\cdots v_m \# \Sigma^{\omega}$ still satisfies $\varphi^2$,
i.e., $(\beta,1)\models \varphi^2$.
Because $(\alpha, 1)\models \varphi^2$, formula (6)  ensures that $\kappa +1 \leq m$ and formula (7) ensures that $m\leq \kappa +1$.
Also because $u_1 = v_1$, we must have $u_2 = v_2$, by invoking formulas (6) and (7). 
Now using the induction hypothesis, we have $\ell +1 = m$ and $u_i = v_i$ for all $i= 2,\ldots, m$.
\end{proof}

\begin{lemma}\label{main-lemma2}
Given a set of (non-empty) words $W := \{w_i \mid i=1,\ldots, n\}$, and two words obtained by 
sequences $u= u_1\cdots u_{\ell}$  and $v=v_1\cdots v_m$  where $u_i$'s and $v_j$'s are drawn from $W$,
there is a ${\rm TEL}_{\mathbb{N}^+}$ formula $\varphi$ which is satisfiable if and only if $\ell = m$. 
We use the same convention to mark the beginning of a word using underlined letters. 
\end{lemma}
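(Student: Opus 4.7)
The plan is to adapt the proof of Lemma~\ref{main-lemma1} by retaining the structural scaffold and weakening the pairing requirement from word-for-word equality to cardinality equality. I would reuse the extended alphabet $\Sigma\cup\dot{\Sigma}\cup\ddot{\Sigma}\cup\{\mbox{\textcent},\#\}$ together with the structural formula $\varphi^1$ verbatim, so that every satisfying $\omega$-word is already forced into the shape $u_1\cdots u_\ell \mbox{\textcent} v_1\cdots v_m \#\Sigma^\omega$ with $u_i\in \dot{W}$ and $v_j\in \ddot{W}$. What needs to be rewritten is the pairing formula $\varphi^2$: Lemma~\ref{main-lemma1} required $u_i = v_i$ for all $i$, but here we only need $\ell = m$.

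The most natural weakening is to replace the word-type-specific matchers $\iota^{\dot{w}_i}$ and $\iota^{\ddot{w}_i}$ in clauses (5)--(7) of Lemma~\ref{main-lemma1} by the generic ``some word starts here'' atoms $\dot{D} := \bigvee_{a\in\dot{\Sigma}} a$ and $\ddot{D} := \bigvee_{a\in\ddot{\Sigma}} a$. A direct substitution, however, turns out to be too weak, because the existential witness $y$ can collapse many dotted starts to a single double-dotted start and thereby admit $\ell\neq m$. My plan is therefore to strengthen the pairing by matching \emph{consecutive} word-starts on each side, using $\Box$ to enforce the ``no start in between'' condition. Setting
\[
\mathsf{cd}(x,s) := \dot{D}_x \land \dot{D}_{x+s} \land (\Box_{s-1} \neg \dot{D})_{x+1}, \quad \mathsf{cdd}(y,t) := \ddot{D}_{y} \land \ddot{D}_{y+t} \land (\Box_{t-1} \neg \ddot{D})_{y+1},
\]
I would let $\varphi^2$ be the conjunction of a first-start anchor $\dot{D}\land \ddot{D}_{z+1}$, a last-start anchor forcing the rightmost dotted letter to be immediately followed by $\mbox{\textcent}$ and the rightmost double-dotted letter by $\#$, the forward pairing $\forall x\,\forall s\,(\mathsf{cd}(x,s) \to \exists y\,\exists t\,\mathsf{cdd}(y+z,t))$, and its symmetric backward counterpart with the roles of $\mathsf{cd}$ and $\mathsf{cdd}$ swapped.

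With these building blocks, the induction on $\ell$ would run as in Lemma~\ref{main-lemma1}: the base case $\ell=1$ forces $m=1$ via the first- and last-start anchors (which together leave no room for interior consecutive pairs); the inductive step strips $u_1$ and $v_1$, observes that the residual $\omega$-word still satisfies the shifted $\varphi^1\land \varphi^2$, and invokes the hypothesis to conclude $\ell-1 = m-1$. The principal obstacle I anticipate is verifying that the consecutive-pair clauses genuinely exclude many-to-one collapses once the matchers are word-type-agnostic; concretely, the existential witnesses $y, t$ must be pinned down to the correct rank on the opposite side via the interplay of the forward, backward, and anchor clauses, and this rank-alignment is exactly the delicate combinatorial step. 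Should the consecutive-pair formulation prove insufficient on its own, a fallback is to adjoin a fresh ``tally'' letter $\star$ whose placement is forced by an augmented $\varphi^1$ to occur exactly once between consecutive words on both the dotted and double-dotted sides, and then to invoke Lemma~\ref{main-lemma1} on the singleton vocabulary $W' = \{\star\}$, thereby reducing the count-matching problem to the already-solved word-matching problem on a unary alphabet.
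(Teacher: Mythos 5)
Your overall plan --- keep $\varphi^1$ verbatim and weaken only the pairing formula $\varphi^2$ from word-identity to mere word-occurrence --- is exactly the paper's strategy. The paper implements it by introducing generic ``is-a-word'' matchers $\chi\dot{W}$ and $\chi\ddot{W}$ (a disjunction over all $i$ of ``$\dot{w}_i$ matches here and is properly terminated'') and then reusing clauses (4)--(7) of Lemma~\ref{main-lemma1} with the type-specific targets $\iota^{\ddot{w}_i}$ replaced by $\chi\ddot{W}$ and, symmetrically, $\iota^{\dot{w}_i}$ by $\chi\dot{W}$; under $\varphi^1$ these matchers coincide with your atoms $\dot{D}$ and $\ddot{D}$ at the relevant positions, so the ``direct substitution'' you dismiss as too weak is essentially what the paper does. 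Your worry about a single existential witness serving many antecedents is a fair one, but the repair you propose does not address it: the clause $\forall x\,\forall s\,(\mathsf{cd}(x,s)\to\exists y\,\exists t\,\mathsf{cdd}(y+z,t))$ places no constraint tying $(y,t)$ to $(x,s)$, so one consecutive double-dotted pair witnesses every consecutive dotted pair. Concretely, take $W=\{w_1\}$ with $w_1$ a single letter, $\ell=3$, $m=2$: there are two consecutive start-pairs on the dotted side and one on the double-dotted side, and your forward clause, backward clause, and both anchors are all satisfied, so your $\varphi^2$ admits $\ell\neq m$. This is strictly weaker than the paper's clauses, which at least nest the quantifiers as $\forall x\,\exists y\,\forall s\,\exists t$ with $t\geq 1$, propagating an order constraint from each start onward. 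You correctly identify rank alignment as ``the delicate combinatorial step,'' but flagging it is not closing it; as stated, the primary construction does not prove the lemma.

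The fallback --- interleave a tally letter $\star$ between consecutive words on both sides and invoke Lemma~\ref{main-lemma1} on the singleton vocabulary $\{\star\}$ --- is an elegant idea in principle, since with all words identical the conclusion of Lemma~\ref{main-lemma1} degenerates to exactly $\ell=m$. But it cannot be invoked off the shelf: the $\varphi^1$ of Lemma~\ref{main-lemma1} forces the segment before $\mbox{\textcent}$ to consist \emph{only} of concatenated vocabulary words, whereas your $\star$'s would be scattered among the genuine $W$-words, so you would have to re-derive a counting version of $\varphi^2$ for non-contiguous occurrences --- which is precisely the content of the lemma you are trying to prove. To align with the paper, take formulas (4)--(7) of Lemma~\ref{main-lemma1}, substitute $\chi\ddot{W}$ for $\iota^{\ddot{w}_i}$ in the consequents of (5)--(6) and $\chi\dot{W}$ for $\iota^{\dot{w}_i}$ in (7), and run the same induction on $\ell$.
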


\begin{proof} The general proof strategy is similar to that for Lemma~\ref{main-lemma1}. We just need to modify the 
more strict word equality checking among $w_i$, $\dot{w}_i$ and $\ddot{w}_i$ to an ``any word'' setting.
With the same coding for the leading letters as we used for Lemma~\ref{main-lemma1}, we provide a formula to specify the property for 
``is-a-word'' from $\dot{W}$, or $\ddot{W}$. This requires us to check that the letter following a proper word satisfies  
$\exists \dot{\Sigma} \lor  \mbox{\textcent}$ for the $u$-part and $\exists \ddot{\Sigma} \lor  \#$ for the $v$-part.
With this strategy, we define 
$$\chi\dot{W} :=  \bigvee_{i=1}^n (\iota^{\dot{w}_i} \land (\exists (\dot{\Sigma} \lor  \mbox{\textcent}))_{\kappa_i}), $$
$$\chi\ddot{W} :=  \bigvee_{i=1}^n (\iota^{\ddot{w}_i} \land  (\exists (\ddot{\Sigma} \lor  \# ))_{\kappa_i} ).$$

\[    
\begin{array}{lr}
 \varphi^2 : = \exists z  \mbox{\textcent}_z \, \land & (8)\\
 { \displaystyle{ \bigwedge_{i = 1,\ldots, n } }}  ( \iota^{\dot{w}_i}\rightarrow (\chi \ddot{W})_{z+1} ) \land (\iota^{\ddot{w}_i}_{z+1} \rightarrow \chi\dot{W} )\, \land  & (9)\\
{ \displaystyle{ \bigwedge_{i = 1,\ldots, n } }}\forall x  \left[ 
 \iota^{\dot{w}_i}_{\!x} \rightarrow  \exists y  \left( (\chi \ddot{W})_{y+z} 
  \land { \displaystyle{ \bigwedge_{j = 1,\ldots, n } }} \forall s (\iota^{\dot{w}_j}_{x+s} \rightarrow \exists t (\chi \ddot{W})_{(y+z)+t})\right)  \right]  \, \land & (10) \\
\\
 { \displaystyle{ \bigwedge_{i = 1,\ldots, n } }}\forall x  \left[ 
   \iota^{\ddot{w}_i}_{x+z}  \rightarrow \exists y \left( (\chi \dot{W})_{y}
  \land   { \displaystyle{ \bigwedge_{j = 1,\ldots, n } }} \forall s (\iota^{\ddot{w}_j}_{(x+z)+s} \rightarrow \exists t (\chi \dot{W})_{t+y})\right)  \right]  & (11) \\
  \end{array}\]

With the above formulas (8) - (11), we ensure that the $u$-part and $v$-part consist of the same number of words from $W$, irrespective of their specific contents.
\end{proof}

	\begin{thm}
 		The satisfiability of ${\rm TEL}_{\mathbb{N}^+}$ formula is undecidable.
 	\end{thm}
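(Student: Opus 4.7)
The plan is to complete the reduction from Post's Correspondence Problem (PCP) announced in the excerpt: given pairs $(u_i, v_i)$ over an alphabet $A$ for $i=1,\ldots,n$, I would build a TEL$_{\mathbb{N}^+}$ formula $\varphi$ that is satisfiable iff there exist indices $i_1,\ldots,i_k$ with $u_{i_1}\cdots u_{i_k} = v_{i_1}\cdots v_{i_k}$. Undecidability of PCP then transfers immediately to satisfiability of TEL$_{\mathbb{N}^+}$, and Lemmas~\ref{main-lemma1} and~\ref{main-lemma2} already supply the bulk of the needed encoding machinery.

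First I would extend $\Sigma$ with index markers $\hat{1},\ldots,\hat{n}$ and reuse the separators $\mbox{\textcent},\#$ of the preceding lemmas. A candidate solution is encoded as the $\omega$-word
$$\hat{i}_1 u_{i_1}\,\hat{i}_2 u_{i_2}\,\cdots\,\hat{i}_k u_{i_k}\,\mbox{\textcent}\,\hat{i}_1 v_{i_1}\,\hat{i}_2 v_{i_2}\,\cdots\,\hat{i}_k v_{i_k}\,\#\,\Sigma^{\omega},$$
so each sub-word carries its own index as a distinguished leading letter, paralleling the mirror-letter device used for Lemma~\ref{main-lemma1}. The formula $\varphi$ is then the conjunction of four pieces: (a) a layout piece, modeled on $\varphi^{1}$ of Lemma~\ref{main-lemma1}, that fixes the $\mbox{\textcent}/\#$ skeleton and confines each region to its allowed letter classes; (b) a well-formedness piece, using the $\iota^{w}$-templates of Lemma~\ref{main-lemma1}, asserting that the $u$-side is a concatenation of words from $W^u := \{\hat{i} u_i\}_{i\le n}$ and the $v$-side a concatenation from $W^v := \{\hat{i} v_i\}_{i\le n}$; (c) an index-synchronization piece, in the spirit of formulas (6)--(7) in the proof of Lemma~\ref{main-lemma1}, imposing a $1$-$1$ correspondence between the $j$-th $\hat{i}$-marker on the two sides and forcing their indices to agree; and (d) a content-equality piece asserting that, once all markers are stripped, the two streams of $A$-letters coincide.

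Correctness in the forward direction is immediate: a genuine PCP solution gives a satisfying $\omega$-word directly. For the reverse direction, (b) and (c) pin down a common index sequence $i_1,\ldots,i_k$ from any model, and (d) then forces $u_{i_1}\cdots u_{i_k} = v_{i_1}\cdots v_{i_k}$, exhibiting a PCP solution. The hard step will be (d): because $|u_i|$ and $|v_i|$ generally differ, the $p$-th $A$-letter on the two sides sits at different absolute positions and no single fixed shift aligns them. The plan is to exploit that each template $\iota^{w_i}$ has a constant length $\kappa_i$ known at formula-construction time, and to use $\forall x$ to range over block-start positions on the $u$-side, invoke (c) to existentially witness the matching block start $y$ on the $v$-side, and then, via a finite case analysis over $(i,j)\in\{1,\ldots,n\}^{2}$ and $a\in A$, express the required within-block letter coincidences $a_{x+r}\leftrightarrow a_{y+r'}$ through TEL's addition of time-length terms; chained across consecutive blocks, these coincidences collapse to pointwise equality of the two full concatenations. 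The main care will be to stay entirely within the TEL$_{\mathbb{N}^+}$ syntax, without invoking any comparison predicate $s\le t$ on time terms (not part of the logic) or any construct beyond those already validated in Section~\ref{lan}.
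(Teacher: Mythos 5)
Your overall architecture --- the $\mbox{\textcent}/\#$ layout skeleton, block well-formedness via the $\iota^{w}$-templates, and an order-preserving $1$-$1$ correspondence between blocks on the two sides obtained from the machinery of Lemma~\ref{main-lemma1} --- matches the paper's pieces, and the forward direction is indeed immediate. The gap is in your step (d), which is exactly the crux of the reduction. In a PCP solution one does \emph{not} have $u_{i_j}=v_{i_j}$ blockwise; only the full concatenations agree. The $p$-th letter of $u_{i_1}\cdots u_{i_k}$ lies at some offset $r$ inside block $j$ on the $u$-side, while the same $p$-th letter of $v_{i_1}\cdots v_{i_k}$ lies at an offset $r'$ inside a generally \emph{different} block $j'$ on the $v$-side; which block and which offset depend on the running discrepancy $\sum_{l<j}\bigl(|u_{i_l}|-|v_{i_l}|\bigr)$, which is unbounded and is not a function of the pair of block types. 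Hence no finite case analysis over $(i,j)\in\{1,\ldots,n\}^{2}$ with fixed offsets $r,r'$ can express the required letter coincidences, and ${\rm TEL}_{\mathbb{N}^+}$ offers no order comparison, subtraction, or ``count the non-marker letters before $x$'' facility with which to locate the correct target position. The within-block coincidences you describe, taken between the $j$-th blocks of the two sides, would instead certify $u_{i_j}=v_{i_j}$ for every $j$ --- a trivially decidable condition --- so the reduction as sketched does not establish undecidability.

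The paper avoids this by never comparing the two stripped streams at misaligned positions. It works over a product (two-track) alphabet: each pair $(u_i,v_i)$ becomes a single domino of width $\max(|u_i|,|v_i|)$ padded with $\boldsymbol{\cdot}$, and a separate ``solution string'' region is introduced after $\mbox{\textcent}$ in which the top track spells $u_{i_1}\cdots u_{i_k}$ and the bottom track spells $v_{i_1}\cdots v_{i_k}$ \emph{at the same positions}; equality of the two concatenations then reduces to the purely local constraint that every column of that region is of the form $\binom{0}{0}$ or $\binom{1}{1}$ up to decoration (formula (3) of $\varphi^1$). The word-sequence-correspondence machinery is then applied twice, once ($\varphi^{2\tau}$) to tie the domino sequence to the top-track segmentation of the solution string and once ($\varphi^{2\lambda}$) for the bottom track; this is where your pieces (b) and (c) reappear. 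To salvage your marker-based encoding you would need an analogous device that co-locates the two concatenations; as written, step (d) has no workable realization in the logic.
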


 	\begin{proof}
We provide a proof by representing instances of binary Post Correspondence Problem (PCP) as  ${\rm TEL}_{\mathbb{N}^+}$ formulas.
Binary PCP is a special version of the general PCP 
where the alphabet consists of two letters, $0$ and $1$.
The binary PCP is to decide,  given an instance of $m$ pairs $(u_i, v_i)$ ($i=1,2,\ldots, m$) of non-empty binary strings from $\{0,1\}$, 
whether there exists a finite sequence of indices 
$ i_1, i_2, \ldots, i_k$ such that $u_{i_1}u_{i_2}\cdots u_{i_k} = v_{i_1}v_{i_2}\cdots v_{i_k}$. 

Given such an instance of binary PCP, we consider the alphabet for ${\rm TEL}_{\mathbb{N}^+}$ 
to be $(\Sigma \cup \dot{\Sigma} \cup \ddot{\Sigma}\cup \{ \boldsymbol{\cdot} \})^2 \cup \{\mbox{\textcent}, \# \}$, 
where $\Sigma = \{ 0, 1\}$, $\dot{\Sigma} = \{\dot{0}, \dot{1}\}$, and $ \ddot{\Sigma} = \{ \ddot{0}, \ddot{1}\}.$
Each instance of the binary PCP can be represented as a set of ``dominos,'' {\em with the starting letter marked using members in $\{\dot{0},\dot{1}\}$.}
To make the domino metaphor visually intuitive, we represent a member $(i,j)$ in $(\Sigma \cup \dot{\Sigma} \cup \ddot{\Sigma}\cup \{ \boldsymbol{\cdot} \})^2$ as $\binom{i}{j}$.
For example, the PCP instance $(100, 1), (0,100), (1,00)$ can be represented as
$\binom{\dot{1}}{\dot{1}}\binom{{0}}{\boldsymbol{\cdot}}\binom{{0}}{\boldsymbol{\cdot}},$ 
$\binom{\dot{0}}{\dot{1}}\binom{\boldsymbol{\cdot}}{0}\binom{\boldsymbol{\cdot}}{{0}},$ and
$\binom{\dot{1}}{\dot{0}}\binom{\boldsymbol{\cdot}}{{0}}.$ 		

 We capture a solution to a PCP instance using a number of ${\rm TEL}_{\mathbb{N}^+}$ formulas to specify all the constraints that such a solution must satisfy, and vice versa:
 constructing a PCP solution from a satisfying formula.
 Our general strategy is to use the integer line to capture a PCP solution in two parts.
 The {\em first part} represents the actual domino placement, involving possibly the  $\boldsymbol{\cdot}$ symbol in the dominos. 
  We encode each input pair $(u_i, v_i)$ as a domino of  width $\kappa_i$, with $\kappa_i$ the larger length of  $u_i$ or $v_i$. To  indicate clearly the different parts, 
 we use $\mbox{\textcent}$ to mark the end of the first part. 
 The {\em second part} represents the ``solution string,'' a binary string which has the upper part and the lower part,
 differentiated from each other only with the possibility of the placement (of an equal number) of letters in $\{\ddot{0}, \ddot{1}\}$.
We use  $\#$ to
 mark the end of the second part. 
  We do not care what follows after $\#$.

 1. $\varphi^1$: {\em All domino tiles must be presented from the beginning until
 the appearance of $\mbox{\textcent}$, as specified in \mbox{\rm  (1) and (2)}.
 When concatenated together, the top row and bottom row of binaries in the second part, between 
 $\mbox{\textcent}$  and   the first appearance of $\#$, must be equal when treating  $0=\ddot{0}$ and $1=\ddot{1}$, as specified in \mbox{\rm (3)}.} 
For a templated formula representing a domino,
  we have, for each $d_i := (u_i,v_i) : = \sigma_i[1]\sigma_i[2]\cdots \sigma_i[\kappa_i ]$, where $\sigma_i[j]$'s are the corresponding letters of the $i$-th 
  domino encoded using the specified set of symbols in $\Sigma\cup\dot{\Sigma}$. 
 As used in prior lemmas, let $\iota^{d_i}$ be the corresponding formula specifying domino $d_i$. As usual, we abbreviate
 $\bigvee_{i=1}^m \{ \iota^{d_i}\}$  as $\exists \Delta$, where $\Delta =\{ d_i \mid i= 1,\ldots, m\}$.
 
 \[    
\begin{array}{lllr}
 \varphi^{1} :=
                         \exists x \exists y   (  \mbox{\textcent}_x \land \#_{x+y}) \land &&  \\
                         \Box_x ((\Sigma\cup\{\boldsymbol{\cdot}\})^2 \cup\dot{\Sigma}^2)  \land (\Box_y  ( (\Sigma \cup \ddot{\Sigma})^2)_{x+1}
 \land (\Box \# )_{x+y+1} \,\land  && (1)\\
 \exists\Delta \land \Box_x \!\left(   \dot{\Sigma}^2 \rightarrow   \bigvee_{i=1}^m (\iota^{\dot{d}_i} \land (\dot{\Sigma^2} \lor  \mbox{\textcent})_{\kappa_i})
  \right) \,\land     && (2)\\
   \Box_{y} \left[ \bigvee_{i,j\in \{ 0,\ddot{0}\} } \binom{i}{j}  \lor \bigvee_{i,j\in \{ 1,\ddot{1}\} } \binom{i}{j} 
 \right]_{x+1} && (3)\\    
    \end{array}\]
 
Note that in (2), the $\Box_x \!\left(   \dot{\Sigma}^2 \rightarrow   \bigvee_{i=1}^m (\iota^{\dot{d}_i} \land (\dot{\Sigma^2} \lor  \mbox{\textcent})_{\kappa_i})
  \right)$ component plays a critical role in ensuring that the ``dominos'' are stacked against each other sequentially, after the placement of the leading domino 
($\exists\Delta$),  without any content that is not a part of a domino 
  delineated by the coded symbols in $\dot{\Sigma}^2$, specifically representing the leading letter of a domino, or until we hit $\mbox{\textcent}$.

 2. $\varphi^2$: {\em Moreover, the binary string captured in component (3) of $\varphi^1$  must be equal to the top row of binaries concatenated from beginning to 
  $\mbox{\textcent}$, with all ${\boldsymbol{\cdot}}$ occurrences ignored (captured by $\varphi^{2\tau}$ below).
  Similarly, this binary string must be equal to the bottom row of binaries concatenated together, from the beginning up to 
  $\mbox{\textcent}$,
with all ${\boldsymbol{\cdot}}$ occurrences ignored (captured by $\varphi^{2\lambda}$ below).}

We introduce some auxiliary formulas in preparation for  $\varphi^{2}$.
For each domino tile $d_i := (u_i,v_i) : = \sigma_i[1]\sigma_i[2]\cdots \sigma_i[\kappa_i ]$ ($i=1,2,\ldots, m$),
we define the corresponding upper binary string as  
$$\vartheta^i := \{ \theta_1\theta_2\cdots \theta_{\nu_i} \mid \theta_j =  \lceil \sigma^i[j] \rceil, j=1,\ldots, \nu_i\},$$
 where $\lceil \sigma^i[j] \rceil $ represents the symbol on top of the corresponding
domino's $j$-th component, and $\nu_i$ is the last position of $d^i$ with $\lceil \sigma^i[j] \rceil \not = \boldsymbol{\cdot}.$
Note that $\nu_i$ can be smaller than $\kappa_i$, the latter being the length if the domino tile.
By design, we have  $\lceil \sigma^i[1] \rceil \in \dot{\Sigma}$, for all $i = 1,\ldots, m$.
When our focus is the top row of the ``solution,''
we recognize any disjunct in the following $\tau^i$ to be a matching segment in the second part of the overall placement, i.e.
$$\tau^i := {\displaystyle{ \bigvee_{b_j \in \Sigma\cup\ddot{\Sigma}, j=1,\ldots, \nu_i}} }
\{ \binom{\ddot{a_1}}{b_1}\binom{a_2}{b_2}_{\!\!1} \cdots \binom{a_{\nu_i}}{b_{\nu_i}}_{\!\!\nu_i-1}  \mid  \dot{a_1}a_2\cdots a_{\nu_i} = \vartheta^i, a_1\in \Sigma \}. $$

With these preparations, we are ready to define  $\varphi^{2\tau}$:

\[    
\begin{array}{lr}
 \varphi^{2\tau} : = \exists z\,   \mbox{\textcent}_z \land & (4)\\
 { \displaystyle{ \bigwedge_{i = 1,\ldots, m } }}  (  \iota^{d^i}\rightarrow \tau^{i}_z ) \land (\tau^{i}_z \rightarrow \iota^{d^i})~~\land  & (5)\\
{ \displaystyle{ \bigwedge_{i = 1,\ldots, m } }}\forall x  \left[ 
  \iota^{d^i}_{\!x} \rightarrow  \exists y  \left( \tau^{i}_{y+z} 
  \land { \displaystyle{ \bigwedge_{j = 1,\ldots, m } }} \forall s ( \iota^{d^j}_{x+s} \rightarrow \exists t\, \tau^{j}_{y+t+z})\right)  \right]  ~~ \land ~~~~~~~ & (6) \\
\\
 { \displaystyle{ \bigwedge_{i = 1,\ldots, m } }}\forall x  \left[ 
   \tau^{i}_{x+z}  \rightarrow \exists y \left(  \iota^{d^i}_{\!y}
  \land   { \displaystyle{ \bigwedge_{j = 1,\ldots, m } }} \forall s (\tau^{j}_{x+s+z} \rightarrow \exists t \, \iota^{d^j}_{t+y})\right)  \right]  & (7) \\
  \end{array}\]

 Similarly, for $\varphi^{2\lambda}$, we define the lower binary string as  
$$\rho^i := \{ \zeta_1\zeta_2\cdots \zeta_{\nu_i} \mid \zeta_j =  \lfloor \sigma^i[j] \rfloor, j=1,\ldots, \nu_i\},$$
 where $\lfloor \sigma^i[j] \rfloor $ represents the lower symbol of the corresponding
domino's $j$-th component, and $\nu_i$ is the last position of $d^i$ with $\lfloor \sigma^i[j] \rfloor \not = \boldsymbol{\cdot}.$
To focus on the bottom row of the ``solution,''
we recognize any disjunct in the following $\lambda^i$ to be a matching segment in the second part of the overall placement, i.e.
$$\lambda^i := {\displaystyle{ \bigvee_{a_j \in \Sigma\cup\ddot{\Sigma}, j=1,\ldots, \nu_i}} }
\{ \binom{a_1}{\ddot{b}_1}\binom{a_2}{b_2}_{\!\!1} \cdots \binom{a_{\nu_i}}{b_{\nu_i}}_{\!\!\nu_i-1}  \mid  \dot{b}_1b_2\cdots b_{\nu_i} = \rho^i, b_1\in \Sigma \}. $$
We can now define  $\varphi^{2\lambda}$:

\[    
\begin{array}{lr}
 \varphi^{2\lambda} : = \exists z\,   \mbox{\textcent}_z \land & \\
 { \displaystyle{ \bigwedge_{i = 1,\ldots, m } }}  (  \iota^{d^i}\rightarrow \lambda^{i}_z ) \land (\lambda^{i}_z \rightarrow \iota^{d^i})~~\land  & (8)\\
{ \displaystyle{ \bigwedge_{i = 1,\ldots, m } }}\forall x  \left[ 
  \iota^{d^i}_{\!x} \rightarrow  \exists y  \left( \lambda^{i}_{y+z} 
  \land { \displaystyle{ \bigwedge_{j = 1,\ldots, m } }} \forall s ( \iota^{d^j}_{x+s} \rightarrow \exists t\, \lambda^{j}_{y+t+z})\right)  \right]  ~~ \land ~~~~~~~ & (9) \\
\\
 { \displaystyle{ \bigwedge_{i = 1,\ldots, m } }}\forall x  \left[ 
   \lambda^{i}_{x+z}  \rightarrow \exists y \left(  \iota^{d^i}_{\!y}
  \land   { \displaystyle{ \bigwedge_{j = 1,\ldots, m } }} \forall s (\lambda^{j}_{x+s+z} \rightarrow \exists t \, \iota^{d^j}_{t+y})\right)  \right]  & (10) \\
  \end{array}\]
 
By instantiating the Word Sequence Correspondence Lemma (Lemma~\ref{main-lemma1}), we 
conclude that the formula $\varphi: = \varphi^1\land  \varphi^{2\tau} \land \varphi^{2\lambda}$ is satisfiable if and only if the 
Post Correspondence instance $\Delta$ has a solution.
\end{proof}
  
\section{Expressiveness}
\label{expressiveness}

We demonstrate the expressive power of ${\rm TEL}_{\mathbb{N}^+}$ through three exercises. 
One is to fully embed LTL~\cite{pnueli1977temporal} in ${\rm TEL}_{\mathbb{N}^+}$;
the second is its relationship with B\"{u}chi automata; and the third is to
represent Allen's temporal relations as logical constructs in ${\rm TEL}_{\mathbb{M}^+}$  through the set up of TCL~\cite{zhang2022temporal}.

\subsection{${\rm\bf TEL}_{\mathbb{N}^+}$ and Linear Temporal Logic}
Our main reference for LTL is~\cite{onetheorem}. For a gentler introduction, please refer to this textbook~\cite{tltextbook}.
We adjust the set up for atomic propositions so it is at the same level of abstraction as our ${\rm TEL}_{\mathbb{N}^+}$ set up.
LTL formulas over an alphabet $\Sigma$ is defined as:
$$ \varphi ::=  a \mid \neg a \mid \neg\varphi \mid \varphi \wedge \varphi \mid \varphi\vee\varphi \mid {\sf X}\varphi \\
	      \mid  \; {\sf F}\varphi \mid {\sf G}\varphi \mid \varphi {\sf U}\varphi  \mid \varphi {\sf W} \varphi \mid \varphi {\sf M} \varphi \mid \varphi {\sf R}\varphi
$$
\noindent where $a \in \Sigma$.

The satisfaction relation $\models$ between $\omega$-words $\alpha=\sigma[1]\sigma[2]\cdots$ at a time point $i$ over $\Sigma$ and formulas is defined as:
\[\begin{array}[t]{lclclcl}
(\alpha, i) \models  a  & \mbox{ if } & a = \alpha [i] \\
(\alpha, i) \models \neg a & \mbox{ if } & a \not = \alpha [i] \\
(\alpha, i) \models   \varphi \wedge \psi & \mbox{ if } & (\alpha, i)  \models \varphi \text{ and } (\alpha, i)  \models \psi\\
(\alpha, i)  \models  \varphi \vee \psi & \mbox{ if  } & (\alpha, i)  \models \varphi \text{ or } (\alpha, i)  \models \psi\\
(\alpha, i) \models   {\sf X} \varphi & \mbox{ if  } & (\alpha, i+1)  \models \varphi\\
(\alpha, i) \models   {\sf F} \varphi & \mbox{ if } & \exists k \, (\alpha, i+k)  \models \varphi\\
(\alpha, i) \models  {\sf G} \varphi & \mbox{ if } & \forall k \, (\alpha, i+k) \models \varphi\\
(\alpha, i) \models  \varphi {\sf U} \psi & \mbox{ if } & \exists k \, (\alpha, i+k) \models \psi ~\text{ and }~ \forall j < k \, (\alpha, i+j) \models \varphi \\
(\alpha, i) \models  \varphi {\sf W} \psi & \mbox{ if } & (\alpha, i) \models {\sf G}\varphi ~\text{ or }~ (\alpha, i) \models \varphi {\sf U} \psi \\
(\alpha, i)  \models \varphi {\sf M} \psi & \mbox{ if } & \exists k \, (\alpha, i+k) \models \varphi ~\text{ and }~ \forall j \leq k \, (\alpha, i+j) \models \psi \\
(\alpha, i)  \models \varphi {\sf R} \psi & \mbox{ if } & (\alpha, i) \models {\sf G}\psi ~\text{ or }~ (\alpha, i) \models \varphi {\sf M} \psi \\
\end{array}\]

The standard reading of the temporal operators are: ${\sf X}$ for next, ${\sf F}$ for eventually, ${\sf G}$ for always,
${\sf U}$ for until, ${\sf W}$ for weak until, ${\sf R}$ for release, and ${\sf M}$ for strong release. 

\begin{prop}
All LTL temporal operators defined above, ${\sf X}$, ${\sf F}, {\sf G}, {\sf U}, {\sf W}, {\sf M}$, and  ${\sf R}$ can be equivalently defined in
${\rm TEL}_{\mathbb{N}^+}$. 
\end{prop}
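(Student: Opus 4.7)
The plan is to exhibit, for each LTL temporal operator, an explicit ${\rm TEL}_{\mathbb{N}^+}$ formula built only from the time-shifted subscript $\varphi_t$, the bounded modalities $\Box_t,\Diamond_t$, and first-order quantification over time-lengths, and then verify equivalence by directly unwinding Definition~\ref{teln}. The guiding observation is that ${\sf X}\varphi$ is literally $\varphi_1$, while ${\sf F}$ and ${\sf G}$ correspond to \emph{reflexive} closures of the existential and universal length-quantifications: because ${\rm TEL}_{\mathbb{N}^+}$ quantifies over strictly positive integers, the ``current position'' case ($k=0$ in the LTL clauses) has to be disjoined (resp. conjoined) in by hand.

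Concretely, I would propose the translations
\[
\begin{array}{rcl}
{\sf X}\varphi & := & \varphi_{1},\\
{\sf F}\varphi & := & \varphi \lor \exists x\, \varphi_x,\\
{\sf G}\varphi & := & \varphi \land \forall x\, \varphi_x,\\
\varphi\,{\sf U}\,\psi & := & \psi \lor \exists x\,(\Box_x \varphi \land \psi_x),\\
\varphi\,{\sf W}\,\psi & := & {\sf G}\varphi \lor (\varphi\,{\sf U}\,\psi),\\
\varphi\,{\sf M}\,\psi & := & (\varphi \land \psi) \lor \exists x\,(\Box_{x+1}\psi \land \varphi_x),\\
\varphi\,{\sf R}\,\psi & := & {\sf G}\psi \lor (\varphi\,{\sf M}\,\psi),
\end{array}
\]
and verify each equivalence by a direct reading of the semantic clauses. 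Since $\Box_x$ enforces its body on the half-open integer interval $[i,i+x)$, the conjunct $\Box_x \varphi \land \psi_x$ captures exactly the ``$\varphi$ holds up to but not including a witnessed $\psi$'' pattern of until (with the leading $\psi$ disjunct handling the $k=0$ witness). Symmetrically, $\Box_{x+1}\psi \land \varphi_x$ enforces $\psi$ on the closed interval $[i,i+x]$ together with $\varphi$ at the right endpoint, which is precisely what strong release demands. The weak-form operators ${\sf W}$ and ${\sf R}$ then reduce by definition to disjunctions with their ${\sf G}$-variants.

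The only real subtlety, and thus the anticipated main obstacle, is to align the boundary semantics correctly: LTL uses $k\geq 0$ with inclusive bounds $j\leq k$, whereas ${\rm TEL}_{\mathbb{N}^+}$ quantifies over $\mathbb{N}^+$ (strictly positive) and $\Box_t,\Diamond_t$ are half-open. This forces both the explicit $k=0$ disjuncts shown above and the $+1$ offset inside $\Box_{x+1}\psi$ in the ${\sf M}$ translation; miss either and the equivalence breaks at a single point, which is why each clause deserves an explicit check. Once these off-by-one adjustments are accounted for, every verification is a short two-direction case analysis on the witnessing index $k$, with no induction on the formula being translated: the map is homomorphic on the temporal operators, and Boolean connectives and atoms already sit inside the signature of ${\rm TEL}_{\mathbb{N}^+}$.
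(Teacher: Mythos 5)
Your proposal is correct and follows essentially the same route as the paper: identical translations for ${\sf X}$, ${\sf F}$, ${\sf G}$, and ${\sf U}$, the same explicit $k=0$ disjuncts/conjuncts to bridge the $\mathbb{N}^+$ versus $\mathbb{N}$ boundary, and the same reduction of ${\sf W}$ and ${\sf R}$ to their ${\sf G}$-disjunctions. Your ${\sf M}$ clause $\exists x\,(\Box_{x+1}\psi \land \varphi_x)$ differs only cosmetically from the paper's $\exists x\,((\varphi\land\psi)_x \land \Box_x\psi)$; the two are semantically equivalent.
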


There is a slight nuance in the translation, in that in LTL, the integer $k$ belongs to ${\mathbb{N}}$, which includes the case $k=0$. In our proof of this proposition, we
need to call-out such boundary cases.

\begin{proof} For  boolean operators, the translation is straightforward. We provide a ${\rm TEL}_{\mathbb{N}^+}$ translation for  each LTL operator. 

 ${\sf X}$: It is clear that ${\sf X} \varphi$ is equivalent to  $\varphi_1$, ``next-time.'' For ${\rm TEL}_{\mathbb{N}^+}$, the semantics of $\varphi_1$ is 
 to evaluate $\varphi$ with length $1$ time-increment in the future.

${\sf F}$: Inspecting the definition, ${\sf F} \varphi $ can be equivalently defined as $\varphi \lor \exists x \varphi_x$, where $\varphi$ is a LTL formula (without $x$ occurring free in $\varphi$). The disjunct $\varphi$ is included to cover the boundary case $k=0$ in the LTL definition. In ${\rm TEL}_{\mathbb{N}^+}$, we have $x\geq 1$ when evaluating 
$\exists x \varphi_x$.

$ {\sf G}$:  ${\sf G} \varphi $ can be equivalently defined as $\varphi \land \forall x \varphi_x$.

$ {\sf U} $:  $\varphi {\sf U} \psi$ can be equivalently defined as $\psi\lor \exists x ( \psi_x\land ( \Box_x \varphi))$. The disjunct $\psi$ is introduced to cover the case when $\psi$ is satisfied at present. 

${\sf M}$:  $\varphi {\sf M} \psi$ can be equivalently translated as  $(\varphi \land \psi) \lor \exists x ( (\varphi \land \psi)_x\land ( \Box_x \psi))$. Again, the disjunct 
$(\varphi \land \psi)$'s role is to cover the case when $k=0$.

${\sf W}$ and ${\sf R}$: They can be further translated using the translation strategy for ${\sf G}$, ${\sf U}$ and for   ${\sf G}$, ${\sf M}$.

\end{proof}

\subsection{${\rm\bf TEL}_{\mathbb{N}^+}$ and B\"{u}chi Automata}

 LTL formulas can be translated systematically to several forms of $\omega$-automata~\cite{onetheorem}, which represent the class of 
 $\omega$-regular languages~\cite{buchi,vardio}. On the other hand, LTL does not capture the entire class of $\omega$-regular languages: they capture the sub-class of 
 star-free  $\omega$-regular languages only~\cite{Tho90}.

A {\em B\"{u}chi automaton} is a {\em non-deterministic finite automaton} (NFA)  ${\cal A} = (Q, \Sigma, q_0, \Delta, F)$ that takes $\omega$-words over $\Sigma$ as input. 
As usual, $Q$ is a finite set of states, $\Sigma$ is the input alphabet, $q_0\in Q$ is the initial state, $\Delta \subseteq Q \times \Sigma \times Q$ is
 the set of transitions, and $F (\subseteq Q)$ is the set of accepting states. 
A {\em run} of ${\cal A} $ on $\omega$-word $\sigma [1] \sigma [2]\sigma [3]\cdots$
 is an infinite sequence $\delta$ of pairs from $Q\times\Sigma$, with $\delta = (p[1], \sigma[1]) (p[2],\sigma[2]) (p [3], \sigma[3]) \cdots$ such that
  $ (p[i], \sigma [i], p[i+1]) \in \Delta$ for all $i\geq 1$, where $p[1]=q_0$. 
  We write $R_\omega({\cal A})$ for the set of all runs of ${\cal A} $, which is a subset of $(Q\times\Sigma)^{\omega}$.
We say $\delta$ is {\em accepting} (an accepting run) if there are infinitely many~$i$ with $p[i] \in F$. 
We define the {\em $\omega$-language recognized by ${\cal A} $} as 
$L_\omega({\cal A} ) = \{\alpha \in \Sigma^{\omega} \mid \mbox{\rm there is an accepting run of ${\cal A} $ on $\alpha$}\}$. 
If an $\omega$-language $L$ is recognized by some B\"{u}chi automaton ${\cal A} $, we call $L$ {\em $\omega$-regular}.
 
  \begin{thm}\label{buchi}
 [B\"{u}chi~\cite{buchi}] With respect to an alphabet $\Sigma$ and subset $L \subseteq \Sigma^*$, 
we define $L^{\omega} : = \{ w_1w_2 \cdots \mid w_i \in L \setminus \{\varepsilon\}\} $. 
A language $L\subseteq \Sigma^{\omega}$ is $\omega$-regular if and only if 
there are regular languages $U_1, V_1, \ldots ,U_n, V_n \subseteq \Sigma^*$ such that\newline
\centerline{$L = U_1{V_1}^{\omega} \cup U_2{V_2}^{\omega} \cup \cdots \cup U_n {V_n}^{\omega}$.}
 \end{thm}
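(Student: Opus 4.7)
The plan is to prove the two implications separately, handling closure under the relevant operations for one direction and using a pigeonhole/decomposition argument for the other. I would first establish three closure properties for $\omega$-regular languages: (a) if $V \subseteq \Sigma^*$ is regular and does not contain $\varepsilon$, then $V^{\omega}$ is $\omega$-regular, (b) if $U \subseteq \Sigma^*$ is regular and $L \subseteq \Sigma^{\omega}$ is $\omega$-regular, then $U \cdot L$ is $\omega$-regular, and (c) finite unions of $\omega$-regular languages are $\omega$-regular.

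For (a), I would take an NFA $\mathcal{N} = (Q, \Sigma, q_0, \delta, F_{\mathcal{N}})$ recognizing $V$ and construct a B\"uchi automaton $\mathcal{A}_{V^{\omega}}$ by adding, for every transition $(q, a, q_f) \in \delta$ with $q_f \in F_{\mathcal{N}}$, a new transition $(q, a, q_0)$, then marking $\{q_0\}$ as the B\"uchi accepting set. An accepting run must visit $q_0$ infinitely often, and each return corresponds to reading a nonempty word from $V$. For (b), I would splice an NFA for $U$ in front of a B\"uchi automaton for $L$ via $\varepsilon$- or product-transitions from each accepting state of the NFA to the initial state of the B\"uchi automaton, keeping the same accepting set. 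Closure (c) follows from the standard disjoint-union construction for NFAs applied to B\"uchi automata. Together these give the $\Leftarrow$ direction: each $U_i V_i^{\omega}$ is $\omega$-regular (discarding $\varepsilon$ from $V_i$ when necessary) and hence so is their finite union.

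For the $\Rightarrow$ direction, let $\mathcal{A} = (Q, \Sigma, q_0, \Delta, F)$ be a B\"uchi automaton. For each pair $(p, q) \in Q \times Q$, let $L_{p,q} \subseteq \Sigma^*$ be the language of finite words labeling some finite path in $\mathcal{A}$ from $p$ to $q$, and let $L^+_{p,q}$ be the analogous language restricted to nonempty words. Each $L_{p,q}$ and $L^+_{p,q}$ is regular, since it is recognized by the NFA obtained from $\mathcal{A}$ by choosing $p$ as initial and $\{q\}$ as final (with a standard trick for $L^+$). I would then show
\[
L_{\omega}(\mathcal{A}) \;=\; \bigcup_{q \in F} L_{q_0, q} \cdot \bigl(L^+_{q, q}\bigr)^{\omega}.
\]
The inclusion $\supseteq$ is immediate by concatenating paths. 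For $\subseteq$, given any accepting run on $\alpha$, some $q \in F$ is visited infinitely often by the pigeonhole principle (since $F$ is finite); splitting $\alpha$ at the successive visits to $q$ yields a decomposition $\alpha = u v_1 v_2 v_3 \cdots$ with $u \in L_{q_0, q}$ and each $v_i \in L^+_{q,q}$.

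The main obstacle I anticipate is the $\omega$-concatenation construction in (b) and the careful argument that the $v_i$'s are nonempty in the decomposition, which is precisely why I need $L^+_{q,q}$ rather than $L_{q,q}$ and why the statement of the theorem requires $w_i \neq \varepsilon$ in the definition of $L^{\omega}$. Everything else is bookkeeping on finite automata; the conceptual heart is the pigeonhole step that identifies a single accepting state visited infinitely often, enabling the decomposition into a regular prefix followed by an $\omega$-iteration of a regular loop language.
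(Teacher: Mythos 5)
The paper states this result as a citation of B\"uchi's classical theorem and gives no proof of its own, so there is nothing internal to compare against; your proposal is the standard textbook argument (pigeonhole on an accepting state visited infinitely often for one direction, closure of $\omega$-regular languages under union, left-concatenation with a regular language, and $\omega$-iteration for the other), and the overall structure, including the decomposition $L_\omega(\mathcal{A})=\bigcup_{q\in F}L_{q_0,q}\cdot(L^+_{q,q})^{\omega}$ and the observation that consecutive visits to $q$ are at distinct positions so the $v_i$ are nonempty, is correct.

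There is, however, one genuine flaw in your construction (a) for $V^{\omega}$: adding a shortcut transition $(q,a,q_0)$ for every original transition $(q,a,q_f)$ with $q_f\in F_{\mathcal{N}}$ and declaring $\{q_0\}$ the B\"uchi set is unsound when $q_0$ has incoming transitions in $\mathcal{N}$, because a run can then return to $q_0$ infinitely often along \emph{original} transitions without the corresponding segments being words of $V$. Concretely, take $\mathcal{N}$ with states $q_0,q_1$, transitions $q_0\xrightarrow{a}q_1$ and $q_1\xrightarrow{b}q_0$, and $F_{\mathcal{N}}=\{q_1\}$, so $V=(ab)^{*}a$. Your construction adds only the shortcut $q_0\xrightarrow{a}q_0$, but the run $q_0\xrightarrow{a}q_1\xrightarrow{b}q_0\xrightarrow{a}q_1\cdots$ visits $q_0$ infinitely often and accepts $(ab)^{\omega}$, which is not in $V^{\omega}$ (every word of $V$ both begins and ends with $a$, so any $\omega$-concatenation of them contains the factor $aa$, which $(ab)^{\omega}$ does not). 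The standard repair is to first normalize $\mathcal{N}$ so that its initial state has no incoming transitions (or, equivalently, to route the shortcuts to a fresh copy of $q_0$); with that one-line fix, and with your exclusion of $\varepsilon$ from $V$, the construction and hence the whole proof go through.
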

 
 \begin{thm} For every  B\"{u}chi automaton ${\cal A}$, there is a ${\rm TEL}_{\mathbb{N}^+}$ formula $\varphi_{\!\cal A}$ such that
 $R_\omega({\cal A}) = {\cal L}(\varphi_{\!\cal A}).$
 \end{thm}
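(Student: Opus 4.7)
The plan is to build $\varphi_{\!\cal A}$ as a conjunction of three pieces, working over the alphabet $\Sigma' := Q \times \Sigma$ so that an $\omega$-word $\delta = (p[1],\sigma[1])(p[2],\sigma[2])\cdots \in (\Sigma')^{\omega}$ is literally a candidate run. The three pieces will separately capture (i) the initial-state requirement $p[1]=q_0$, (ii) transition consistency $(p[i],\sigma[i],p[i+1]) \in \Delta$ for all $i \geq 1$, and (iii) the Büchi acceptance condition that $p[i] \in F$ holds for infinitely many $i$. Because the feedback emphasizes that $R_\omega({\cal A})$ must include the ``infinitely often in $F$'' clause, piece (iii) is the essential new content beyond the straightforward safety encoding.

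For (i), I would write $\mathrm{Init} := \bigvee_{a \in \Sigma} \langle q_0, a \rangle$, where $\langle p,a\rangle$ denotes the primitive proposition corresponding to the letter $(p,a) \in \Sigma'$. This is evaluated at position $1$ per Definition~\ref{teln}. For (ii), I would encode transition consistency as a global safety condition using the unbounded $\Box$ together with the next-time operator $(\cdot)_1$:
\[
\mathrm{Trans} \;:=\; \Box\!\left(\bigwedge_{(p,a)\in Q\times\Sigma} \Bigl(\langle p,a\rangle \,\rightarrow\, \bigvee_{\substack{(p,a,q) \in \Delta \\ b \in \Sigma}} \langle q,b\rangle_1 \Bigr)\right),
\]
which, using the abbreviation $\Box\varphi \equiv \varphi \land \forall x\,\varphi_x$, says that at every position $i \geq 1$ the letter read together with its successor forms a legal transition. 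For (iii), I would encode Büchi acceptance with the $\Box\Diamond$ pattern applied to the set of accepting configurations, namely
\[
\mathrm{Acc} \;:=\; \Box\,\Diamond\, F^{*}, \qquad \text{where } F^{*} := \bigvee_{q \in F,\, a \in \Sigma} \langle q,a\rangle.
\]
Unfolding the abbreviations, $\mathrm{Acc}$ says that at every future time point $i$ there exists a further offset $x \geq 1$ with $F^{*}$ true at position $i+x$ (plus the present-time disjunct), which is precisely ``$F$ is visited infinitely often.'' The final formula is $\varphi_{\!\cal A} := \mathrm{Init} \land \mathrm{Trans} \land \mathrm{Acc}$.

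The verification that ${\cal L}(\varphi_{\!\cal A}) = R_\omega({\cal A})$ then splits into two easy directions. If $(\delta, 1) \models \varphi_{\!\cal A}$, the three conjuncts respectively force $p[1]=q_0$, force each triple $(p[i],\sigma[i],p[i+1])$ to lie in $\Delta$, and, via the $\Box\Diamond$ unfolding, force the existence of arbitrarily large $i$ with $p[i]\in F$, so $\delta$ is an accepting run. Conversely, any accepting run of ${\cal A}$ satisfies each conjunct by direct inspection of Definition~\ref{teln} together with the unfolding identities $\Box \varphi = \varphi \land \forall x\,\varphi_x$ and $\Diamond \varphi = \varphi \lor \exists x\,\varphi_x$ recorded in Section~\ref{tel}.

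The main obstacle is piece (iii): this is a genuine liveness condition and cannot be expressed with the bounded modalities $\Box_t$, $\Diamond_t$ alone for any fixed time-term $t$. One must appeal to the unbounded abbreviations $\Box, \Diamond$ (equivalently, to the quantifiers $\forall x, \exists x$ ranging over $\mathbb{N}^+$) and verify carefully that the nested $\forall x \exists y$ obtained by unfolding $\Box\Diamond F^{*}$ really coincides with the standard ``for every $i$ there is $j \geq i$ with $p[j]\in F$'' reading of Büchi acceptance, rather than, say, collapsing to an eventual-always pattern. Once this is checked (by direct unfolding, handling the boundary disjuncts $\varphi$ and the shift operators $\varphi_x$), the remaining parts of the argument are bookkeeping over $\Sigma' = Q \times \Sigma$.
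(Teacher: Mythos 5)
Your construction of the initial-state and transition conjuncts is essentially the paper's own proof: the paper builds $\varphi^0$ (first letter has state $q_0$ and the first transition is legal) and $\varphi^1$ (a $\forall x$-quantified ``domino'' condition forcing $(p[i],\sigma[i],p[i+1])\in\Delta$ at every position), and sets $\varphi_{\!\cal A}:=\varphi^0\land\varphi^1$. Your $\mathrm{Init}\land\mathrm{Trans}$ is an equivalent packaging (your use of the reflexive $\Box$ in $\mathrm{Trans}$ even covers the first transition, which the paper instead folds into $\varphi^0$ because its $\forall x$ only reaches positions $\geq 2$). That part is correct and matches the paper.

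The genuine discrepancy is your third conjunct $\mathrm{Acc}$. The paper defines $R_\omega({\cal A})$ as the set of \emph{all} runs of ${\cal A}$ --- sequences in $(Q\times\Sigma)^\omega$ starting at $q_0$ and respecting $\Delta$ --- with ``accepting run'' defined separately as a run visiting $F$ infinitely often. The theorem claims $R_\omega({\cal A})={\cal L}(\varphi_{\!\cal A})$ for the set of all runs; the paper deliberately leaves the B\"uchi acceptance condition out of $\varphi_{\!\cal A}$ and only afterwards remarks that accepting runs can be captured by adding the extra formula $\forall y\,\exists x\,\bigvee_{p\in F,\,a\in\Sigma}\binom{p}{a}_{x+y}$. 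By conjoining $\mathrm{Acc}$ you obtain a formula whose language is the set of \emph{accepting} runs, which is in general a proper subset of $R_\omega({\cal A})$ (any automaton with a non-accepting run gives a counterexample to your claimed equality). Your $\Box\Diamond F^{*}$ encoding of ``infinitely often in $F$'' is itself correct and coincides with the paper's supplementary formula, so the fix is purely organizational: drop $\mathrm{Acc}$ to prove the theorem as stated, and present $\mathrm{Acc}$ as the additional formula needed if one wants ${\cal L}(\varphi_{\!\cal A})$ to be the set of accepting runs.
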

 
 \begin{proof} Given a B\"{u}chi automaton ${\cal A} = (Q, \Sigma, q_0, \Delta, F)$, 
 we construct a ${\rm TEL}_{\mathbb{N}^+}$ formula $\varphi_{\!\cal A}$ such that for any $\omega$-word $\delta$ in $(Q\times\Sigma)^{\omega}$,
 $\delta \in R_\omega({\cal A})$  if and only if  $\delta \in {\cal L}(\varphi_{\!\cal A}).$
 
 Continuing the tiling analogy in Section~\ref{undecidability}, we present a member of $(Q\times\Sigma)$ as $\binom{p}{a}$, with $p\in Q$ and $a\in\Sigma$.
A member of $(Q\times\Sigma)^{\omega}$ in the form \newline
\centerline{$\displaystyle{\binom{p[1]}{\sigma[1]} \binom{p[2]}{\sigma[2]} \binom{p [3]}{\sigma[3]} \cdots}$}
is a run of ${\cal A}$ on $\sigma [1] \sigma [2]\sigma [3]\cdots$ if and only if $ (p[i], \sigma [i], p[i+1]) \in \Delta$ for all $i\geq 1$, where $p[1]=q_0$. 

We use the following ${\rm TEL}_{\mathbb{N}^+}$ formulas to capture the runs for ${\cal A}$.

{\em $\varphi^0$: This formula captures the property that the first steps of the run start with the state $q_0$}. 
We have 
$$\varphi^0 : = \displaystyle{\bigvee_{a,b\in \Sigma, (q_0, a, p)\in\Delta}  \binom{q_0}{a}\binom{p}{b}_{\!1}}$$

{\em $\varphi^1$: This formula captures the property that for all $i\geq 1$, $ (p[i], \sigma [i], p[i+1]) \in \Delta$}. We have
$$\varphi^1: =   \bigwedge_{p\in Q, a\in \Sigma} \forall x\left( \binom{p}{a}_{\!x}  \rightarrow \bigvee_{b\in \Sigma, (p, a, q) \in \Delta} \binom{q}{b}_{\!{x+1}} \right)$$

Thus a run of ${\cal A}$ is a one-dimensional, right-infinite tiling of domino tiles from $(Q\times\Sigma)$ according to the rules specified in $\Delta$. The tiling rules in 
$\Delta$ define all legitimate ways of connecting two consecutive tiles. Formula $\varphi^0$ specifies the property
 that the initial two tiles must be legitimately connected, with the starting tile
marked by the property that the upper part must be $q_0$. Formula $\varphi^1$ specifies the property that for any two consecutive tiles at positions marked by $x$ and $x+1$,
they must all obey the tiling rules defined in $\Delta$.

By setting $\varphi_{\!\cal A} : = \varphi^0\land \varphi^1,$ we can verify that  $R_\omega({\cal A}) = {\cal L}(\varphi_{\!\cal A}).$
 \end{proof}

We need one more formula to capture the accepting runs, satisfying the condition that states in $F$ occurs infinitely often in the run. 
This can be captured by the formula 
$$\forall y\, \exists x\, \displaystyle{\bigvee_{p \in F, a\in \Sigma }  \binom{p}{a}_{\!x+y}}$$

  \begin{prop}
 The  $\omega$-language defined by  ${\rm TEL}_{\mathbb{N}^+}$ formula (see Example 2) \newline
 \centerline{$ \forall x ( a_x \rightarrow  ((\Box_{x} b)_{x+1} \land a_{2x+1} )) $}
 \newline
  is not $\omega$-regular.
  \end{prop}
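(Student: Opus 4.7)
The plan is to derive a contradiction from assuming $L := \mathcal{L}(\varphi)$ is $\omega$-regular. The starting point is the explicit description already established in Example 2:
\[
L = \{b^\omega,\, a b^\omega\} \cup \{\, b^n a b^{h(0)} a b^{h(1)} a b^{h(2)} \cdots \mid n \geq 1 \,\},
\]
with $h(k) = 2^k(n+1) - 1$. Thus any word of $L$ containing infinitely many $a$'s has its $a$-positions forming a geometric sequence, so the gaps between consecutive $a$'s grow without bound.

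Suppose for contradiction that a B\"uchi automaton $\mathcal{A} = (Q, \Sigma, q_0, \Delta, F)$ recognizes $L$. Fix any $N \geq 1$ and take an accepting run of $\mathcal{A}$ on $\alpha_N := b^N a b^{h(0)} a b^{h(1)} a \cdots$. By the B\"uchi acceptance condition, some accepting state $q \in F$ is visited at an infinite set of positions $p_1 < p_2 < \cdots$. Since $\alpha_N$ contains infinitely many $a$'s and the $p_k$ are unbounded, I would select indices $i < j$ so that the factor $w := \alpha_N[p_i+1, p_j]$ contains at least one occurrence of $a$. The pumped word
\[
\beta := \alpha_N[1, p_i] \cdot w^\omega
\]
is then also accepted by $\mathcal{A}$: its run follows the original up to position $p_i$ and thereafter cycles through the loop $w$, revisiting the accepting state $q$ between consecutive copies of $w$. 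Hence $\beta \in L$.

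Now I compare two incompatible properties of $\beta$. On the one hand, $\beta$ is ultimately periodic with period $\ell := |w|$, and since $w$ contains at least one $a$, any two consecutive occurrences of $a$ in $\beta$ that both lie beyond position $p_i$ are separated by at most $\ell$ positions. On the other hand, $\beta$ has infinitely many $a$'s, so by the characterization of $L$ it must be of the form $b^{N'} a b^{h'(0)} a b^{h'(1)} a \cdots$ for some $N' \geq 1$, with $h'(k) = 2^k(N'+1) - 1$; consequently the gap between the $k$-th and $(k{+}1)$-st occurrences of $a$ grows like $2^{k-1}(N'+1)$ and tends to infinity. This contradicts the uniform bound $\ell$, so $L$ cannot be $\omega$-regular.

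The one point requiring care is making sure the pumped word $\beta$ genuinely falls into the ``interesting'' branch of $L$ rather than collapsing to $b^\omega$ or $a b^\omega$; this is precisely why the loop $w$ must be chosen to contain at least one $a$, a requirement that is easy to satisfy because $\alpha_N$ has infinitely many $a$'s and infinitely many return positions to $q$. Beyond that, the argument is the standard structural mismatch between a periodic pattern (bounded $a$-gaps) and the exponentially growing $a$-gaps dictated by membership in $L$.
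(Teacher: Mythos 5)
Your argument is correct, and it supplies a proof that the paper itself omits: the proposition is stated without a proof, the only justification in the text being the informal remark at the end of Example~2 that ${\cal L}(\varphi)$ is ``not ultimately periodic, hence there is unlikely any finite automata model to accept it.'' Your pumping argument makes that remark rigorous in the standard way: an accepting run must revisit some $q\in F$ infinitely often, a loop segment containing an $a$ can be iterated to produce an accepted word $\beta$ whose $a$-gaps beyond the splice point are bounded by the loop length $\ell$, while every word of $L$ with infinitely many $a$'s has gaps $2^k(N'+1)$ tending to infinity (indeed this follows directly from the formula: an $a$ at position $i\geq 2$ forces positions $i{+}1,\dots,2i{-}1$ to be $b$ and position $2i$ to be $a$, so the gaps double). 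Two small points worth tightening: the loop word should be taken so that it labels a $q$-to-$q$ cycle under the paper's indexing convention (an off-by-one issue only), and you should note explicitly that it suffices for your purposes that $\beta$ begins with $b$ and has infinitely many $a$'s, so the ``interesting branch'' of the Example~2 characterization is the only one available. A marginally shorter route, using machinery the paper already states, is via Theorem~5 (B\"uchi's characterization $L=\bigcup_i U_iV_i^{\omega}$): if $L$ were $\omega$-regular, some component $U_iV_i^{\omega}$ would contain a word with infinitely many $a$'s, forcing some $v\in V_i$ to contain an $a$; then $uv^{\omega}\in L$ for suitable $u\in U_i$ is ultimately periodic with bounded $a$-gaps, the same contradiction without re-deriving the run-pumping lemma. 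Both arguments are sound; yours is self-contained at the level of automata, the other leans on the stated representation theorem.
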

 
\subsection{${\rm\bf TEL}$ and Allen's Temporal Interval Relations}

Allen's Interval Algebra~\cite{allen1983maintaining} is motivated from the need for AI systems to model space and time in a qualitative,
 ``human-like,'' manner. The basic unit of the algebra is intervals on the real line, 
 which can represent the duration of events, tasks, or actions over time. 
 This algebra formalizes relations such as precedes (i.e., before) and overlaps to encode the possible configurations between those intervals. Allen's Interval Algebra has been used primarily as a qualitative constraint language, with applications that involve planning and scheduling, natural language processing, temporal databases, and multimedia databases. Halpern-Shoham logic (HS-logic~\cite{hslogic}) is a well-known logical framework that incorporates Allen's Interval Algebra as a formal component of the logic. 
 HS-logic contains modal operators representing Allen's binary relations between intervals that include: begins or started-by ({\sf B}), during or contains ({\sf D}), ends 
 or finished-by ({\sf E}), overlaps ({\sf O}), meets or adjacent to ({\sf A}), 
 later than ({\sf L}), and their converses. For example, $<\!{\sf B}\!>\varphi$ reads that ``there is an interval beginning the current interval, in which $\varphi$ holds.'' Therefore, the basic building block of reasoning in HS-logic is an interval, with its intrinsic expressive power grounded in the semantic structures~\cite{hsdecidable,undecidable}. 

Temporal Cohort Logic (TCL~\cite{zhang2022temporal}), a precursor of this paper, was 
developed to explicitly capture Allen relations as (binary) modal operators in the logic, rather than in the model (as in HS-logic). 
Formally, TCL formulas are defined as:
$$\varphi , \psi :: = p \mid \neg \varphi \mid \varphi \land \psi \mid   \varphi \lor \psi \mid \varphi X \psi,$$
where atomic propositions $p$ are drawn from the same predefined set {\sf Prop}.  
Binary modal operators $X$ are drawn from the collection $\{{\sf A,L,B,E,D,O}\}$, with respective intended denotations 
similar to HS-notations.
To both relate to and differentiate from HS-logic, we use the same temporal modalities syntactically. 
Classical boolean logic operators are included in the syntax: $\neg$ for ``not'' or negation; $\land$ for ``and'' or conjunction; and $\lor$ for
``or'', or disjunction. 

For example,  
``Intracerebral hemorrhage started by heart attack''
is expressed by 
{\tt I61} {\sf B} {\tt I219},
where  {\tt I61} is the ICD-10 code ``nontraumatic intracerebral hemorrhage,'' and
 {\tt I219} is the ICD-10 code for ``acute myocardial infarction.''

Note that TCL formulas do not explicitly involve time-terms.
However, in order to relate TCL to TEL, we define the semantics of TCL formulas using the same 
underlying structure $\mathbb{M}^+$.
With respect to $\alpha: \mathbb{M}^+\to 2^{{\sf P}}$  and $s\in \mathbb{M}^+$, we define:
\[
\begin{array}{l}
(\alpha, s)  \models p ~\mbox{\rm if}~ p\in \alpha({s}) ;\\ 
(\alpha, s)  \models \neg \varphi  ~\mbox{\rm  if}~ (\alpha, s) \not \models \varphi ;\\
(\alpha, s)  \models  \varphi \lor \psi  ~\mbox{\rm if}~ (\alpha, s)  \models  \varphi  ~\mbox{\rm or }~(\alpha, s)  \models  \psi ;\\
(\alpha, s)  \models  \varphi \land \psi  ~\mbox{\rm if}~ (\alpha, s)  \models  \varphi  ~\mbox{\rm and }~(\alpha, s)  \models  \psi .\\
\end{array}\]

 Thus the Boolean connectors carry the standard semantics, which agrees with TEL's interpretation as well.
 Also note that the previously used environmental context ${\cal E}$ is not all relevant because
 of the lack of time terms and free variables. 
 What becomes involved is the semantics of binary Allen modal operators, given in the definition below.

  \begin{definition}\label{tcl-semantics}
  With respect to $\alpha: \mathbb{M}^+\to 2^{{\sf P}}$  and $s\in \mathbb{M}^+$, we define (where $\overline{\varphi}$ stands for $\neg \varphi$):

{\bf Meets}: $(\alpha, s) \models \varphi \, {\sf A}\, \psi$  
  if there exists  $v > u > s \in \mathbb{M}^+$ such that
 \[
 \begin{array}{l} 
   \mbox{\em for any}~ t \in \mathbb{M}^+~\mbox{\em with }s\leq t < u,~(\alpha, t)\models   \varphi \land \overline{\psi};\\
 \mbox{\em for any}~ t \in \mathbb{M}^+~\mbox{\em with }u\leq t < v,~(\alpha, t)\models   \psi \land \overline{\varphi}; \\
  \mbox{\em for any}~ t \in \mathbb{M}^+~\mbox{\em with }t\geq v,~(\alpha, t)\models   \overline{\varphi}\land
  \overline{\psi}.
\end{array}\]

{\bf Before}: $(\alpha, s) \models \varphi \, {\sf L}\, \psi$  
  if there exists  $w> v > u > s \in \mathbb{M}^+$ such that
 \[
 \begin{array}{l} 
   \mbox{\em for any}~ t \in \mathbb{M}^+~\mbox{\em with }s\leq t < u,~(\alpha, t)\models   \varphi \land \overline{\psi};\\
 \mbox{\em for any}~ t \in \mathbb{M}^+~\mbox{\em with }u\leq t < v,~(\alpha, t)\models  \overline{\varphi}\land \overline{\psi}; \\
  \mbox{\em for any}~ t \in \mathbb{M}^+~\mbox{\em with }v\leq t < w ,~(\alpha, t)\models   \overline{\varphi}\land
  \psi; \\
    \mbox{\em for any}~ t \in \mathbb{M}^+~\mbox{\em with } t \geq w ,~(\alpha, t)\models   \overline{\varphi}\land
  \overline{\psi}; 
\end{array}\]
 
{\bf Started-by}: $(\alpha, s)  \models \varphi \, {\sf B}\, \psi$    
 if there exists  $v > u > s \in \mathbb{M}^+$ such that
 \[
 \begin{array}{l} 
   \mbox{\em for any}~ t \in \mathbb{M}^+~\mbox{\em with }s\leq t < u,~(\alpha, t)\models   \varphi \land \psi;\\
 \mbox{\em for any}~ t \in \mathbb{M}^+~\mbox{\em with }u\leq t < v,~(\alpha, t)\models   \varphi \land \overline{\psi}; \\
  \mbox{\em for any}~ t \in \mathbb{M}^+~\mbox{\em with }t\geq v,~(\alpha, t)\models   \overline{\varphi}\land
  \overline{\psi}.
\end{array}\]

{\bf Finished-by}: $(\alpha, s) \models \varphi \, {\sf E}\, \psi$  
  if there exists  $v > u > s \in \mathbb{M}^+$ such that
 \[
 \begin{array}{l} 
   \mbox{\em for any}~ t \in \mathbb{M}^+~\mbox{\em with }s\leq t < u,~(\alpha, t)\models   \varphi \land \overline{\psi};\\
 \mbox{\em for any}~ t \in \mathbb{M}^+~\mbox{\em with }u\leq t < v,~(\alpha, t)\models   \varphi \land \psi; \\
  \mbox{\em for any}~ t \in \mathbb{M}^+~\mbox{\em with }t\geq v,~(\alpha, t)\models   \overline{\varphi}\land
  \overline{\psi}.
\end{array}\]

{\bf Contains}:  $(\alpha, s)  \models \varphi \, {\sf D}\, \psi$  
  if there exists  $w> v > u > s \in \mathbb{M}^+$ such that
 \[
 \begin{array}{l} 
   \mbox{\em for any}~ t \in \mathbb{M}^+~\mbox{\em with }s\leq t < u,~(\alpha, t)\models   \varphi \land \overline{\psi};\\
 \mbox{\em for any}~ t \in \mathbb{M}^+~\mbox{\em with }u\leq t < v,~(\alpha, t)\models  \varphi\land \psi; \\
  \mbox{\em for any}~ t \in \mathbb{M}^+~\mbox{\em with }v\leq t < w ,~(\alpha, t)\models   \varphi\land
  \overline{\psi}; \\
    \mbox{\em for any}~ t \in \mathbb{M}^+~\mbox{\em with } t \geq w ,~(\alpha, t)\models   \overline{\varphi}\land
  \overline{\psi}; 
\end{array}\]

{\bf Overlaps}: $(\alpha, s)  \models \varphi \, {\sf O}\, \psi$ 
  if there exists  $w> v > u > s \in \mathbb{M}^+$ such that
 \[
 \begin{array}{l} 
   \mbox{\em for any}~ t \in \mathbb{M}^+~\mbox{\em with }s\leq t < u,~(\alpha, t)\models   \varphi \land \overline{\psi};\\
 \mbox{\em for any}~ t \in \mathbb{M}^+~\mbox{\em with }u\leq t < v,~(\alpha, t)\models  \varphi\land \psi; \\
  \mbox{\em for any}~ t \in \mathbb{M}^+~\mbox{\em with }v\leq t < w ,~(\alpha, t)\models   \overline{\varphi}\land
  \psi; \\
    \mbox{\em for any}~ t \in \mathbb{M}^+~\mbox{\em with } t \geq w ,~(\alpha, t)\models   \overline{\varphi}\land
  \overline{\psi}; 
\end{array}\]
 \end{definition}

 The intended semantics of these TCL modal operators 
can be understood as Allen's interval relations on the ``monochromatic substructure'' 
induced by the respective TCL formulas (see {\em Figure 1} for an illustration).

\begin{definition}\label{tcl-def}
A subset $T\subseteq  \mathbb{M}^+$ is said to be $\varphi$-monochromatic with respect to a semantic assignment
function $\alpha: \mathbb{M}^+\to 2^{{\sf P}}$  if
for each $t\in T$, we have $(\alpha , t)\models \varphi$. 
A $\varphi$-induced monochromatic set, or $\varphi$-set in short and  
$\textnormal{[\kern-.15em[} \varphi \textnormal{]\kern-.15em]}_{\alpha}$ in notation, 
is defined as  $\textnormal{[\kern-.15em[} \varphi \textnormal{]\kern-.15em]}_{\alpha}:= \{ t \in \mathbb{M}^+ \mid (\alpha, t) \models\varphi \}$. 
\end{definition}

The key distinction from the interpretation of Allen's interval operator or HS logic is that
we do not require $\varphi$-set to be convex, in the sense that if $t$ falls in between two members in 
$\textnormal{[\kern-.15em[} \varphi \textnormal{]\kern-.15em]}_{\alpha}$, it is not necessarily the case that $t$ is also a member 
 in $\textnormal{[\kern-.15em[} \varphi \textnormal{]\kern-.15em]}_{\alpha}$.
Intuitively, $\textnormal{[\kern-.15em[} \varphi \textnormal{]\kern-.15em]}_{\alpha}$ can be ``porous,'' to reflect
the general situation of non-consecutive events of the same kind taking place, sometimes sporadically, overtime.

For example, the satisfaction definition for ``started-by,'' 
$(\alpha, s)  \models \varphi \, {\sf B}\, \psi$, can be interpreted as:  
``$\varphi$-set is started by $\psi$-set in the co-finite segment $[ s, \infty )$,''
where we treat $\varphi$-set and $\psi$-set in the usual sense of Allen intervals (as monochromatic substructures).
Note that our definition allows open ended, ``infinitely large'' intervals to be in scope of the relational comparison, although
the con-finiteness property of $\varphi$-sets makes them not too unwieldy.

\begin{thm}\label{thm1}
For each modal operator in  $\{{\sf A,L,B,E,D,O}\}$,
the  temporal relation between the corresponding $\varphi$- and $\psi$-monochromatic substructures of $\mathbb{M}^+$ induced by any
assignment $\alpha: \mathbb{M}^+\to 2^{{\sf P}}$, according to Definition~\ref{tcl-def}, satisfies the standard 
definition of Allen relationships.
\end{thm}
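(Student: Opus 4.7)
The plan is to reduce the statement to a case-by-case inspection, one modal operator at a time, by reading the corresponding clause of Definition~\ref{tcl-semantics} as a prescription for how the co-finite restrictions $\Phi := \semantics{\varphi}_{\alpha} \cap [s,\infty)$ and $\Psi := \semantics{\psi}_{\alpha} \cap [s,\infty)$ look inside $\mathbb{M}^+$. First, I fix an arbitrary assignment $\alpha$, formulas $\varphi,\psi$, and a point $s\in\mathbb{M}^+$ at which the modal formula holds. For each operator I would then extract, from the ``for any $t$ with $\cdots$'' clauses, the partition of $[s,\infty)$ into consecutive half-open pieces on which the truth-values of $\varphi$ and $\psi$ are constant. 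This immediately pins down $\Phi$ and $\Psi$ as (left-closed, right-open) intervals in $[s,\infty)$ whose endpoints are exactly the partition boundaries $u$, $v$, $w$.

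With $\Phi$ and $\Psi$ identified, the remaining work for each of the six operators is to check that the resulting interval pair satisfies the standard Allen condition. For instance, in the {\sf A} (Meets) clause the partition yields $\Phi=[s,u)$ and $\Psi=[u,v)$, so that $\sup\Phi=\inf\Psi$ and $\Phi$ immediately abuts $\Psi$, which is exactly Allen's ``meets.'' In the {\sf B} (Started-by) clause the partition gives $\Phi=[s,v)$ and $\Psi=[s,u)$ with $u<v$, so the two intervals share the left endpoint $s$ but $\Psi$ ends strictly earlier, matching ``started-by.'' The {\sf E}, {\sf D}, {\sf O}, and {\sf L} cases are analogous: in each case the prescribed partition places the endpoints of $\Phi$ and $\Psi$ in exactly the configuration required by the corresponding Allen relation, including the strict gap $[u,v)$ that witnesses ``before'' in the {\sf L} clause.

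The only subtlety, which I would address explicitly at the outset, concerns the paper's warning that $\varphi$-sets are in general \emph{not} required to be convex and may be ``porous.'' This is not a genuine obstruction: the semantic clauses of Definition~\ref{tcl-semantics} quantify over \emph{all} $t$ in each segment of $[s,\infty)$, so within the co-finite tail starting at $s$ the sets $\Phi$ and $\Psi$ are forced to be the specified contiguous intervals, even though on $[0,s)$ they may be arbitrarily porous. Thus the Allen relation holds on the monochromatic substructures induced on the co-finite segment $[s,\infty)$, which is the intended reading given in the remark preceding the theorem (``$\varphi$-set is started by $\psi$-set in the co-finite segment $[s,\infty)$'').

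The main obstacle, if any, is purely notational rather than mathematical: I need a uniform way to state Allen's standard definitions so that the six case checks can be carried out symmetrically and without repetition. I would therefore open the proof with a small preparatory paragraph defining, for a bounded half-open interval $I=[a,b)\subseteq\mathbb{M}^+$, its left and right endpoints, and restating each Allen relation as an equation or inequality on these endpoints; the six verifications then reduce to reading $a,b$ off the partition boundaries $s,u,v,w$ and checking an equality or strict inequality, which is immediate in every case.
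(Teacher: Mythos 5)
Your proposal is correct and follows essentially the same route as the paper: the paper also fixes $\alpha$ and $s$, sets $S=[s,\infty)\cap\semantics{\varphi}_\alpha$ and $T=[s,\infty)\cap\semantics{\psi}_\alpha$, reads off from the clauses of Definition~\ref{tcl-semantics} that these restrictions are the prescribed half-open intervals with endpoints $u,v,w$, and checks each of the six Allen configurations case by case. Your explicit handling of the porosity issue (convexity is only forced on the co-finite segment $[s,\infty)$) matches the paper's intended reading, and your suggested endpoint-based uniformization of the six checks is only a presentational refinement, not a different argument.
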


 \begin{figure}
   \centering
\includegraphics[width=0.75\textwidth]{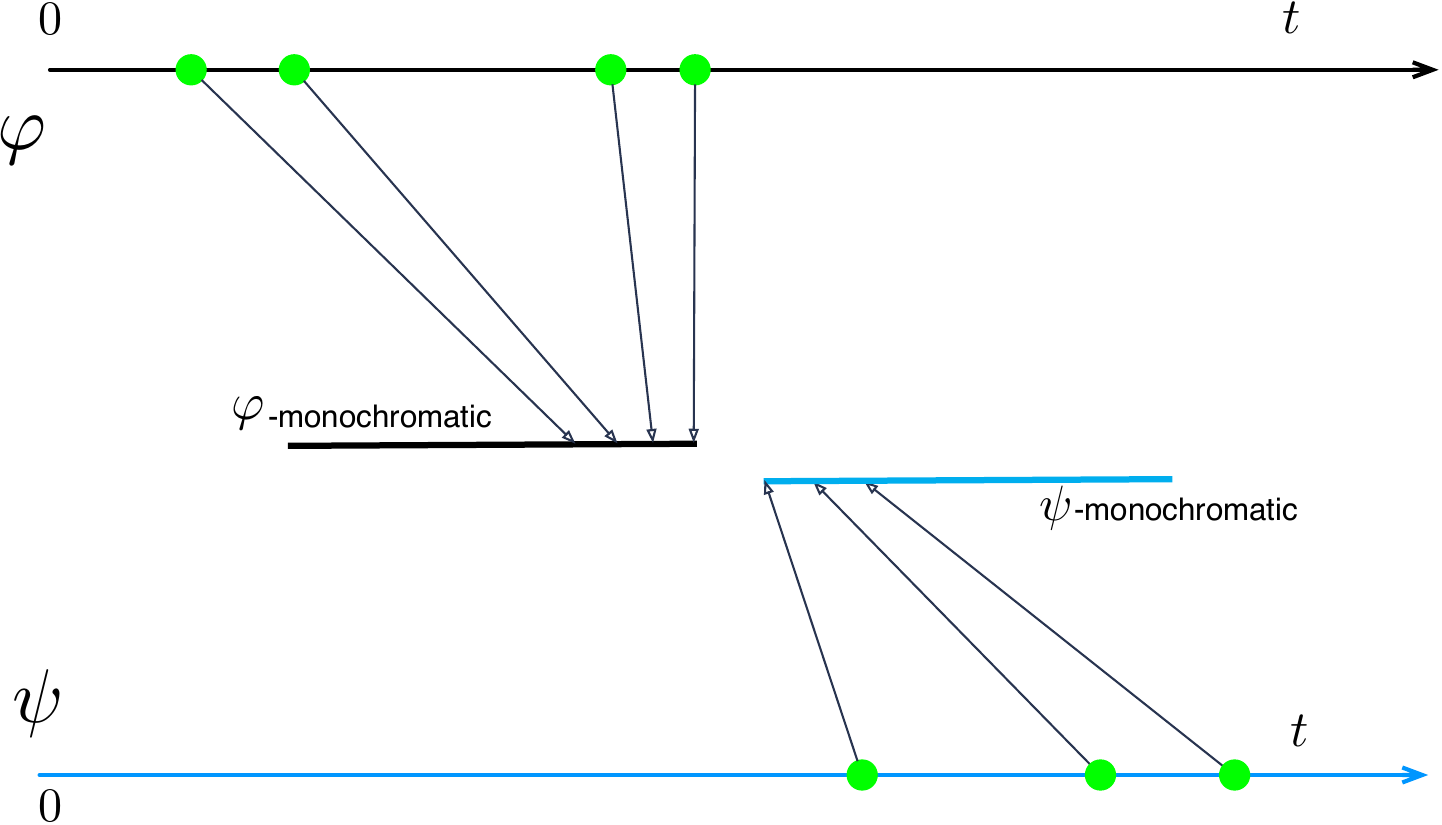}
       \captionof{figure}{\small Illustration of the ``before'' relationship between $\varphi$- and $\psi$-monochromatic substructures. }
    \label{before} 
     \end{figure}

To see why Theorem~\ref{thm1} is true, consider, for ``Meets,'' a fixed assignment $\alpha: \mathbb{M}^+\to 2^{{\sf P}}$, and
let $(\alpha, s) \models \varphi \, {\sf A}\, \psi$,
$S = [s,\infty )\cap \textnormal{[\kern-.15em[} \varphi \textnormal{]\kern-.15em]}_{\alpha}$, and
$T= [s,\infty ) \cap  \textnormal{[\kern-.15em[} \psi \textnormal{]\kern-.15em]}_{\alpha}$.
By Definition~\ref{tcl-semantics}, 
we have $a<b$ for any
$a\in S$ and $b\in T$, and $[s, u)\cap S = [s, u)$, $[u, v)\cap T=T$. 
Moreover, $(S\cup T)\cap [u, \infty ) = \emptyset.$
This results in the Allen's relation that the observed monochromatic substructure
$S$ meets the monochromatic substructure $T$ (precisely) at $u$.

Similarly, for ``Before,'' if we invoke the same contextual information for  $\varphi \, {\sf L}\, \psi$,
the ``intervals'' $S$ and $T$ have the
relationship that $S\cap  [s,u) = [s, u)$, and $T\cap [v,w] = T$, 
$(S\cup T)\cap [u, v) = \emptyset ,$ and $(S\cup T)\cap [w, \infty) = \emptyset .$
Thus $S$ is ``Before'' $T$ in the  Allen interval relationship (see {\bf Figure} 1).

For ``Started-by,''  we invoke the same contextual information for  $\varphi \, {\sf B}\, \psi$.
The ``intervals'' $S$ and $T$ have the
relationship that $S\cap [s, u) = T \cap [s,u) = [s, u)$,
 and $T \cap [u,\infty ) = \emptyset$. This means that 
 in the initial segment $[s, u)$, both  $\varphi$ and $\psi$ are true, but then
 only $\varphi$ is true in $[u,v)$, while
in the remainder of the open-ended interval $[v,\infty )$, 
both are false.

For ``Finished-by,''   $\varphi \, {\sf E}\, \psi$,
the ``intervals'' $S$ and $T$ have the
properties that $S \cap [s,u) = [s, u)$, $T \cap [s,u) = \emptyset$, 
$S\cap  [u, v )= T\cap  [u, v ) = [u,v)$, 
and $(S\cup T)\cap [v,\infty ) = \emptyset$, capturing 
the corresponding Allen relationship between $S$ and $T$.
Similar inspections can demonstrate validity for ``Contains,''  and ``Overlaps.''

With the introduction of TCL provided in the previous section, 
we are in a position to present the following result.

\begin{thm}\label{thm2}
All Allen modal operators in  $\{{\sf A,L,B,E,D,O}\}$ 
are expressible in Temporal Ensemble Logic.
\end{thm}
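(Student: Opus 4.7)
The plan is to provide, for each Allen modality $X \in \{{\sf A,L,B,E,D,O}\}$, an explicit translation of $\varphi X \psi$ into a TEL formula whose meaning according to Definition~\ref{telsemantics} coincides with the clause for $X$ in Definition~\ref{tcl-semantics}. Reading off that definition, every Allen relation specifies a finite number of consecutive half-open ``zones'' on which a specific Boolean combination of $\varphi$, $\psi$, $\neg\varphi$, $\neg\psi$ must hold, followed by a cofinite tail on which $\neg\varphi \land \neg\psi$ must hold. The three TEL ingredients needed are then immediately suggested: bounded boxes $\Box_\ell(\cdot)$ to enforce a Boolean combination over a zone of length $\ell$, the subscript shift $(\cdot)_t$ to relocate a formula to the start of the next zone, and the infinitary $\Box$ (defined in Section~\ref{tel} as $\Box_\infty \varphi := \varphi \land \forall u\,\varphi_u$) to pin down the cofinite tail.

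Concretely, I would define, for each modality, an existentially quantified TEL formula whose witness variables $x,y,z$ encode the positive offsets $u{-}s$, $v{-}u$, $w{-}v$ appearing in Definition~\ref{tcl-semantics}. For Meets and Before, the translations are
$$\varphi\, {\sf A}\,\psi \;:=\; \exists x\,\exists y\,\bigl[\Box_x(\varphi \land \neg\psi) \land (\Box_y(\psi \land \neg\varphi))_x \land (\Box(\neg\varphi \land \neg\psi))_{x+y}\bigr],$$
$$\varphi\, {\sf L}\,\psi \;:=\; \exists x\,\exists y\,\exists z\,\bigl[\Box_x(\varphi \land \neg\psi) \land (\Box_y(\neg\varphi \land \neg\psi))_x \land (\Box_z(\neg\varphi \land \psi))_{x+y} \land (\Box(\neg\varphi \land \neg\psi))_{x+y+z}\bigr].$$
The remaining four relations follow the same template, differing only in the Boolean zone-predicates dictated by Definition~\ref{tcl-semantics}: for Started-by use $(\varphi\land\psi, \varphi\land\neg\psi, \neg\varphi\land\neg\psi)$; for Finished-by use $(\varphi\land\neg\psi, \varphi\land\psi, \neg\varphi\land\neg\psi)$; for Contains use $(\varphi\land\neg\psi, \varphi\land\psi, \varphi\land\neg\psi, \neg\varphi\land\neg\psi)$; and for Overlaps use $(\varphi\land\neg\psi, \varphi\land\psi, \neg\varphi\land\psi, \neg\varphi\land\neg\psi)$.

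Correctness is verified by unfolding the TEL semantics of $\exists$, $\Box_t$, $(\cdot)_t$, and $\Box$: the existential witnesses correspond to the strictly positive offsets guaranteed by the non-degeneracy of $\mathbb{M}^+$, and the nested boxes and shifts carve out precisely the half-open zones required by Definition~\ref{tcl-semantics}, while the infinitary $\Box$ covers the cofinite tail. Since the direction from the TEL formula to the clause in Definition~\ref{tcl-semantics} is obtained by reading the same chain of equivalences in reverse, each translation is an equivalence rather than a mere implication. The main obstacle is bookkeeping rather than mathematical depth: one has to verify that the successive shift indices $x$, $x{+}y$, $x{+}y{+}z$ glue the zones together without overlap or gap, and that the $\Box$ in the final conjunct indeed starts exactly where the previous zone ends. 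Because every necessary construct is already a primitive of TEL, no additional semantic apparatus is required.
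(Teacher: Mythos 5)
Your proposal is correct and follows essentially the same route as the paper: the paper likewise translates each Allen operator into an existentially quantified conjunction of shifted bounded boxes over consecutive half-open zones, with an unbounded $\Box$ for the cofinite tail (packaged there in the abbreviation $\bm{\{}\,\cdots\,\bm{)}$ of Definition~\ref{convention} and Table~2), and your zone-predicates for all six operators match the paper's exactly. The correctness argument by unfolding the semantics of $\exists$, $\Box_t$, and the subscript shift is also the same as the paper's verification of the $\bm{\{}\varphi;\psi;\xi\bm{\}}$ template.
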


The proof of Theorem~\ref{thm2} amounts to representing
each of the formulas  $\varphi \, {\sf A}\, \psi$,
  $\varphi \, {\sf L}\, \psi$,   $\varphi \, {\sf B}\, \psi$,   $\varphi \, {\sf E}\, \psi$,   $\varphi \, {\sf D}\, \psi$,
    $\varphi \, {\sf O}\, \psi$ using the syntax of TEL while preserving their semantics.
We provide a general translation strategy as follows.

Suppose we are interested in the pattern that
$\varphi^1$ (we use superscripts to represent different formulas in order to reserve subscripts for time terms) holds in the entire interval $(s, b)$,
$\varphi^2$ holds in the entire interval $(b, c)$,
$\varphi^3$ holds in the entire  interval $(c, d)$,
and $\varphi^4$ holds in the entire  interval $(d, e)$, where
$b=s+a_1$, $c=s+a_1 + a_2$, $d=s+a_1 + a_2+a_3$, and $e=s+a_1 + a_2+a_3+a_4$.

This can be captured precisely using the following formula by carefully examining the semantic definition for TEL:
 \begin{equation}\label{box}
\exists x_1\exists x_2\exists x_3 \exists x_4[(\Box_{x_1}\varphi^1) \land (\Box_{x_2} \varphi^2 )_{x_1}\land
 (\Box_{x_3}\varphi^3)_{x_1+x_2})  \land  (\Box_{x_4}\varphi^4)_{x_1+x_2+x_3})]
  \end{equation}
  
By instantiating $\varphi^i$'s with different formulas, we can require different properties to hold in each of the 
consecutive intervals between $s,b,c,d,e$ to have different properties, whereby representing different relationships between
$\varphi$ and $\psi$ in Allens operators. To avoid notational clutter, we specify the following abbreviations.

\begin{definition}\label{convention}
In TEL we define the following formula templates for a sequence of TEL formulas $\{ \varphi^i \mid i\geq 1\}$
without containing any $x_i$ as free variables:
\[\begin{array}{l}
\bm{\{}\varphi^1 \bm{\}} := \exists x_1 (\Box_{x_1}\varphi^1) \\
\bm{\{} \varphi^1; \varphi^2 \bm{\}}  :=  \exists x_1\exists x_2 [ (\Box_{x_1}\varphi^1) \land (\Box_{x_2} \varphi^2 )_{x_1} ] \\
\bm{\{} \varphi^1; \varphi^2; \varphi^3 \bm{\}}:=
\exists x_1\exists x_2\exists x_3 [(\Box_{x_1}\varphi^1) \land (\Box_{x_2} \varphi^2 )_{x_1}\land
 (\Box_{x_3}\varphi^3)_{x_1+x_2}) ]  \\
\bm{\{} \varphi^1; \varphi^2; \varphi^3; \varphi^4 \bm{\}} := \mbox{\rm Formula~\ref{box} above} \\
\cdots
\end{array}\]
\end{definition}

We also introduce variants of this notation with open-ending, where the last occurrence of $\Box$ is 
unbounded on the right:

\[\begin{array}{l}
\bm{\{} \varphi^1 \bm{)}  := \Box \varphi^1 \\
\bm{\{} \varphi^1; \varphi^2 \bm{)} :=  \exists x_1[ ( \Box_{x_1}\varphi^1) \land ( \Box \varphi^2 )_{x_1} ] \\
\bm{\{} \varphi^1; \varphi^2; \varphi^3 \bm{)} :=
\exists x_1\exists x_2 [( \Box_{x_1}\varphi^1) \land ( \Box_{x_2} \varphi^2 )_{x_1}\land
 ( \Box\varphi^3)_{x_1+x_2}) ]  \\
\bm{\{}  \varphi^1;  \varphi^2;  \varphi^3; \varphi^4 \bm{)} :=\\
~~\exists x_1\exists x_2\exists x_3 [(\Box_{x_1}\varphi^1) \land (\Box_{x_2} \varphi^2 )_{x_1}\land
 (\Box_{x_3}\varphi^3)_{x_1+x_2})  \land  (\Box \varphi^4)_{x_1+x_2+x_3})]
\\
\cdots
\end{array}\]

As usual, naming of bound variables do not affect the semantics. So $x_i$'s can be replaced by any variable names.

For illustration, the formula $\bm{\{} \varphi; \psi; \xi\bm{\}}$ captures the pattern in Figure~\ref{tikz}.
To see why this is true, we have
\[\begin{array}{l}
 ({\cal E}, s)  \models  \bm{\{} \varphi; \psi; \xi\bm{\}} \\
 ~\mbox{\bf iff }\\
({\cal E}, s)  \models \exists x_1\exists x_2\exists x_3 [(\Box_{x_1}\varphi) \land (\Box_{x_2} \psi )_{x_1}\land
 (\Box_{x_3}\xi)_{x_1+x_2}) ] \\
   ~\mbox{\bf iff there exist}~a,b,c\in  \mathbb{M}^+ \\
({\cal E}[x_1:=a; x_2:=b;x_3:=c], s)  \models [(\Box_{x_1}\varphi) \land (\Box_{x_2} \psi )_{x_1}\land
 (\Box_{x_3}\xi)_{x_1+x_2}) ] \\
   ~\mbox{\bf iff there exist}~a,b,c\in  \mathbb{M}^+\\
 ({\cal E}[x_1:=a; x_2:=b;x_3:=c], s)  \models \Box_{x_1}\varphi     ~\mbox{\rm and}\\
 ({\cal E}[x_1:=a; x_2:=b;x_3:=c], s)  \models (\Box_{x_2} \psi )_{x_1} ~\mbox{\rm and}\\
 ({\cal E}[x_1:=a; x_2:=b;x_3:=c], s)  \models (\Box_{x_3}\xi)_{x_1+x_2} \\
   ~\mbox{\bf iff}\\
   \mbox{\rm  for all } ~t\in  \mathbb{M}^+ ~\mbox{\rm such that } s\leq t <s+a,
     ({\cal E}, t)  \models \varphi  ~\mbox{\rm and}\\
     ({\cal E}[x_1:=a; x_2:=b;x_3:=c], s+a)  \models \Box_{x_2} \psi  ~\mbox{\rm and}\\
          ({\cal E}[x_1:=a; x_2:=b;x_3:=c], s+a+b)  \models \Box_{x_3} \xi\\
             ~\mbox{\bf iff}\\
    \mbox{\rm  for all } ~t\in  \mathbb{M}^+ ~\mbox{\rm such that } s\leq t <s+a,     
     ({\cal E}, t)  \models \varphi  ~\mbox{\rm and}\\      
      \mbox{\rm  for all } ~t\in  \mathbb{M}^+ ~\mbox{\rm such that } s+a\leq t <s+a+b,     
     ({\cal E}, t)  \models \psi  ~\mbox{\rm and}\\    
         \mbox{\rm  for all } ~t \in  \mathbb{M}^+ ~\mbox{\rm such that } s+a+b \leq t   <s+a+b+c,~ 
     ({\cal E}, t)  \models \xi .\\       
\end{array}\]

Note that our notational convention (Definition~\ref{convention}) assumes that $\varphi, \psi, \xi$ do not contain
any $x_1,x_2,$ or $x_3$ as free variables. Therefore, when evaluating the formulas $\varphi, \psi, \xi$,
the environments ${\cal E}$ and
${\cal E}[x_1:=a; x_2:=b;x_3:=c]$ give the same result and we used a concise version in the proof line above.

 \begin{figure}
 \begin{tikzpicture}
    \draw[red, line width=1.5pt] (0,0) -- (4,0);
    \draw[blue, line width=1.5pt] (4,0) -- (8,0);
    \draw[green, line width=1.5pt] (8,0) -- (12,0);
        \draw[->,  line width=1.5pt] (12,0) -- (13.5,0);
    \foreach \t/\label/\delta/\timepoint/\needcolor in {0/{$s$}/\textcolor{red}{$\varphi$}/{$a$}/{red},  4/{$s+a$}/\textcolor{blue}{$\psi$}/{$b$}/{blue},  8/{$s+a+b$}/\textcolor{green}{$\xi$}/{$c$}/{green},  
    12/{$s+a+b+c$}/{}/{}}{
        \ifnum\t<12
            \draw[line width=1pt] (\t,0.2) -- (\t, -0.2) node[below] {\label};
        \fi
        \ifnum\t=12
             \draw[line width=1pt] (\t,0.2) -- (\t, -0.2) node[below] {\label};
        \fi
        \pgfmathsetmacro{\midpoint}{\t+1.5} 
        \node[above] at (\midpoint,0.3) {\delta}; 
        \coordinate (start) at (\t,-0.1);
        \coordinate (end) at (\t+4,-0.1);
        
        \ifnum\t<9
            \draw[{\needcolor}, line width=0.5pt,decorate,decoration={brace,amplitude=10pt,mirror}] (start) -- (end) node[midway,below,yshift=-10pt] {\timepoint};
        \fi 
    }
\end{tikzpicture}
\caption{Illustration of the pattern for $\bm{\{}  \varphi; \psi; \xi \bm{\}} $.
The red interval is $\varphi$-monochromatic, the blue $\pi$-monochromatic, and
the green $\xi$-monochromatic. All intervals are closed on the left end and open on the right end. 
}\label{tikz}
  \end{figure}
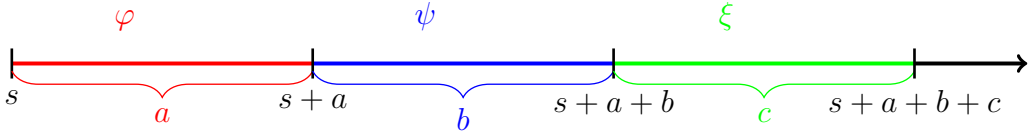
  
 With these validation and notional preparations, 
 we provide the following translation of TCL Allen formulas into TEL formulas.
 
\begin{center}
\begin{tabular}{| l | l |  l  | } \hline
 \multicolumn{3}{|c|}{{\bf Table 2.} TCL to TEL Translation Table}\\ \hline
{\bf Allen operator} & {\bf TCL} & {\bf TEL} ($\overline{\xi}$ stands for $\neg \xi$)\\ \hline
Meets & $\varphi \, {\sf A}\, \psi$ & $\bm{\{} \varphi\land \overline{\psi}; \psi\land \overline{\varphi};  \overline{\varphi}\land 
\overline{\psi}  \bm{)}$ \\ \hline
Before &  $\varphi \, {\sf L}\, \psi$ &  $\bm{\{}\varphi\land \overline{\psi}; \overline{\varphi}\land \overline{\psi}; 
\psi\land \overline{\varphi}; \overline{\varphi}\land \overline{\psi} \bm{)}$    \\ \hline
Started-by & $\varphi \, {\sf B}\, \psi$ &  $\bm{\{}\varphi\land \psi;\varphi\land \overline{\psi};  \overline{\varphi}\land \overline{\psi} \bm{)}$  \\ \hline
Finished-by   &  $\varphi \, {\sf E}\, \psi$ & $\bm{\{}\varphi\land \overline{\psi}; \varphi\land \psi;
 \overline{\varphi}\land \overline{\psi} \bm{)}$   \\ \hline
Contains   &    $\varphi \, {\sf D}\, \psi$ & $\bm{\{}\varphi\land \overline{\psi}; \varphi\land \psi; 
\varphi\land \overline{\psi}; \overline{\varphi}\land \overline{\psi} \bm{)}$ \\ \hline
Overlaps  & $\varphi \, {\sf O}\, \psi$& $\bm{\{}\varphi\land \overline{\psi}; \varphi\land \psi; 
\psi\land \overline{\varphi}; \overline{\varphi}\land \overline{\psi} \bm{)}$   \\ \hline
\end{tabular}
\end{center}

We can see from the semantic definitions that the TEL formulas in Table 1 captures precisely the same semantics for the 
corresponding Allen relations.

\section{Applications in Biomedicine: an Outlook}
\label{application}

Formal methods in general, and logic in particular,
 have been applied in biomedicine in two broad categories of scenarios. One is for  modeling and simulation in systems biology~\cite{aviv,pi,pepa,fisher,linearlogic}, and
the other is for modeling phenotypes of biological organisms~\cite{adlassnig2006temporal,DL,zhang2022temporal}. The work reported in this paper is motivated from the need to 
advance the scope of the latter.

In biology, phenotype represents the collection of observable traits of an organism, comprising morphology and physiology at the levels of the cell, the organ, and the body. 
Behavior and epigenetic profiles in response to environmental cues are also considered as a part of an organism’s phenotype. 
In biomedicine, the term ``phenotype'' is often used to refer to deviations from normal morphology, physiology, or behavior. This latter notion is routinely used in clinical and healthcare settings (Table 3).
\noindent
\begin{center}
\begin{tabular}{| p{3.5cm} | p{8.8cm} | } \hline
\multicolumn{2}{|c|}{{\bf Table 3.} Sample temporal/sequential phenotypes from} \\ 
 \multicolumn{2}{|c|}{epilepsy, cardiology, sleep, and genetics}\\ \hline
\multicolumn{1}{| c | }{\em Term}  & \multicolumn{1}{c |}{\em Description}\\ \hline
Generalized tonic-clonic seizure & A type of seizure with a generalized muscle stiffening tonic phase, followed by rhythmical jerking clonic phase.  \\ \hline
Long QT syndrome &  An electrocardiogram labels the heart’s electrical signals as five patterns using the letters P, Q, R, S and T. Waves Q through T show the electrical activity in the heart's lower chambers. Prolonged QT-interval, if undiagnosed and untreated, may lead to erratic heartbeats and death.   \\ \hline
Constitutive and alternative-splicing & Constitutive splicing is the process that mRNA is spliced identically, producing the same isoforms. Alternative splicing induces formation of different isoforms of cell surface receptors in cancer cell proliferation. \\ \hline
Non-REM sleep & Non-REM sleep has three stages, each lasting 5 to 15 minutes. During stage 3, the body repairs and regrows tissues, builds bone and muscle, and strengthens the immune system. \\ \hline
\end{tabular}
\end{center}

 As can be gleaned from Table 3, temporal patterns are a basic building block for phenotypes in biomedicine. A logic-based phenotyping framework can benefit the documentation, communication, and implementation of health records systems, their interfaces and applications with elevated precision for rigor and reproducibility. 
 It adds a layer with mathematical rigor in clinical and translational research based on observational data.
 All efforts databasing RWD (e.g., i2b2~\cite{i2b21,i2b22}, ACE~\cite{ace}, CALYPSO~\cite{calypso}, ELII~\cite{elii}) can enrich the query engine and query interfaces more formally and systematically guided by temporal-logic-based specifications as queries. 
 Data structures accelerating the algorithm for logic-based phenotyping and cohort discovery could
  be developed accordingly to elevate the rigor and reproducibility while taming the computational complexity.
  
The entire area of genotyping for health and for disease can be amendable for a temporal logic treatment, 
in a similar way neurophysiological data are treated~\cite{zhang2022temporal} 
but perhaps with different requirements and techniques. ``Temporal'' reasoning for genome sequences is a common practice but has so far not been connected or elevated to logical levels. Perhaps the only work hinting possibility in this direction is reported in~\cite{luo}, which employs Allen’s interval algebra for efficient implementation of a sequence alignment algorithm. 
This all represents opportunities for the development of formal frameworks for logic-based phenotyping.

\section{Discussion}
\label{discussion}

Given the extensive body of work and sustained interest in temporal logics, 
several remarks are in order to clarify the shared theoretical grounding and distinct features that differentiate TEL from existing logical frameworks, as well as open-ended research topics.

\subsection{Trade-off of expressive power between syntax and semantic structures} 
There is in general a set-up decision between the syntax and semantics of a logical framework.
The more rich and expressive the semantic structures allowed, the less need there is to enrich the syntactic constructs.
On the other hand, the simpler the semantic structure, the richer the syntactic logical constructs are required to attain similar overall expressive power.
Therefore, we consider the syntax and semantic structures together as a whole package for the expressive power of a logic. 
Minimizing both while still achieving similar expressive powers would be a worth-while goal for simplicity and elegance, as well as for applications.

We use several examples to illustrate this point. In HS-logic~\cite{hslogic}, 
 Allen's binary relations between intervals are expressed using unary modal operators. 
Its semantic structure models the possible corresponding binary relations, with the computational complexity and expressive power mostly afforded from the semantic structure side~\cite{hsdecidable,undecidable}. 
In HyperLTL~\cite{HyperLTL1, HyperLTL2, HyperLTL3}, 
 the expressive power comes from the extension of LTL by adding traces (paths) in the semantic structure, as well as by making
the traces quantifiable in the syntax. In freeze-LTL~\cite{datalogic1,datalogic2}, the overall expressive power comes by extending ($\omega$)-words with associated values, and 
building in logical constructs to deal with data values. 

\subsection{Minimalist nature of first-order quantifications on time-length in TEL}

In contrast, TEL uses the same semantic structures as those in LTL, formalized generally through a monoid $\mathbb{M}$ structure that covers discrete linear time and dense linear time.
The only addition in syntax is the introduction of time-length variables and its first-order quantifications.
Time-terms are explicitly and independently defined syntactic entities with addition as the only required operation, in the form of $s+t$. 
We can then use first-order quantifiers and modal operators by turning any TEL formula $\varphi$ into a meaningful monadic predicate  $\varphi_t$.
Because of the simple and straightforward manner of the syntax and semantics, TEL can be readily enriched or modified to other settings, such weighted, metric,  and interval logics.

\subsection{Expressiveness, decidability, complexity, and spatial reasoning}
Although we have established the undecidability of ${\rm TEL}_{\mathbb{N}^+}$, many theoretical questions remain.
For example, what is the overall expressive power of ${\rm TEL}_{\mathbb{N}^+}$ in terms of the language class defined by ${\rm TEL}_{\mathbb{N}^+}$-formulas?
It seems anticipatable that ${\rm TEL}_{\mathbb{N}^+}$ would be able to capture, in addition to $\omega$-regular languages~\cite{buchi}, other types of
$\omega$-languages constructed using the approach of various types of $\omega$-automata~\cite{oregular}.

What is the value and consequence of explicitly incorporating subtraction ($-$) or past into time-length terms ($s-t$; with positive results required)?
There are several studies that highlight the incorporation of ``past'' being a line between decidability and undecidability (e.g., ~\cite{past,worrell}), but that should not completely 
deter us from incorporating it as a construct in forming decidable fragments.

What are the decidable fragments of ${\rm TEL}_{\mathbb{N}^+}$? In general, decidability results can be achieved by automata-theoretical constructs~\cite{onetheorem,worrell}. Based on the
development in Section~\ref{expressiveness}, it seems that the negation-free, $\{\exists, \Box_t, \Box\} $ fragment of ${\rm TEL}_{\mathbb{N}^+}$ would be an interesting initial setting to study both the decidability and expressiveness properties~\cite{expressive}.

Computational complexity for decidability and for model-checking, etc~\cite{pspace,walkega2023computational,real}, 
would be topics naturally motivated by application and implementation needs. 

Additionally,  enrichment  or modification through weighted~\cite{droste}, metric~\cite{metric,timed},  interval logics~\cite{parametric,interval},
 the formulation of TEL to spatial setting~\cite{cohn,debaghian} and the interplay between temporal-spatial domains  a can be meaningfully explored.

\section{Conclusion}
\label{conclusion}

We introduced Temporal Ensemble Logic, a timed monadic, first-order modal logic for linear-time temporal reasoning with an intuitive and minimalistic set of temporal constructs.
We provided syntax and semantics for TEL and established basic semantic equivalences, with distinctions between dense and discrete settings. 
We proved the undecidability of the satisfiability of TEL for the positive integer domain, and discussed its expressive power through $\omega$-languages,
as well as offered application outlook in biomedicine.
The hope is work to inspire a line of concentrated work to develop a  foundation for  logic-based phenotyping in biomedicine. Clearly, much remains to be done. 

\section*{Acknowledgment}

We thank Anthony Cohn, Licong Cui, Manfred Droste, Yan Huang and  Xiaojin Li for feedback and
illuminating discussions. This work was supported by the National Institutes of Health grants
R01NS126690 and R01NS116287 of the United States. The views of the paper are those of the authors and do not reflect those of the funding agencies.

\end{document}